\keywords{Wireless sensor network, gossip protocol, (bi)simulation metric, algebraic theory}
\begin{document}

\title{Equational Reasonings in Wireless Network Gossip Protocols\rsuper*}
\titlecomment{{\lsuper*}An extended abstract appeared in Proc.\ 37th FORTE, LNCS 10321:139-155, Springer, 2017~\cite{LMT17}.}

\author[R.~Lanotte]{Ruggero Lanotte\rsuper{a}}
\address{\lsuper{a}Dipartimento di Scienza e Alta Tecnologia, Universit\`a degli Studi delll'Insubria, Como, Italy}
\email{ruggero.lanotte@uninsubria.it}
\email{simone.tini@uninsubria.it}

\author[M.~Merro]{Massimo Merro\rsuper{b}}
\address{\lsuper{b}Dipartimento di Informatica, Universit\`a degli Studi di Verona, Verona, Italy}
\email{massimo.merro@univr.it}

\author[S.~Tini]{Simone Tini\rsuper{a}} 
% \address{\lsuper{a}Dipartimento di Scienza e Alta Tecnologia,  Universit\`a degli Studi dell'Insubria, Como, Italy}
% \email{simone.tini@uninsubria.it}
 
\def\titlerunning{Weak simulation quasimetric in a gossip scenario}
\def\authorrunning{R. Lanotte,  M. Merro,  S. Tini}

\begin{abstract}
\emph{Gossip protocols} have been proposed as a robust and efficient method
for disseminating information throughout large-scale networks.
In this paper, we propose a compositional analysis technique to study formal probabilistic models of gossip protocols expressed 
in  a simple probabilistic timed process calculus for  \emph{wireless sensor networks}. We equip the calculus  with a  \emph{simulation theory}  to compare probabilistic protocols that have similar behaviour up to a certain
tolerance. The theory is used to prove a number of  algebraic laws which revealed to be very effective to estimate the performances of \emph{gossip networks\/}, with and without \emph{communication collisions\/}, and \emph{randomised gossip networks\/}.
Our simulation theory is an asymmetric variant of the \emph{weak bisimulation metric} that maintains most of the properties of the original definition. However, our asymmetric version is particularly suitable to reason on protocols in which the systems under consideration are not approximately equivalent, as in the case of gossip protocols. 
\end{abstract}

\maketitle

%%%%%%%%%%%%%%%%%%%%%% INTRO %%%%%%%%%%%%%%%%%%%%%%%%%%%%%%%%%%%%%%%
\section*{Introduction}
\label{sec:intro}

\emph{Wireless sensor networks} (WSNs) are (possibly large-scale) networks of sensor nodes deployed in strategic areas to gather data.
 Sensor nodes collaborate using wireless communications with an asymmetric many-to-one data transfer model. Typically, they send  sensed events or data, by a specific communication protocol, to a specific node called the sink node or base station, which collects the requested information.
WSNs are primarily designed for monitoring environments that humans cannot easily reach (\emph{e.g.}, motion, target tracking, fire detection, chemicals, temperature); they are also used as embedded systems (\emph{e.g.}, environmental monitoring, biomedical sensor engineering, smart homes) or mobile applications (\emph{e.g.}, when attached to robots or vehicles).
In wireless sensor networks, sensor nodes are usually battery-powered, and
the energy expenditure of sensors has to be wisely managed by their architectures and protocols to prolong the overall network lifetime. 
\emph{Energy conservation} is thus one of the major issues in sensor network communications.

\emph{Flooding} is a traditional robust algorithm that delivers data packets in a network from a source to a destination. In the context of  WSNs, flooding means that all nodes receiving a message propagate it to all  neighbours by broadcast. This causes unnecessary retransmissions, which increase the number of collisions and deprive sensors of valuable battery power. Therefore, flooding algorithms may not be suitable in the context of dense networks like wireless sensor networks.

\emph{Gossiping\/} is an algorithm that addresses some critical problems of flooding overhead. Basically, it is based on the repeated probabilistic exchange of information between two members~\cite{gossip}. The goal of gossip protocols is to reduce the number of retransmissions by making some of the nodes discard the message instead of forwarding it.
Gossip protocols exhibit both  \emph{nondeterministic} and \emph{probabilistic} behaviour.
Nondeterminism arises as they deal with distributed networks in which the activities of individual nodes occur nondeterministically. As to the probabilistic
behaviour, nodes  are required to forward packets with a pre-specified gossip probability $p_{\mathrm{gsp}}$. When a node receives a message, rather than immediately retransmitting it as in flooding, it relies on the probability $p_{\mathrm{gsp}}$ to determine whether or not to retransmit.  The main benefit is that when  $p_{\mathrm{gsp}}$ is sufficiently large, the entire network receives the broadcast message with very high probability, even though only a subset of nodes has forwarded the message.

 Most of the analyses of protocols for large-scale WSNs are usually based on discrete-event simulators (\emph{e.g.}, ns-2, Opnet and Glomosim).  However, different simulators often support different models of the MAC physical-layer yielding different results, even for simple systems (see, \emph{e.g.},
\cite{flooding-unreliable}).
 In principle, as noticed by Bakhashi et al.~\cite{Fokkink2007},
 due to their relatively simple structure, gossip protocols lend themselves very well to formal analysis, in order to predict their behaviour with high confidence. Formal analysis techniques
are supported by (semi-)automated tools.
For instance, \emph{probabilistic model checking} 
has been used  in the context of gossip protocols by Fehnker and Gao~\cite{ansgar2006} and Kwiatkowska et al.~\cite{Marta2008}
 to provide both an exhaustive search of all possible behaviours of the system, and exact, rather than approximate, quantitative results. 
 Of course, a trade-off inevitably exists.
Model checking suffers from the so-called state explosion problem,
meaning that the state space of a specified system grows exponentially with respect to its number of components. 
 On the other hand, simulation-based approaches are scalable to much larger and more complex systems, at the expense of exhaustiveness and numerical accuracy.

Among the formal techniques for the analysis of complex systems, 
\emph{behavioural semantics\/}, such as preorders and equivalences, 
provide formal instruments to compare the behaviour of 
\emph{probabilistic systems}~\cite{JYL01,Den15}. 
Preorders allow us to determine whether a system can mimic the stepwise behaviour of another system; whereas equivalences require a sort of mutual simulation between two systems. The most prominent examples are the  \emph{simulation preorder} and the \emph{bisimulation equivalence}~\cite{LS91,SL95}.  Since probability values usually originate from observations (statistical sampling) or from requirements (probabilistic specification), both preorders and equivalences are only partially satisfactory as they can only say whether a system can mimic another one. Any tiny variation of the probabilistic behaviour of a system 
 will break the preorder (or the equivalence) without any further information. 
In practice, many system implementations can only approximate the system specification; thus, the verification of such implementations requires  appropriate instruments to measure the quality of the approximation.
To this end, bisimulation-based \emph{metric semantics}~\cite{GJS90,DGJP04,BW05,DCPP06,AFS09} have been successfully employed to formalise the \emph{behavioural distance} between two systems.

\subsection*{Contribution.}
The \emph{goal} of this paper is to provide a compositional analysis technique to study probabilistic models of gossip protocols in the context of WSNs. We adopt a simple process calculus for wireless sensor networks proposed by Lanotte and Merro~\cite{LaMe11} and called \emph{Probabilistic Timed Calculus of Wireless Systems} (\cname{}). We then introduce the notion of \emph{weak simulation quasimetric} 
as the asymmetric counterpart of the \emph{weak bisimulation metric} \cite{DJGP02}, and the quantitative analogous of the \emph{weak simulation preorder}~\cite{BKHH02,BHK04}. We use the definition of weak simulation quasimetric to derive  a definition of \emph{weak simulation with tolerance} $p \in [0,1]$, written $\simtol{p}$, to compare probabilistic protocols; being $0$ and $1$  the minimum and the maximum distance,  respectively. 

The compositionality of a behavioural semantics with respect to the parallel operator is fundamental when reasoning on large-scale systems.
We prove that our weak simulation with tolerance matches one of the most restrictive compositional criteria for (bi)simulation metrics, 
namely \emph{non-expansiveness}~\cite{DJGP02,DGJP04}  (also known as 1-non-extensiveness~\cite{BBLM13b}).

We then rely on our simulation with tolerance  to develop an  \emph{algebraic theory} for \cname{} that allows us to compose the tolerances of different sub-networks with different behaviours. In particular,  for a gossip network $\mathrm{GSP}_{p_{\mathrm{gsp}}}$, which transmits with gossip probability
$p_{\mathrm{gsp}}$, we are able to estimate an \emph{upper bound} 
of the tolerance $p_{\mathrm{ok}}$ needed to simulate a 
non-probabilistic network $\mathrm{DONE}$
whose target nodes successfully receive the message:
\[
\mathrm{DONE}
 \q \sqsubseteq_{p_{\mathrm{ok}}} \;
\mathrm{GSP}_{\!p_{\mathrm{gsp}}}
    .
\]
To this end, we prove and apply a significant number of  algebraic laws, 
whose application can be \emph{mechanised}, to evaluate the performances
of gossip networks in terms of message delivery at destination. 
Our analysis takes into consideration \emph{communication collisions}, 
to determine the impact of collisions on the performance results, 
and investigate \emph{randomised gossip protocols}, in which messages may be broadcast in different instants of time according to some uniform distribution probability to mitigate the impact of communication collisions.

\subsection*{Outline.} In Section~\ref{sec:syntax}, we recall syntax and probabilistic operational semantics of \cname{}. 
 In Section~\ref{sec:simulation}, we provide the notions of weak simulation quasimetric and weak simulation with tolerance. In Section~\ref{sec:gossip-nocollisions},  we start our analysis on collision-free gossip protocols. In Section~\ref{sec:gossip-collisions}, we  extend our analysis to collision-prone gossip protocols.  In Section~\ref{sec:random}, we study randomised gossip protocols. 
In Section~\ref{sec:ConRelWor}, we draw conclusions and discuss related 
 and  future  work.

%%%%%%%%%%%%%%%%%%%%%%%%%%%%%%%%%%%%%%
%%%%                                                                                               %%%%
%%%%          A PROBABILISTIC TIMED PROCESS CALCULUS    %%%%
%%%%                                                                                                %%%%    
%%%%%%%%%%%%%%%%%%%%%%%%%%%%%%%%%%%%%%

\section{A probabilistic timed process calculus}
\label{sec:syntax}
 \begin{table}[t]
    \begin{math}
      \begin{array}{@{\hspace*{8mm}}rcl@{\hspace*{40mm}}l@{\hspace*{6mm}}rcl@{\hspace*{7mm}}l}
\multicolumn{4}{l}{\textit{Networks:}}\\
          M, N
          & \Bdf & \zero
          & \mbox{empty network} \\[1pt]
          & \Bor &
          {M_1} | {M_2}
          & \mbox{parallel composition} \\[1pt]
          & \Bor &
         \node \nu  n P
          & \mbox{node}\\[1pt]
       & \Bor &
         \dummyN
          & \mbox{stuck network}
         \\[3pt]
\multicolumn{4}{l}{\textit{Processes:}}
          \\
          P,Q
          & \Bdf &   \nil & \mbox{termination}\\[1pt]
& \Bor &   \bcast  {u}  C
          & \mbox{broadcast} \\[1pt]
          & \Bor &
            \rcvtime x {C} {D} & \mbox{receiver with timeout}\\[1pt]
           &\Bor &
           \tau.{C} & \mbox{internal  }\\[1pt]
          & \Bor &
          \delay \sigma C
          & \mbox{sleep} \\[1pt]
          & \Bor &
          X & \mbox{process variable}\\[1pt]
                 & \Bor &
           \fix X P
          & \mbox{recursion}\\[3pt]
\multicolumn{4}{l}{\textit{Probabilistic Choice:}}
          \\
          C,D
          & \Bdf & \bigoplus_{i\in I}p_i {:} P_i
         \end{array}
    \end{math}
\caption{Syntax}
\label{tab:syntax}
\end{table}

In Table~\ref{tab:syntax}, we provide the syntax of the  \emph{Probabilistic Timed Calculus of Wireless Systems}, \cname{}, in a two-level structure, a lower one for \emph{processes}, ranged over by letters $P$, $Q$, and $R$, and an upper one for \emph{networks\/}, ranged over by letters $M$, 
$N$,  and $O$.  We use letters $m,n, \ldots$ for logical names, Greek symbols $\mu,\nu,\nu_1,\ldots$ for \emph{sets of names}, $x,y,z$ for \emph{variables}, $u$ for \emph{values}, and $v$ and $w$ for \emph{closed values}, \emph{i.e.}, values that do not contain variables. 
Then, we use $p_i$ for probability weights, hence $p_i \in [0,1]$.

A network in \cname\ is a (possibly empty) collection of nodes (which represent devices) running in parallel and using a unique common radio channel to communicate with each other. Nodes are also unique; \emph{i.e.}, a node $n$ can occur in a network only once.  
All nodes are assumed to have the same transmission range (this is a quite common assumption in models for ad hoc networks~\cite{Guide-adhoc}).
The communication paradigm is \emph{local broadcast\/}; only nodes located in the range of the transmitter may receive data.
We write $\node \nu  n P$ for a node named $n$ (the device network address) executing the sequential process $P$. 
The set $\nu$ contains (the names of) the  neighbours of  $n$.
Said in other words, $\nu$ contains all nodes laying in the transmission cell of $n$  (except $n$).  
In this manner, we model the network topology.
Notice that the network topology could have been represented using some kind of restriction operator \`a la CCS over node names; we preferred our notation to keep at hand the neighbours of a node.
Our wireless networks have a fixed topology.  Moreover, nodes cannot be created or destroyed. Finally, we write $\dummyN$ to denote a deadlocked network which  prevents the execution of parallel components. This is a fictitious network which is introduced for technical convenience in the definition of our metrics 
(Definition~\ref{def:simulation_quasimetric}). 

Processes are sequential and live inside the nodes. The symbol $\nil$ denotes terminated processes. The sender process $\bcast v C$ broadcasts the value $v$, the continuation being $C$. 
The process $\rcvtime x {C} {D}$ denotes a receiver with timeout.
Intuitively, this process either receives a value $v$ from a neighbour node in the current time interval, and then continues as $C$ in which the variable $x$ is instantiated with $v$, or it idles for one time unit and then continues as $D$, if there are no senders in the neighbourhood. 
The process $ \tau.C$ performs an internal action and then continues as $C$.
The process $\sigma.C$ models sleeping for one time unit.
In processes of the form $\sigma.D$  and $\rcvtime x C D$ the occurrence of $D$ is said to be \emph{time-guarded}. The process $\fix X P$ denotes \emph{time-guarded recursion\/}, as all occurrences of the process variable $X$ may only occur time-guarded in $P$.

The construct $\bigoplus_{i\in I}p_i {:} P_i$ denotes \emph{probabilistic choice}, where $I$ is a  \emph{finite\/}, \emph{non-empty} set of indexes, and $p_i \in (0, 1]$ denotes the probability to execute the process $P_i$, with  $\sum_{i \in I}p_i=1$. Notice that, as  in~\cite{Dengetal2008}, in order to simplify the operational semantics,  probabilistic choices occur always underneath prefixing.

In processes of the form $\rcvtime x C {D}$ the variable $x$ is bound in $C$.
Similarly, in process $\fix X P$ the process variable $X$ is bound in $P$.
This gives rise to the standard notions of \emph{free (process) variables} and \emph{bound (process) variables} and \emph{$\alpha$-conversion}.
We identify processes and networks up to $\alpha$-conversion. 
A process 
is said to be \emph{closed} if it does not contain free (process) variables.
We always work with closed processes: 
 the absence of free variables is trivially maintained at run-time.
We write ${\subst v x}P$ (\emph{resp.}, ${\subst v x}C$) for the substitution of the variable $x$ with the value $v$ in the process $P$ (\emph{resp.}, probabilistic choice $C$).
Similarly, we write ${\subst P X}Q$ (\emph{resp.}, ${\subst P X}C$)   for the substitution of the process variable $X$ with the process $P$ in $Q$ (\emph{resp.}, in $C$).

We report some \emph{notational conventions\/}. With an abuse of notation, we will write \emph{$\rcv x C$}  as an abbreviation for  $\fix X {\rcvtime x C {(1{\colon}X)}}$, where the process
variable $X$ does  not occur in $C$. $\prod_{i \in I}M_i$ denotes the parallel composition of all $M_i$, for $i \in I$. We identify $\prod_{i \in I}M_i$ with $ \zero$, if $I =\emptyset$.
We  write $\pchoice p {P_1} {P_2}$   for the probabilistic  process $p{\colon}P_1 \oplus (1{-}p){\colon}P_2$. We identify  $1 {\colon} P$ with $P$. We write $\bcastzero v$ as an abbreviation for $\bcast v 1{\colon}\nil$. For $k>0$ we write $\sigma^{k}.P$ as an abbreviation for $\sigma \ldots \sigma.P$, where prefix $\sigma$ appears $k$ times.  Given a network $M$, $\nds M$ returns the names of $M$. 
If $m \in \nds{M}$, the function $\cell {m,M}$ returns the set of the neighbours of $m$ in $M$. 
Thus, for $M = M_1 | \node \nu  m {P} | M_2$ it holds that $\cell {m,M}=\nu$.
We write  $\cell{M}$ for $\bigcup_{m \in \nds{M}}\cell{m, M}$.

\begin{defi}[Structural congruence]\label{def:struc-cong}
\emph{Structural congruence\/}, written $\equiv$, is defined as the smallest equivalence relation over networks, preserved by parallel composition, which is a commutative monoid with respect to parallel composition with neutral element $\zero$, and for which $\node \nu  n {\fix X P} \equiv \node \nu  n {{\subst {\fix X P} X}P}$.
\end{defi}

The syntax presented in Table~\ref{tab:syntax} allows us to derive networks which are somehow ill-formed. 
With the following definition we rule out networks: 
\begin{inparaenum}[(i)]
\item
where nodes can be neighbours of themselves;  
\item 
 with different nodes with the same name;  
\item
 with non-symmetric neighbouring relations. 
\end{inparaenum} 
Finally, in order to guarantee clock synchronisation among nodes, we
require network connectivity.
\begin{defi}[Well-formedness]
\label{def:well-formedness} 
A network $M$ is said to be \emph{well-formed} if
\begin{itemize}
\item whenever $M \equiv M_1 \mid \node {\nu}  {m} {P_1}$ it holds that $m \not\in \nu$;
\item whenever $M \equiv M_1 \mid \node {\nu_1}  {m_1} {P_1} \mid \node {\nu_2} {m_2} {P_2}$ it holds that $m_1 \neq m_2$;
\item whenever $M \equiv N \mid \node {\nu_1}  {m_1} {P_1} \mid \node {\nu_2}  {m_2} {P_2}$ we have $m_1 \in \nu_2$ iff $m_2 \in \nu_1$;
\item for all $m,n \in \nds{M}$ there are $m_1, \ldots , m_k \in \nds{M}$, such that  $m = m_1$, $n = m_k$, and $m_i \in \cell{{m_{i{+}1}},M}$ for $1 \leq i \leq k{-}1$.
\end{itemize}
\end{defi}
Henceforth, we will always work with well-formed networks.

\subsection{Probabilistic labelled transition semantics}
\label{sec:lts}
\label{sec:operational_semantics}
Along the lines of \cite{Dengetal2008},
 we propose an \emph{operational semantics\/} for  \cname{} associating with each network a graph-like structure representing its possible evolutions:
we use a generalisation of labelled transition systems that includes probabilities. Below, we report the necessary mathematical machinery.
\begin{defi} 
A (discrete) \emph{probability sub-distribution} over a set $S$ is a function $\Delta \colon S \to [0,1]$ with  $\sum_{s \in S}\Delta(s) \in (0 , 1]$.
We denote $\sum_{s \in S}\Delta(s)$ by $\size{\Delta}$.
The \emph{support} of a probability sub-distribution $\Delta$ is given by $\lceil \Delta \rceil = \{ s \in S : \Delta(s) > 0 \}$.
We write ${\mathcal D}_{\mathrm{sub}}(S)$, ranged over by $\Delta$, $\Theta$, $\Phi$, for the set of all finite-support probability sub-distributions over $S$.
A probability sub-distribution $\Delta \in {\mathcal D}_{\mathrm{sub}}(S)$ is said to be a \emph{probability distribution} if $\size{\Delta} =1$.
With ${\mathcal D}(S)$ we denote the set of all probability distributions over $S$ with finite support.
For any $s \in S$, the \emph{point (Dirac) distribution at $s$\/}, denoted $\overline{s}$, assigns probability $1$ to $s$ and $0$ to all others elements of $S$, so that $\lceil \overline{s} \rceil = \{ s \}$. 
\end{defi}

Let $I$ be a finite index such that (i) $\Delta_i$ is a sub-distribution in ${\mathcal D}_{\mathrm{sub}}(S)$ for each $i \in I$, and (ii) 
 $p_i \geq 0$ are probabilities such that $\sum_{i\in I}p_i \in (0,1]$. 
Then, the probability sub-distribution $\sum_{i \in I}p_i \cdot \Delta_i  \in {\mathcal D}_{\mathrm{sub}}(S)$ is defined as
$(\sum_{i \in I}p_i  \cdot \Delta_i)(s) \, \deff \, \sum_{i\in I}p_i \cdot \Delta_i(s)$, 
for all $s \in S$. We  write a sub-distribution as $p_1 \cdot \Delta_1 + \ldots + p_n \cdot 
\Delta_n$ when the index set $I$ is $\{ 1, \ldots , n \}$.  In the following,  we will often  write  $\sum_{i \in I}p_i  \Delta_i$ instead of $\sum_{i \in I}p_i  \cdot \Delta_i$.
Definition~\ref{def:struc-cong} and Definition~\ref{def:well-formedness}  generalise to sub-distributions in ${\mathcal D}_{\mathrm{sub}}(\cname)$. 
Given two probability sub-distributions $\Delta$ and $\Theta$, we write $\Delta \equiv \Theta$ if $\Delta([M]_{\equiv})=\Theta([M]_{\equiv})$ for all equivalence classes $[M]_{\equiv} \subseteq \cname$ of $\equiv$. 
Moreover, a probability sub-distribution $\Delta \in {\mathcal D}_{\mathrm{sub}}(\cname)$ is said to be well-formed if its support contains only well-formed networks.

We follow the standard probabilistic generalisation of labelled transition systems~\cite{SegalaPhD95}.
\begin{defi}[Probabilistic LTS]
A \emph{probabilistic labelled transition system} 
(pLTS) is a triple $\langle S, {\mathcal {L}}, \rightarrow \rangle$ where 
\begin{inparaenum}[(i)]
\item 
 $S$ is a set of states; 
\item
 ${\mathcal {L}}$ is a set of transition labels; 
\item
 $\rightarrow$ is a labelled transition relation contained in $S \times {\mathcal L} \times \mathcal{D}(S)$.
\end{inparaenum}
\end{defi}
The operational semantics of \cname\ is given by a particular pLTS
$\langle \cname, {\mathcal L}, \rightarrow \rangle$, 
where
${\mathcal L} = \{ \sndto m v \mu,$ $\rcva m v,$ $\tau, \sigma \}$
contains the labels to denote broadcasting, reception, internal actions and time passing, respectively. The definition of the relations $\transSim{\lambda}$, for $\lambda \in {\mathcal L}$, is given by the SOS rules in Table~\ref{tab:net}. 
Some of these rules use an obvious notation for distributing parallel composition over a  sub-distribution: $({\Delta} \mid {\Theta})(M) = \Delta(M_1)\cdot\Theta(M_2)$ if  $M = {M_1} | {M_2}$; $({\Delta} \mid {\Theta})(M) = \zero$ 
otherwise.

Furthermore, the definition of the labelled transition relation
 relies on a semantic 
interpretation of (nodes containing) probabilistic processes 
in terms of probability distributions.

\begin{defi}
For any probabilistic choice $\bigoplus_{i \in I} p_i {\colon} P_i$ 
over a finite index set $I$, 
we write $\sem {\node \mu n {\bigoplus_{i \in I} p_i {\colon} P_i}}$ 
to denote the probability distribution
$\sum_{i \in I}p_i\cdot \overline{\node \mu n {P_i}}$. 
\end{defi}

\begin{table}[t] 
       \begin{displaymath}
\begin{array}{l@{\hspace*{4mm}}l}

\Txiom{Snd}{-}{\node{\nu}{m}{\bcast v C} \transSim{m!v \vartriangleright \nu}
                                         \sem{\node{\nu}{m}{C}}} & 
\Txiom{Rcv}{ m \in \nu}
                                 {\node{\nu}{n}{\rcvtime x C {D}} \transSim{m?v}
                                  \sem{\node{\nu}{n}{{\subst v x}C}}}
\\[15pt]
\Txiom{Rcv-$\zero$}{-}{\zero \transSim{\rcva m v} \overline{\zero}}
& 
\Txiom{RcvEnb}{\neg (m \in \nu  \: \wedge \: \rcvrs{P})\quad \wedge \quad m\neq n}
{\node \nu n P \transSim{\rcva m v} \overline{\node \nu n P}}
\\[15pt]
\Txiom{RcvPar}{M \transSim{m?v} \Delta  \quad  \quad N \transSim{m?v} \Theta}
                                        {M \mid N \transSim{m?v} \Delta | \Theta}
& 
\Txiom{Bcast}{M \transSim{m!v \vartriangleright \nu} \Delta  \quad N \transSim{m?v} \Theta
 \quad {\scriptstyle {\mu {:=} \nu \setminus \nds{N}}}}
                                 {M \mid N \transSim{m!v \vartriangleright \mu}
 \Delta | \Theta}
\\[15pt]
\Txiom{Tau}{-}{{\node \nu m  {\tau.C}  } \transSim{\tau} \sem{\node \nu m {C}}}
& 
\Txiom{TauPar}{M \transSim{\tau} \Delta \quad N \not\equiv \dummyN | N' }{M|N \transSim{\tau} \Delta | \overline{N}}
\\[15pt]
\Txiom{$\sigma$-$\zero$}{-}{\zero \transSim{\sigma} \overline{\zero}}
& 
\Txiom{Timeout}{-}
                                {\node{\nu}{n}{ {\rcvtime x C {D}}} \transSim{\sigma}
                                 \sem{\node{\nu}{n}{D}}}
\\[15pt]
\Txiom{$\sigma$-nil}{-}
{\node \nu  n \nil \transSim{\sigma}
\overline{\node \nu  n \nil}}
& 
\Txiom{Sleep}{-}{\node{\nu}{n}{\sigma.C} \transSim{\sigma}
                                          \sem{\node{\nu}{n}{C}}}
\\[15pt]
\Txiom{$\sigma$-Par}{M \transSim{\sigma} \Delta  \quad  \quad N \transSim{\sigma} \Theta}
                                     {M|N \transSim{\sigma} \Delta | \Theta}
& 
\Txiom{Rec}{\node \nu  n {{\subst {\fix X P} X}P} \transSim{\lambda}
                   \Delta }
                  {\node \nu  n {\fix X P} \transSim{\lambda}
                   \Delta }
\end{array}
     \end{displaymath}
    \caption{Probabilistic Labelled Transition System}
    \label{tab:net}
  \end{table}
Let us comment on the rules of Table~\ref{tab:net}. 
In rule~\rulename{Snd} a node $m$ broadcasts a message $v$ to its neighbours $\nu$, the continuation being the probability distribution associated to 
$\node \nu m C$.  In the label $\sndto m v \nu$ the set $\nu$ denotes the neighbours of $m$: all nodes in $\nu$ that are currently listening will receive the message $v$.
In rule~\rulename{Rcv} a node $n$ gets a message $v$ from a neighbour node $m$,  
the continuation being the probability distribution associated to $\node \nu n {{\subst v x}C}$. 
If no message is received in the current time interval then the node $n$ will 
continue according to  $D$, as specified in rule~\rulename{Timeout}. 
Rules \rulename{Rcv-\zero} and \rulename{RcvEnb} serve to model reception enabling for synchronisation purposes. 
For instance, rule \rulename{RcvEnb} regards nodes $n$ which are not involved in transmissions originating from $m$. 
This may happen either because the two nodes are out of range (\emph{i.e.}, $m \not\in \nu$) or because $n$ is not willing to receive ($\rcvrs P$ is a boolean predicate that returns true if $\node \nu n P \equiv \node \nu n {\rcvtime x C D}$, for some $x$, $C$, $D$).
In both cases,  node $n$ is not affected by the transmission.
In rule~\rulename{RcvPar} we model the composition of two networks receiving the same message from the same transmitter.
Rule~\rulename{Bcast} models the propagation of messages on the broadcast channel.
Note that  we lose track of those transmitter's neighbours that are in $N$. 
Rule \rulename{Tau} models internal computations in a single node. 
The rule \rulename{TauPar} propagates internal computations on parallel components different from the $\dummyN$ network.
Rules \rulename{$\sigma$-nil} and \rulename{$\sigma$-\zero} are straightforward as both terms $\zero$ and $\node \nu n \nil$ do not prevent time-passing.
Rule \rulename{Sleep} models sleeping for one time unit.
Rule \rulename{$\sigma$-Par} models time synchronisation between parallel components (we recall that  receiver processes can timeout only if there are no senders in their neighbourhood). 
Rule \rulename{Rec} is standard.
For rules \rulename{Bcast} and \rulename{TauPar}  we assume their symmetric counterparts. Finally, note that the 
semantics of the network $\dummyN$ is different from that of $\zero$:
$\dummyN$ does not perform any action and it prevents 
the evolution of any parallel component.

\subsubsection{Extensional labelled transition semantics}
Our focus is on weak similarities, which abstract away non-observable actions, 
\emph{i.e.}, those actions that cannot be detected by a parallel network. 
The adjective \emph{extensional} is used to stress that here we focus on  
 those activities that  require a contribution of the environment.
To this end,  we extend  Table~\ref{tab:net} by the following two rules:
\[
\Txiom
{ShhSnd}
{M \transSim{\sndto m {v} \emptyset} \Delta }
{M \transSim{\tau} \Delta}
\Q\Q\Q\Q\Q
\Txiom{ObsSnd}
{M \transSim{\sndto m {v} \nu } \Delta \Q \nu \neq \emptyset }
{{M} \transSim{\sndat v \nu} {\Delta}}
\]
Rule \rulename{ShhSnd} models transmissions that cannot be observed because there is no potential receiver outside the network $M$.  
Rule \rulename{ObsSnd} models transmissions that may be observed by those nodes of the environment contained in $\nu$.
Notice that the name of the transmitter is removed from the label.
This is motivated by the fact that receiver nodes do not have  a direct manner to observe the identity of the transmitter. 
On the other hand, a network $M$ performing the action $\rcva m  v$  can be observed by an external node $m$ which  transmits the value $v$ to an appropriate set of nodes in $M$. 
Notice that the action  $\sndat v \nu$ does not propagate over parallel components (there is no rule for that). 
As a consequence,  the rule~\rulename{ObsSnd} can only be applied to the whole network, never in a sub-network.

In the rest of the paper, the metavariable $\alpha$ will range over the following four kinds of actions: $\sndat v \nu$, $\rcva m v$, $\sigma$, $\tau$. 
They denote anonymous broadcast to specific nodes, message reception, time passing, and internal activities, respectively. 

Finally, having defined the labelled transitions that can be performed by
a network, we can easily concatenate these transitions to define
the possible computation traces of a network. 
A \emph{computation trace} for a network $M$
 is a sequence of steps of the form
$M_1  \transSim{\alpha_1} \dots \transSim{\alpha_{n-1}} M_n$ where for any $i$, with $1 \leq i \leq n-1$, we have $M_i \transSim{\alpha_i} \Delta_{i+1}$   such that $M_{i+1}  \in \lceil \Delta_{i+1} \rceil$.

\subsection{Time properties}
\label{sec:properties}
The calculus \cname{} enjoys a number of desirable time properties~\cite{LaMe11} that will be very useful for our purposes.

Proposition~\ref{prop:timedet} formalises the deterministic nature of time passing: a network can reach at most one new distribution by executing the action $\sigma$.
\begin{prop}[Time Determinism]\label{prop:timedet}
Let $M$ be a well-formed network.
If $M \transSim{\sigma} \Delta$ and $M \transSim{\sigma} \Theta$ then $\Delta$ and $\Theta$ are the same.
\end{prop}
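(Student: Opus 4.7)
The plan is to proceed by induction on the derivation of $M \transSim{\sigma} \Delta$, exploiting the observation that the $\sigma$-rules in Table~\ref{tab:net} (namely $\sigma$-$\zero$, $\sigma$-nil, Timeout, Sleep, $\sigma$-Par and Rec) have pairwise disjoint conclusion patterns, distinguished by the syntactic head of $M$. Consequently, any other derivation of $M \transSim{\sigma} \Theta$ must conclude with the same rule, and it suffices to argue that, rule by rule, the target sub-distribution is uniquely determined by $M$.

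For the base cases I inspect the syntactic shape of $M$ up to $\equiv$. If $M \equiv \zero$, only $\sigma$-$\zero$ fires, yielding $\Delta = \Theta = \overline{\zero}$. If $M \equiv \dummyN$, no $\sigma$-rule applies and the hypothesis is vacuously satisfied. For $M \equiv \node{\nu}{n}{P}$, I case-split on $P$: when $P \in \{\bcast{v}{C},\,\tau.C\}$ no rule fires; when $P \in \{\nil,\,\rcvtime{x}{C}{D},\,\sigma.C\}$ exactly one of $\sigma$-nil, Timeout, Sleep applies, and in each case the target is determined syntactically from $P$ (through the semantic interpretation $\sem{\cdot}$, which is a function of the node). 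The residual subcase $P = \fix{X}{P'}$ is handled by Rec, which is the only rule whose conclusion matches a fixpoint; time-guardedness of recursion guarantees that finitely many applications of Rec reach a non-recursive head, where the previous subcases apply, so the IH is well-founded.

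For the inductive step $M \equiv M_1 \mid M_2$ the only rule whose conclusion matches a parallel composition is $\sigma$-Par, so both derivations must have premises $M_1 \transSim{\sigma} \Delta_1$, $M_2 \transSim{\sigma} \Delta_2$ (respectively $\Theta_1$, $\Theta_2$) with $\Delta = \Delta_1 \mid \Delta_2$ and $\Theta = \Theta_1 \mid \Theta_2$. The inductive hypothesis yields $\Delta_i = \Theta_i$ for $i \in \{1,2\}$, whence $\Delta = \Theta$ by the pointwise definition of the sub-distribution $\Delta_1 \mid \Delta_2$.

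The step I expect to be most delicate is the interaction with structural congruence: taking transitions modulo $\equiv$ could, in principle, expose genuinely different rules on different representatives of $M$. This is dispelled by noticing that $\equiv$ only reassociates and commutes $\mid$ (which leaves the $\sigma$-Par decomposition invariant once the sub-distribution operator $\mid$ is itself shown associative and commutative on ${\mathcal D}_{\mathrm{sub}}(\cname)$) and unfolds $\fix{X}{P}$, which is precisely the content of Rec. Hence $\equiv$ introduces no genuine nondeterminism, and the induction concludes that $\Delta$ and $\Theta$ coincide (up to $\equiv$ on sub-distributions).
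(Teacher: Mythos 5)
Your argument is correct, but note that this paper does not actually prove Proposition~\ref{prop:timedet}: the time properties of Section~\ref{sec:properties} are imported from Lanotte and Merro~\cite{LaMe11}, so there is no in-paper proof to match; what you give is the standard rule-induction argument one would expect there. Two small remarks. First, the well-formedness hypothesis is not preserved by splitting a parallel composition (a component of a connected network need not be connected), so the inductive hypothesis in your $\sigma$-Par case cannot literally be the proposition as stated; since your argument never uses well-formedness, the clean fix is simply to prove the statement for arbitrary networks, which is what your proof implicitly does. Second, the appeal to time-guardedness in the \rulename{Rec} case is unnecessary under induction on the derivation: any competing derivation of $\node{\nu}{n}{\fix X P}\transSim{\sigma}\Theta$ must also end with \rulename{Rec}, and the inductive hypothesis applied to the premise $\node{\nu}{n}{{\subst{\fix X P}{X}}P}\transSim{\sigma}\Delta$ closes the case directly; guardedness matters for existence/finiteness of transitions (Proposition~\ref{prop:well-timedness}), not for uniqueness. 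Your worry about structural congruence is likewise moot, since the rules of Table~\ref{tab:net} are purely syntactic (no closure under $\equiv$), but your resolution of it is harmless.
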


The maximal progress property says that sender nodes transmit immediately. 
Said in other words,  the passage of time cannot block transmissions.
\begin{prop}[Maximal Progress]
\label{prop:MaxProg}
Let $M$ be a well-formed network.
If  either $M \equiv \node \nu  m {! \langle v \rangle .C} | N$
or $M \equiv \node \nu  m {\tau.C} | N$
then $M \transSim{\sigma} \Delta$ for no distribution $\Delta$.
\end{prop}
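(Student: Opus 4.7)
The plan is to argue by contradiction on the shape of the SOS derivation. Assume a network $M$ satisfying the hypothesis of the proposition does admit some $\sigma$-transition $M \transSim{\sigma} \Delta$. The whole idea is to scan the rules of Table~\ref{tab:net} that can produce an action labelled $\sigma$ and show that none of them can be applied when $M$ has a parallel component of the form $\node \nu m {\bcast v C}$ or $\node \nu m {\tau.C}$.

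First I would perform a structural analysis of $M$. Since we work up to $\equiv$ and the only $\sigma$-rule that combines parallel components is \rulename{$\sigma$-Par}, any $\sigma$-transition of a parallel composition $M_1 \mid M_2$ forces \emph{both} $M_1$ and $M_2$ to exhibit a $\sigma$-move. Hence it suffices to reduce the claim to the single-node case: I would show by a straightforward induction on the derivation that if $M \equiv M' \mid \node \nu m R$ with $R \in \{ \bcast v C, \tau.C \}$ and $M \transSim{\sigma} \Delta$, then necessarily $\node \nu m R \transSim{\sigma} \Delta_m$ for some distribution $\Delta_m$.

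Second I would rule out this last possibility by inspection of the rules. The $\sigma$-producing rules for nodes are exactly \rulename{$\sigma$-nil}, \rulename{Timeout}, \rulename{Sleep}, and \rulename{Rec}. The first three demand that the process residing in the node is, respectively, $\nil$, a receiver-with-timeout $\rcvtime x C D$, or a sleep prefix $\sigma.C$; none of these syntactic shapes matches $\bcast v C$ or $\tau.C$. The rule \rulename{$\sigma$-nil}, \rulename{$\sigma$-$\zero$} and the parallel rule do not apply to a single non-$\zero$ node either.

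The only delicate case is \rulename{Rec}, because structural congruence allows one to rewrite $\fix X P$ by its unfolding before a rule is applied, so in principle a recursion could, after unfolding, expose a $\sigma$-guarded continuation. Here I would use the time-guardedness condition mentioned in Section~\ref{sec:syntax}: every occurrence of the recursion variable $X$ in $\fix X P$ is time-guarded, so unfolding $\fix X P$ once yields a process whose top-level constructor is determined by $P$ itself and is not affected by further unfoldings at the top. Since we have assumed the top-level process is already in the form $\bcast v C$ or $\tau.C$ (up to $\equiv$), the rule \rulename{Rec} cannot be used to rewrite the top-level form into one of the $\sigma$-enabling shapes. This case is the one I expect to require the most care in the write-up, because it is where the interplay between structural congruence, recursion, and the syntactic restrictions on processes must be invoked explicitly; the remaining cases are pure inspection of the SOS rules.
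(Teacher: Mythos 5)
Your argument is correct, and it is the standard one: the paper itself does not prove this proposition (it is inherited from Lanotte and Merro~\cite{LaMe11}), and the expected proof is exactly your rule inspection combined with the observation that \rulename{$\sigma$-Par} is the only rule deriving a $\sigma$-move for a parallel composition, so the offending node itself would have to move. Two small remarks on the write-up: for the \rulename{Rec} case, time-guardedness is not really the operative fact --- what you need is only that \rulename{Rec} defers to the unfolding, and the hypothesis (stated up to $\equiv$) says that after finitely many unfoldings the node's process has top-level constructor $\bcast v {}$ or $\tau$, for which the only applicable rules are \rulename{Snd} and \rulename{Tau}, neither producing $\sigma$; a short induction on the number of unfoldings closes this case. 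You also implicitly use that transitions are preserved by structural congruence (since the hypothesis is only up to $\equiv$); this is standard but worth stating explicitly in a full write-up.
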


Patience guarantees that a process will wait indefinitely until it can communicate~\cite{HennessyR95}.
In our setting, this means that if neither transmissions nor 
internal actions can fire then it must be possible to execute a $\sigma$-action to let time pass.

\begin{prop}[Patience] \label{prop:pat}
Let $M = \prod_{i\in I} {\node {\nu_i}  {m_i} {P_i}}$ be a well-formed network such that for all $i \in I$ it holds that ${P_i} \neq {\bcast v {C}}$
and $P_i \neq \tau.C$,
then there is a
distribution $\Delta$ s.t.\ $M \transSim{\sigma} \Delta$.
\end{prop}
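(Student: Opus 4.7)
The plan is a nested induction: an outer induction on the number of parallel nodes in $M$, and, for a single node, an inner case analysis on the head constructor of its process. In each case the aim is to apply the appropriate time-passing rule from Table~\ref{tab:net}.

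In the base case of the outer induction, $I = \emptyset$ identifies $M$ with $\zero$, and rule $\rulename{$\sigma$-\zero}$ immediately gives $\zero \transSim{\sigma} \overline{\zero}$. In the step case, fix some $j \in I$ and decompose $M \equiv \node{\nu_j}{m_j}{P_j} \mid M'$, where $M' = \prod_{i \in I \setminus \{j\}} \node{\nu_i}{m_i}{P_i}$ inherits well-formedness and the hypothesis on its processes. The inductive hypothesis provides $M' \transSim{\sigma} \Delta'$ for some $\Delta'$. Once I also exhibit $\node{\nu_j}{m_j}{P_j} \transSim{\sigma} \Delta''$, rule $\rulename{$\sigma$-Par}$ yields $M \transSim{\sigma} \Delta'' \mid \Delta'$, closing the step.

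The task thus reduces to the single-node case $\node{\nu}{m}{P}$ with $P$ closed and distinct from $\bcast{v}{C}$ and $\tau.C$. Here I do case analysis on the head constructor of $P$: if $P = \nil$, apply $\rulename{$\sigma$-nil}$; if $P = \rcvtime{x}{C}{D}$, apply $\rulename{Timeout}$; if $P = \sigma.C$, apply $\rulename{Sleep}$. The process-variable case cannot arise since $P$ is closed. In the remaining case $P = \fix{X}{Q}$, rule $\rulename{Rec}$ reduces the goal to finding a $\sigma$-transition for $\node{\nu}{m}{{\subst{\fix X Q}{X}}Q}$, after which I recurse on the head of the unfolded process.

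The main obstacle is making the recursive case terminate cleanly. The hypothesis on the $P_i$ has to be read up to $\equiv$, since otherwise a term such as $\fix{X}{\bcast{v}{1{\colon}\nil}}$ would satisfy the literal side condition yet unfold to a broadcast, which admits no $\sigma$-transition by Proposition~\ref{prop:MaxProg}. Under this natural reading, the time-guardedness requirement forces the bound variable $X$ to occur only under a $\sigma$- or timeout-prefix in $Q$, so finitely many applications of $\rulename{Rec}$ eventually expose a head that is $\nil$, a timeout receiver, or a sleep. An auxiliary structural induction on the size of $Q$ (treating the outermost $\fix$ as the only potential source of unbounded unfolding) formalises this observation, and the three direct rules then close the argument.
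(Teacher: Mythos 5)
Your argument is essentially the expected one: the paper itself gives no proof of Proposition~\ref{prop:pat} (it is imported from~\cite{LaMe11}), and the natural proof is exactly the rule-driven induction you propose, using \rulename{$\sigma$-\zero}, \rulename{$\sigma$-nil}, \rulename{Timeout} and \rulename{Sleep} for the atomic cases, \rulename{Rec} for recursion, and \rulename{$\sigma$-Par} to recombine the components. Your observation that the side condition must be read up to $\equiv$ is correct and worth making explicit: a process such as $\fix X {\bcastzero v}$ satisfies the literal syntactic hypothesis yet its node is structurally congruent to one with a send at the head, which has no $\sigma$-transition by Proposition~\ref{prop:MaxProg}; reading the hypothesis up to $\equiv$ matches the way Maximal Progress itself is phrased.

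Two steps should be tightened. First, in the outer induction $M'$ does \emph{not} in general inherit well-formedness: removing the node $m_j$ can break the connectivity clause of Definition~\ref{def:well-formedness} (and leaves $m_j$ dangling in the neighbourhoods $\nu_i$). This is harmless, since none of the $\sigma$-rules you invoke appeals to well-formedness, but then the induction hypothesis must be stated for arbitrary parallel compositions of nodes whose processes satisfy the side condition, not for well-formed networks, otherwise the inductive step applies the hypothesis to a network outside its scope. Second, ``structural induction on the size of $Q$'' is not a valid termination measure for the $\fix$ case: unfolding substitutes $\fix X Q$ into $Q$ and may enlarge the term. The quantity that actually decreases is the number of $\fix$ binders at the head of the process. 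Time-guardedness together with closedness guarantees that the head of the unfolding is never a process variable, so each application of \rulename{Rec} strips one leading $\fix$, and after finitely many applications the exposed head is $\nil$, a receiver with timeout, or a $\sigma$-prefix (to which the three direct rules apply), the remaining possibilities of a broadcast or $\tau$ head being ruled out by the hypothesis read up to $\equiv$. With these two repairs the proof goes through.
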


Finally, our networks satisfy the  \emph{well-timedness} (or \emph{finite variability}) property~\cite{Nicollin1991}. 
Intuitively, only a finite number of instantaneous actions can fire between two contiguous $\sigma$-actions.  In \cname{}, this property holds because recursion is defined to be time-guarded.
\begin{prop}[Well-Timedness]
\label{prop:well-timedness}
For any well-formed network $M$ there is an upper bound $k \in \mathbb{N}$ such that whenever $M \trans{\alpha_1} \! \cdots \! \trans{\alpha_h} N $, with $\alpha_j\not\in \{ \rcva m v, \sigma\}$ for $1 \leq j \leq h$, \nolinebreak then \nolinebreak  $h \leq k$.
\end{prop}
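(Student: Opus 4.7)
The plan is to exhibit a natural-number valued measure $d(\cdot)$ on (closed) processes, lift it to a measure $D(\cdot)$ on networks, and argue that every transition whose label is neither $\rcva m v$ nor $\sigma$ strictly decreases $D$. Since $D$ takes values in $\mathbb{N}$, setting $k := D(M)$ then immediately yields the bound.

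I would define $d$ by structural recursion: $d(\nil)=d(\sigma.C)=d(X)=0$; $d(\bcast u C) = d(\tau.C) = d(\rcvtime x C D) = 1+d(C)$; $d(\fix X P) = d(P)$ (treating the variable $X$ as $0$); and $d(\bigoplus_{i\in I}p_i{:}P_i)=\max_{i\in I}d(P_i)$. Note that $d$ purposely ignores the timeout branch $D$ of $\rcvtime x C D$ and the body of $\sigma.C$, since those are reached only by $\sigma$-transitions. Extend to networks by $D(\zero)=D(\dummyN)=0$, $D(\node \nu n P)=d(P)$ and $D(M_1 \mid M_2) = D(M_1)+D(M_2)$. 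The first lemma to establish is that $d$ is invariant under recursive unfolding: because recursion is time-guarded, every free occurrence of $X$ in $P$ sits inside a subterm of the form $\sigma.C'$ or in the timeout branch of some $\rcvtime y C' D'$, positions ignored by $d$. Hence $d({\subst{\fix X P}{X}}P) = d(P) = d(\fix X P)$, so rule \rulename{Rec} does not affect the measure.

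The second step is a rule induction on $M \transSim{\alpha} \Delta$ showing that, for $\alpha \notin \{\rcva m v, \sigma\}$, every $M' \in \lceil \Delta \rceil$ satisfies $D(M') < D(M)$. In \rulename{Snd} and \rulename{Tau} the active node moves from $1+d(C)$ to some $d(P_i) \leq d(C)$, a strict decrease of at least $1$. In \rulename{Bcast} combined with \rulename{Rcv}, \rulename{RcvPar}, \rulename{RcvEnb} and \rulename{Rcv-}$\zero$, the sender already contributes a strict decrease while each receiver either drops from $1+d(C')$ to $\le d(C')$ (if it actually receives) or keeps the same measure (enabling rules); total measure strictly decreases. \rulename{TauPar} similarly combines a strict drop on the acting side with an unchanged idle side, and the side condition $N \not\equiv \dummyN \mid N'$ merely prevents spurious transitions. \rulename{ShhSnd} and \rulename{ObsSnd} relabel an already-counted broadcast and therefore inherit the strict decrease. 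Finally \rulename{Rec} reduces, via the invariance lemma, to the inductive hypothesis applied to its premise.

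Concatenating the steps of any trace $M = M_1 \transSim{\alpha_1} \cdots \transSim{\alpha_h} M_{h+1}$ with $\alpha_j \notin \{\rcva m v, \sigma\}$ yields a strictly decreasing chain $D(M_1) > D(M_2) > \cdots > D(M_{h+1}) \geq 0$ in $\mathbb{N}$, so $h \leq D(M)$ as required. I expect the main obstacle to be the invariance-under-unfolding lemma: it forces one to formalise time-guardedness as a predicate on the syntactic positions of the bound variable and to verify, by a careful sub-induction, that substitution into those positions does not alter $d$. Once that is in hand, the per-rule case analysis is routine.
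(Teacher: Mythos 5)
Your argument is correct. Note, however, that the paper itself does not prove this proposition at all: the time properties of Section 2, including well-timedness, are imported from Lanotte and Merro's earlier work (the paper only remarks that the property ``holds because recursion is defined to be time-guarded''). What you have done is turn that one-line remark into a self-contained proof, via a standard decreasing-measure argument: a weight $d$ that counts only the instantaneous prefixes reachable without crossing a $\sigma$-guard or a timeout branch, lifted additively to networks, together with (i) invariance of $d$ under the unfolding used by rule \rulename{Rec} --- which is exactly where time-guardedness enters, since every occurrence of the recursion variable sits in a position your $d$ never recurses into --- and (ii) a rule induction showing that each $\tau$- or send-labelled step strictly decreases the total weight on every network in the support of the derivative distribution, while the input-side premises of \rulename{Bcast} (rules \rulename{Rcv}, \rulename{RcvEnb}, \rulename{Rcv-$\zero$}, \rulename{RcvPar}) never increase it. With $k := D(M)$ the bound follows since the weights form a strictly decreasing chain in $\mathbb{N}$. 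Two small points worth making explicit when writing this up: $d$ is insensitive to value substitution, so the derivative ${\subst v x}C$ in \rulename{Rcv} and the targets $\sem{\node \nu n C}$ in \rulename{Snd}/\rulename{Tau} are handled uniformly; and the invariance lemma should be phrased for all free occurrences of the bound variable (including those inside nested recursions), which is exactly what the paper's notion of time-guarded recursion guarantees. With those details spelled out, your proof is a complete replacement for the external citation, and arguably more informative, since it makes the role of time-guardedness fully explicit.
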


%%%%%%%%%%%%%%%%%%%%

\section{Weak simulation with tolerance}
\label{sec:simulation}
In this section, we introduce \emph{weak simulation quasimetrics\/} as an instrument  to derive a notion of approximate simulation between networks. 
Our goal is to  define a family of relations $\simtol{p}$ over networks, with $p \in [0,1]$, to formalise the concept of \emph{simulation with a tolerance
 $p$\/}.  Intuitively, we wish to write $M \simtol{p} N$ if $N$ can simulate $M$ with a tolerance $p$. Thus, $\simtol{0}$ should coincide with the standard
 weak probabilistic simulation~\cite{BKHH02,BHK04},
 whereas $\simtol{1}$ should be equal to $\cname{} \times \cname$.

In a probabilistic setting, the definition of weak transition is somewhat complicated by the fact that (strong) transitions take processes (in our case networks) to distributions;
consequently if we are to use weak transitions $\TransSim{\hat{\alpha}}$ which abstract away from non-observable actions then we need to generalise transitions, so that they take (sub-)distributions to (sub-)distributions.

For a network $M$ and a distribution $\Delta$, we write $M \urg{\hat{\tau}} \Delta$ if either $M \transSim{\tau} \Delta$ or $\Delta = \overline{M}$.
Then, for $\alpha \neq \tau$, we write $M \transSim{\hat{\alpha}} \Delta$ if $M \transSim{\alpha} \Delta$.
Relation $\transSim{\hat{\alpha}}$ is extended to model transitions from sub-distributions to sub-distributions.
For a sub-distribution $\Delta =\sum_{i \in I}p_i \overline{M_i}$, we write $\Delta \urg{\hat{\alpha}} \Theta$ if there is 
a non-empty set  $J\subseteq I$
 such that $M_j \transSim{\hat{\alpha}} \Theta_j$ for all $j \in J$,
$M_i \ntransSimone[\hat{\alpha}]$, for all $i \in I \setminus J$,  and $\Theta = \sum_{j \in J}p_j  \Theta_j$.
Note that if $\alpha \neq \tau$ then this definition admits that only some networks in the support of $\Delta$ make the $\transSim{\hat{\alpha}}$ transition.
Then, we define the weak transition relation $\TransSimone[\hat{\tau}]$ as the transitive and reflexive closure of $\transSimone[\hat{\tau}]$, \emph{i.e.}, 
$\TransSim{\hat{\tau}} = (\transSim{\hat{\tau}})^{\ast}$, while for $\alpha \neq \tau$ we let $\TransSim{\hat{\alpha}}$ denote $\TransSim{\hat{\tau}} \transSim{\hat{\alpha}} \TransSim{\hat{\tau}}$.

\begin{exa}
Assume a node $\node \nu n {\tau. ({\bcastzero {v_1}}\oplus_{\frac{1}{3}} {\bcastzero {v_2}})}$ that  broadcasts $v_1$ with probability $\frac{1}{3}$, and $v_2$ with probability $\frac{2}{3}$.
By an application of rule \rulename{Tau} we derive the transition
\[ \node \nu n {\tau. ({\bcastzero {v_1}}\oplus_{\frac{1}{3}} {\bcastzero {v_2}})} \; \trans{\tau} \q
\frac{1}{3} \cdot \overline{\node \nu n{ {\bcastzero {v_1}}}}+ \frac{2}{3} \cdot \overline{ \node \nu n{\bcastzero {v_2}}}  . \]
Then, by an application of rule \rulename{Snd} we can derive: 
\begin{itemize}
\item  either $\node \nu n {\tau. ({\bcastzero {v_1}}\oplus_{\frac{1}{3}} {\bcastzero {v_2}})}  \Urg{\widehat{n!{v_1} \vartriangleright \nu}} 
\frac{1}{3} \cdot \overline{ \node \nu n \nil}$, because 
$\node \nu n{{\bcastzero {v_1}}} \trans{n!{v_1} \vartriangleright \nu} \node \nu n \nil$  
\item or
 $\node \nu n {\tau. ({\bcastzero {v_1}}\oplus_{\frac{1}{3}} {\bcastzero {v_2}})}  \Urg{\widehat{n!{v_2} \vartriangleright \nu}} 
\frac{2}{3} \cdot \overline{ \node \nu n \nil}$, because 
$\node \nu n{{\bcastzero {v_2}}} \trans{n!{v_2} \vartriangleright \nu} \node \nu n \nil$. 
\end{itemize}
\end{exa}

In order to define our notion of simulation with tolerance, we adapt the concept of \emph{weak bisimulation metric} of Desharnais et al.'s \cite{DJGP02}. 
In that paper  the behavioural distance between systems is measured by means of suitable \emph{pseudometrics}, namely symmetric functions assigning a numeric value to any pair of systems.
Here, we consider asymmetric variants of pseudometrics, called \emph{pseudoquasimetrics\/}, measuring the tolerance of the \emph{simulation} between networks.

\begin{defi}[Pseudoquasimetric]\label{def:pseudoquasimetric}
A function $d \colon\cname{} \times \cname{}  \to [0,1]$ is a \emph{1-bounded pseudoquasimetric} over \cname{} if
\begin{itemize}	
\item   $d(M,M)= 0$ for all $M \in \cname{}$, and
\item  $d(M,N) \le d(M,O) + d(O,N)$ for all $M,N,O\in \cname{}$ (triangle inequality).
\end{itemize}
\end{defi}

Weak simulation quasimetrics provide the quantitative analogous of the weak simulation game: a networks $M$ is simulated by a network $N$ with tolerance $p$ if each transition $M \transSimone[\alpha] \Delta$ is mimicked by a transition $N \TransSimone[\hat{\alpha}] \Theta$ such that the distribution $\Delta$ is simulated by the distribution $\Theta$ with a tolerance $q$, with $q \leq p$.
As the derivative of a transition is a probability distribution we have to lift 
the notion of pseudometric from \cname{} to distributions over \cname{}.
In \cite{DJGP02}, this lifting is realised by means of linear programs, relying on the symmetry of pseudometrics.
Since pseudoquasimetrics are not symmetric, we need a different technique.
Thus, to this end,  we adopt the notions of matching~\cite{Vil08} (also known as \emph{coupling}) and Kantorovich lifting~\cite{Den09}.

\begin{defi}[Matching]
\label{def_matching}
A \emph{matching} for a pair of distributions 
$(\Delta,\Theta) \in {\mathcal D}(\cname{}) \times {\mathcal D}(\cname{})$ is a distribution $\omega$ in the network product space ${\mathcal D}(\cname{} \times \cname{})$ 
such that: 
\begin{itemize}
\item 
$\sum_{N \in \cname{}} \omega(M,N)=\Delta(M)$, for all $M \in \cname{}$, and 
\item 
 $\sum_{M\in \cname{}} \omega(M,N)=\Theta(N)$, for all $N \in \cname{}$. 
\end{itemize}
We write $\Omega(\Delta,\Theta)$ to denote the set of all matchings for $(\Delta,\Theta)$.
\end{defi}
A matching for $(\Delta,\Theta)$ may be understood as a transportation schedule for the shipment of probability mass from $\Delta$ to $\Theta$ \cite{Vil08}.

\begin{defi}[Kantorovich lifting] \label{def:KantorovichLifting}
\label{def:Kantorovich}
Let $d\colon \cname{} \times \cname{}  \to [0,1]$ be a pseudoquasimetric. The \emph{Kantorovich lifting} of $d$ is the function
$\Kantorovich(d)\colon {\mathcal D}(\cname{} ) \times {\mathcal D}(\cname{}) \to [0,1]$ defined 
as: 
\[
\Kantorovich(d)(\Delta,\Theta) \deff  \min_{\omega \in \Omega(\Delta,\Theta)} \sum_{M,N \in \cname}\omega(M,N) \cdot d(M,N).
\]
\end{defi}
Note that since we are considering only distributions with finite support, the minimum over the set of matchings $\Omega(\Delta,\Theta)$ is well defined. 

\begin{defi}[Weak simulation quasimetric]
\label{def:simulation_quasimetric}
We say that a pseudoquasimetric $d \colon \cname{} \times \cname{} \to [0,1]$ is a \emph{weak simulation quasimetric}
 if for all networks $M,N \in \cname$, with $d(M,N)<1$, whenever $M \transSim{\alpha} \Delta$ there is a distribution $\Theta$ such that
 $N \TransSim{\hat{\alpha}} \Theta$  and $\Kantorovich(d)(\Delta \: , \:  \Theta + (1{-}\size{\Theta}) \overline{\dummyN}) \le d(M,N)$. 
\end{defi}
In the previous definition, if 
$\size{\Theta} < 1$ then, 
 with 
probability $1-\size{\Theta}$,  there is no way to simulate 
the behaviour of any network in the support of $\Delta$ (the special network $\dummyN$ does not perform any action).

As expected, the kernel of a weak simulation quasimetric is a \emph{weak probabilistic simulation} defined along the lines of \cite{DD11} in terms of the  notion of matching.
\begin{defi}
A relation ${\mathcal R} \subseteq \cnamed{} \times \cnamed{}$ is a weak probabilistic simulation iff whenever $M \, {\mathcal R} \, N$ and $M \transSim{\alpha} \Delta$ there exist a transition $N \TransSim{\hat{\alpha}} \Theta$ and a matching $\omega \in \Omega(\Delta,\Theta)$ with $\omega(M',N')>0$ such that $M' \, {\mathcal R} \, N'$. 
\end{defi} 
\begin{prop}
\label{prop_simulazione}
Let $d$ be a weak simulation quasimetric.
The relation $\{ (M, N)  \colon  d(M,N) =0 \} \subseteq \cname{} \times \cname{}$ is a weak probabilistic simulation.
\end{prop}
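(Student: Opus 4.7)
\emph{Proof plan.}
I would unpack the defining property of a weak simulation quasimetric at the pair $(M,N)$, and then extract from the optimal Kantorovich coupling exactly the matching needed for the weak probabilistic simulation game.

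Concretely, fix $(M,N)$ with $d(M,N)=0$ and an arbitrary transition $M \transSim{\alpha} \Delta$. Since $d(M,N)=0<1$, Definition~\ref{def:simulation_quasimetric} is applicable and yields a sub-distribution $\Theta$ with $N \TransSim{\hat{\alpha}} \Theta$ and
\[
\Kantorovich(d)\bigl(\Delta,\; \Theta + (1{-}\size{\Theta})\overline{\dummyN}\bigr) \;\le\; d(M,N) \;=\; 0,
\]
so that this Kantorovich distance is in fact equal to $0$. Writing $\Theta' = \Theta + (1{-}\size{\Theta})\overline{\dummyN}$, which is a proper distribution in $\mathcal{D}(\cname)$, Definition~\ref{def:KantorovichLifting} then guarantees the existence of an optimal matching $\omega \in \Omega(\Delta,\Theta')$ with
\[
\sum_{M',N'} \omega(M',N')\cdot d(M',N') \;=\; 0.
\]
Non-negativity of each summand forces $d(M',N')=0$ whenever $\omega(M',N')>0$, so the positive-mass support of $\omega$ is entirely contained in $\mathcal{R} = \{(M',N'):d(M',N')=0\}$. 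This $\omega$ is precisely the witness demanded by the weak probabilistic simulation step from $(M,N)$.

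The main technical point I expect to have to dwell on is the reconciliation between the target $\Theta'$ of $\omega$ and the target $\Theta$ of the weak transition $N \TransSim{\hat{\alpha}} \Theta$, since $\Theta'$ may place extra mass on the deadlocked state $\dummyN$. I would address this via the $\overline{\dummyN}$-completion convention already used throughout the paper, under which $\omega$ directly plays the role of the matching in $\Omega(\Delta,\Theta)$. Equivalently, one can argue that no positive mass of $\omega$ is actually routed to $\dummyN$: if $M' \in \lceil \Delta \rceil$ admits any non-$\tau$ transition, as is guaranteed for the networks arising here by Proposition~\ref{prop:pat}, then applying the weak simulation quasimetric property to the pair $(M',\dummyN)$ would require $\dummyN \TransSim{\hat{\beta}} \Psi$ for some $\beta \neq \tau$, which is impossible; hence $d(M',\dummyN)=1$, so $(M',\dummyN)$ cannot appear in the zero-distance support of $\omega$, forcing $\size{\Theta}=1$ and $\omega \in \Omega(\Delta,\Theta)$. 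Either viewpoint closes the argument and establishes that $\mathcal{R}$ is a weak probabilistic simulation.
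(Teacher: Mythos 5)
Your proposal is correct and follows essentially the same route as the paper's proof: instantiate Definition~\ref{def:simulation_quasimetric} at $(M,N)$ with $d(M,N)=0$, observe that the resulting Kantorovich distance is $0$, deduce $\size{\Theta}=1$, and read off from an optimal matching that every positive-mass pair lies in the kernel. Two small remarks: your first way out (a ``$\overline{\dummyN}$-completion convention'') should be dropped, because the matching demanded by the weak probabilistic simulation definition must lie in $\Omega(\Delta,\Theta)$ with $\Theta$ a full distribution, so $\size{\Theta}=1$ genuinely has to be established; and in your second (correct) argument the fact that every $M'\in\lceil\Delta\rceil$ admits a non-$\tau$ transition is better justified by receive-enabledness (rules \rulename{RcvEnb}, \rulename{Rcv-$\zero$}, \rulename{RcvPar}, which give $\rcva m v$ self-loops for fresh $m$) than by Proposition~\ref{prop:pat}, which does not apply when some node is about to broadcast or perform $\tau$ --- with that fix, your argument is exactly the justification of the step ``$\size{\Theta}=1$'' that the paper's own proof asserts without detail.
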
 
A crucial result in our construction process is the existence of the minimal weak simulation quasimetric, which
can be viewed as the asymmetric counterpart of the minimal weak bisimulation metric~\cite{DJGP02}.
\begin{thm}
\label{thm:exists_metric}
There exists a weak simulation quasimetric $\metric$ such that $\metric(M,N) \le d(M,N)$ for all weak simulation quasimetrics $d$ and all networks $M,N \in \cname{}$.
\end{thm}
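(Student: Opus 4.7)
The plan is to construct $\metric$ as the least fixed point of a monotone functional on the complete lattice of $1$-bounded pseudoquasimetrics on $\cname{}$, adapting the Kantorovich/Knaster--Tarski methodology of~\cite{DJGP02} to the asymmetric setting.

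I would first equip the set $\mathcal{PQ}$ of $1$-bounded pseudoquasimetrics on $\cname{}$ with the pointwise order $\le$. Pointwise suprema preserve both the diagonal condition and the triangle inequality (using $\sup_i(a_i+b_i) \le \sup_i a_i + \sup_i b_i$), so $\mathcal{PQ}$ has all joins and is therefore a complete lattice, with bottom element the constant-zero function. Then I would define the functional $\mathcal{F} : \mathcal{PQ} \to \mathcal{PQ}$ by
\[
\mathcal{F}(d)(M,N) \;=\; \sup_{M \transSim{\alpha} \Delta} \;\; \inf_{N \TransSim{\hat\alpha} \Theta} \;\; \Kantorovich(d)\bigl(\Delta,\; \Theta + (1-\size{\Theta}) \cdot \overline{\dummyN}\bigr),
\]
under the conventions that an empty supremum is $0$ and an empty infimum is $1$. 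Unfolding Definition~\ref{def:simulation_quasimetric} one sees that $d \in \mathcal{PQ}$ is a weak simulation quasimetric if and only if $\mathcal{F}(d) \le d$, that is, iff $d$ is a post-fixed point of $\mathcal{F}$.

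The substantive checks are then two-fold. First, $\mathcal{F}(d)$ must itself be a pseudoquasimetric: the diagonal condition $\mathcal{F}(d)(M,M)=0$ is obtained by mimicking each transition $M \transSim{\alpha} \Delta$ by itself, using the Dirac coupling $\omega(M',M') = \Delta(M')$; the triangle inequality descends from a corresponding triangle inequality for $\Kantorovich(d)$, together with the fact that $\sup$ and $\inf$ preserve it in the appropriate direction. Second, monotonicity of $\mathcal{F}$ reduces to monotonicity of the Kantorovich lifting: if $d_1 \le d_2$ pointwise then for every matching $\omega$ one has $\sum \omega d_1 \le \sum \omega d_2$, and this survives both the infimum over matchings and the outer $\sup$/$\inf$ over transitions. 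With these ingredients in place, Knaster--Tarski on $\mathcal{PQ}$ yields a least fixed point $\metric$ of $\mathcal{F}$; by the characterisation above it is automatically a weak simulation quasimetric, and minimality is immediate because every weak simulation quasimetric is a post-fixed point of $\mathcal{F}$ and therefore lies above the least fixed point.

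The main obstacle I anticipate is the triangle inequality for $\Kantorovich(d)$ in the asymmetric regime. The symmetric argument of~\cite{DJGP02} relies on linear-programming duality, which is specifically tailored to pseudometrics and does not transfer directly to pseudoquasimetrics. I would instead argue via the classical \emph{gluing lemma} from optimal transport: given optimal matchings $\omega_{12} \in \Omega(\Delta,\Phi)$ and $\omega_{23} \in \Omega(\Phi,\Theta)$, construct $\omega_{13} \in \Omega(\Delta,\Theta)$ by conditioning on the intermediate network $O$ (formally $\omega_{13}(M,N) = \sum_{O : \Phi(O)>0} \omega_{12}(M,O)\,\omega_{23}(O,N) / \Phi(O)$), and bound its cost using the triangle inequality of $d$ applied pointwise. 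A secondary technicality is that the infimum over weak derivatives $\Theta$ need not be attained in full generality; well-timedness (Proposition~\ref{prop:well-timedness}) together with the finite-support discipline on probabilistic choices ensures only finitely many actions can fire between two $\sigma$-steps, which should suffice to make the relevant infima actually minima in our setting.
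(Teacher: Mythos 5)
Your lattice-theoretic setup (pointwise order on $1$-bounded pseudoquasimetrics, monotonicity of $\mathcal{F}$ via monotonicity of the Kantorovich lifting, gluing lemma for the triangle inequality of $\Kantorovich(d)$, and the observation that weak simulation quasimetrics are exactly the post-fixed points modulo attainment of infima) is sound as far as it goes, and the gluing argument is in fact the same device the paper uses to prove Proposition~\ref{prop_kant_quasimetric}. The gap is the claim that $\mathcal{F}$ maps $\mathcal{PQ}$ into $\mathcal{PQ}$, i.e.\ that $\mathcal{F}(d)$ satisfies the triangle inequality for an arbitrary pseudoquasimetric $d$. In the weak setting this does not ``descend'' from the triangle inequality of $\Kantorovich(d)$: to bound $\mathcal{F}(d)(M,N)$ by $\mathcal{F}(d)(M,O)+\mathcal{F}(d)(O,N)$ you start from a strong move $M \transSim{\alpha} \Delta$, obtain a \emph{weak} move $O \TransSim{\hat\alpha} \Phi$ within $\mathcal{F}(d)(M,O)$, and must then have $N$ match that weak move of $O$ --- but $\mathcal{F}(d)(O,N)$ only controls how $N$ matches the \emph{strong} moves of $O$, and the continuation costs in the coupling are measured in $d$, about which nothing ties strong matching of residuals to weak matching. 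Bridging exactly this is the technical heart of the paper's proof: Lemma~\ref{lemma_sim_weak_transitions}, proved by induction on the length of the weak transition with a careful decomposition of sub-distributions and the $(1-\size{\Theta})\overline{\dummyN}$ padding, shows that a weak simulation quasimetric also simulates weak transitions; the paper then builds $\metric$ explicitly as the chained infimum over all weak simulation quasimetrics (which is precisely the $\mathcal{PQ}$-meet your Knaster--Tarski argument would need) and uses that lemma at every link of the chain. Your proposal contains no counterpart of this lemma, and without it the crucial step is unsupported --- indeed it is not even clear that $\mathcal{F}(d)$ is a pseudoquasimetric for arbitrary $d\in\mathcal{PQ}$.

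Retreating to the complete lattice of all $1$-bounded functions does not rescue the plan by itself: there the least fixed point exists by monotonicity and minimality is immediate, but you must then prove \emph{a posteriori} that the fixed point satisfies the triangle inequality (and the simulation-transfer clause of Definition~\ref{def:simulation_quasimetric}), and both of these again force you to show that weak transitions of the intermediate network can be simulated, i.e.\ to prove the analogue of Lemma~\ref{lemma_sim_weak_transitions}. Your closing remark about infima possibly not being attained, handled via well-timedness and finite branching, is legitimate and matches the paper's own appeal to finite branching, but it is a secondary issue; the missing weak-transition-matching lemma is the substantive one.
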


Now, we have all ingredients to define our 
simulation with tolerance $p$. 
\begin{defi}[Weak simulation with tolerance]
\label{def:simtol}
Let $p \in [0,1]$, we say that \emph{$N$ simulates $M$ with tolerance $p$}, written $M \sqsubseteq_p N$, iff $\metric(M,N) \le p$.
We write $M \simeq_{p} N$ if both $M \simtol{p} N$ and $N \simtol{p} M$. 
\end{defi}

Since the minimum weak simulation quasimetric $\metric$ satisfies the triangle inequality, our simulation relation is trivially  transitive in an additive sense:
\begin{prop}[Transitivity]
\label{prop:transitivity}
$M \simtol{p} N$ and $N \simtol{q} O$ imply $M \simtol{r} O$, with  $r{=}\min(1,p{+}q)$. 
\end{prop}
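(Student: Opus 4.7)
The plan is essentially to read off transitivity directly from the triangle inequality of the minimal weak simulation quasimetric $\metric$, exploiting 1-boundedness to handle the $\min$ in the definition of $r$.

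First I would unfold Definition~\ref{def:simtol}: the hypotheses $M \simtol{p} N$ and $N \simtol{q} O$ are, by definition, the inequalities $\metric(M,N) \le p$ and $\metric(N,O) \le q$. By Theorem~\ref{thm:exists_metric}, $\metric$ exists and is a weak simulation quasimetric; in particular, it is a pseudoquasimetric in the sense of Definition~\ref{def:pseudoquasimetric}, hence it satisfies the triangle inequality
\[
\metric(M,O) \;\le\; \metric(M,N) + \metric(N,O).
\]
Combining with the two hypotheses gives $\metric(M,O) \le p + q$.

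Second, I would invoke 1-boundedness. By Definition~\ref{def:pseudoquasimetric}, $\metric$ takes values in $[0,1]$, so $\metric(M,O) \le 1$ holds unconditionally. Together with $\metric(M,O) \le p+q$ this yields
\[
\metric(M,O) \;\le\; \min(1, p+q) \;=\; r,
\]
which is exactly $M \simtol{r} O$ by Definition~\ref{def:simtol}.

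There is essentially no obstacle: the entire content of the proposition has been built into the requirement that $\metric$ be a pseudoquasimetric. The only thing worth flagging is the role of the $\min$, which is needed purely because tolerances live in $[0,1]$: if one simply added $p$ and $q$ one could exceed the 1-bound, but since $\metric$ itself is 1-bounded, clipping at 1 preserves the inequality. No reasoning about matchings, Kantorovich lifting, or weak transitions is required here — those have already done their work in establishing Theorem~\ref{thm:exists_metric}.
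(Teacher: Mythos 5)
Your proof is correct and matches the paper's own argument: the paper gives no separate proof of this proposition, noting only that transitivity follows immediately because the minimal weak simulation quasimetric $\metric$ satisfies the triangle inequality, with the $\min(1,p+q)$ handled by 1-boundedness exactly as you do.
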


As expected, if $M \TransSimone[\hat{\tau}] \Delta$ then $M$ can simulate all networks in $\lceil \Delta \rceil$.
\begin{prop}
\label{M_go_to_N}
If $M  \TransSimone[\hat{\tau}] (1{-}q) \overline{N} + q\Delta$,   
for some $\Delta \in {\mathcal D}(\cname{})$, then $N \simtol{q} M$.
\end{prop}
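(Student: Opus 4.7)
The plan is to exhibit a weak simulation quasimetric $d$ with $d(N,M)\leq q$; the minimality of $\metric$ (Theorem~\ref{thm:exists_metric}) then yields $\metric(N,M)\leq d(N,M)\leq q$, so $N \simtol{q} M$ by Definition~\ref{def:simtol}. Define
\[
d(X,Y) \;\deff\; \min\bigl\{\metric(X,Y),\ \metric(X,N)+q+\metric(M,Y)\bigr\}.
\]
Reflexivity $d(X,X){=}0$ is immediate from the first term, and $d(N,M)\leq q$ from the second (since $\metric(N,N){=}\metric(M,M){=}0$). The triangle inequality follows from a short case analysis on which argument of the min is active at $d(X,Y)$ and $d(Y,Z)$, combined with the triangle inequality of $\metric$.

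The substantive step is the simulation condition for $d$: given $X \transSim{\alpha} \Delta_0$ with $d(X,Y){<}1$, one must produce $\Theta$ with $Y \TransSim{\hat{\alpha}} \Theta$ and $\Kantorovich(d)(\Delta_0,\Theta^{+}) \leq d(X,Y)$, writing $\Phi^{+} \deff \Phi + (1{-}|\Phi|)\overline{\dummyN}$. If $d(X,Y){=}\metric(X,Y)$ one simply reuses the $\Theta$ given by $\metric$'s WSQM property and invokes $\Kantorovich(d)\leq \Kantorovich(\metric)$ (since $d\leq \metric$ pointwise). If instead $d(X,Y){=}\metric(X,N)+q+\metric(M,Y)$, I would simulate in three stages and chain them via the triangle inequality for the Kantorovich lifting of $d$: (i) by $\metric$'s WSQM on $(X,N)$, obtain $N \TransSim{\hat{\alpha}} \Theta_N$ with $\Kantorovich(\metric)(\Delta_0,\Theta_N^{+})\leq \metric(X,N)$; (ii) prepend the hypothesised $M \TransSim{\hat{\tau}} (1{-}q)\overline{N}+q\Delta$ to the transition from (i), yielding $M \TransSim{\hat{\alpha}} \Theta_M$ such that $\Theta_M^{+} = (1{-}q)\Theta_N^{+} + q\,\Xi$ for some full distribution $\Xi$, whence the matching that is the identity on the $(1{-}q)$-portion and arbitrary on the $q$-portion gives $\Kantorovich(d)(\Theta_N^{+},\Theta_M^{+})\leq q$; (iii) apply $\metric$'s WSQM property along the multi-step structure of $M \TransSim{\hat{\alpha}} \Theta_M$ to find $Y \TransSim{\hat{\alpha}} \Theta$ with $\Kantorovich(\metric)(\Theta_M^{+},\Theta^{+})\leq \metric(M,Y)$.

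The principal obstacle is step (ii). For $\alpha=\tau$ the composition is direct: extend the $\hat{\tau}$-path, keeping the $\Delta$-portion still by trivial $\urg{\hat{\tau}}$ moves. For $\alpha\neq\tau$, however, the definition of $\urg{\hat{\alpha}}$ forces every network in $\lceil\Delta\rceil$ capable of $\hat{\alpha}$ to transition, so the $\Delta$-portion cannot simply be frozen; nevertheless, whatever derivative it produces can be absorbed into the $q$-mass component $\Xi$, preserving the convex decomposition $(1{-}q)\Theta_N^{+}+q\,\Xi$ needed for the matching argument. Step (iii) is lighter and relies on a standard induction that lifts the WSQM property from single-network transitions to distribution-level weak transitions.
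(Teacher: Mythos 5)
Your proposal is correct, but it reaches the conclusion by a genuinely different route than the paper. The paper argues directly on the single pair $(N,M)$: given an arbitrary transition $N \transSim{\alpha} \Theta$, it prepends the hypothesis $M \TransSim{\hat{\tau}} (1{-}q)\overline{N} + q\Delta$ to obtain $M \TransSim{\hat{\alpha}} (1{-}q)\Theta + q\Delta'$, and then exhibits exactly the coupling you use in your step (ii) --- identity on the $(1{-}q)$ portion, product with the residual sub-distribution padded by $\overline{\dummyN}$ on the $q$ portion --- to bound the $\Kantorovich(\metric)$-distance by $q$; it concludes $\metric(N,M)\le q$ without building any auxiliary quasimetric, and in particular it uses neither the lifting of the simulation clause to weak transitions (Lemma~\ref{lemma_sim_weak_transitions}) nor the triangle inequality for the Kantorovich lifting (Proposition~\ref{prop_kant_quasimetric}) inside this proof. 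You instead construct the shortcut closure $d(X,Y)=\min\{\metric(X,Y),\,\metric(X,N)+q+\metric(M,Y)\}$, verify it is a weak simulation quasimetric, and invoke minimality (Theorem~\ref{thm:exists_metric}); this costs more machinery, since your Case~2 needs the three-stage chaining and hence Lemma~\ref{lemma_sim_weak_transitions} (your ``standard induction'' is in fact the heaviest lemma of the paper, so cite it rather than redo it) together with Proposition~\ref{prop_kant_quasimetric}, but it buys something the paper leaves implicit: a justification that verifying the one-step clause at the single pair $(N,M)$, measured with $\Kantorovich(\metric)$, really forces $\metric(N,M)\le q$ (naively redefining $\metric$ at $(N,M)$ to be $q$ can break the triangle inequality, so some closure or fixed-point argument is needed, and your $d$ supplies it while remaining checkable for all pairs). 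Your treatment of the $\alpha\neq\tau$ subtlety in step (ii) --- the $\Delta$-part cannot be frozen under the lifted $\hat{\alpha}$-step, but whatever it produces can be absorbed into the $q$-mass component --- is exactly right and mirrors the composition the paper itself performs when it writes $M \TransSim{\hat{\alpha}} (1{-}q)\Theta + q\Delta'$ with $\Delta \TransSim{\hat{\alpha}} \Delta'$.
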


Clearly the transitivity property is quite useful when doing algebraic reasoning. 
However, 
we can derive a better tolerance when concatenating two simulations, if one of them is derived by an application of Proposition~\ref{M_go_to_N}. 
\begin{prop}
\label{lem:O_go_to_N_sim_M}
If $M \simtol{p} N$ and $O \TransSim{ \hat \tau} (1-q) \overline{N} + q \Delta$,
for some $\Delta \in {\mathcal D}(\cname{})$, then 
 $M \simtol{p  (1-q) + q} O$. 
\end{prop}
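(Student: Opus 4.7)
The plan is to prove $\metric(M,O) \le p(1-q)+q$. A direct application of transitivity (Proposition \ref{prop:transitivity}), combined with Proposition \ref{M_go_to_N} instantiated at $O$ (which yields $N \simtol{q} O$), would only give the weaker bound $M \simtol{p+q} O$. To obtain the tighter bound we must exploit the convex-combination structure of $O$'s $\hat\tau$-derivative: $N$ is not merely ``close to $O$'' in a generic sense, but appears with probability $1-q$ inside $O$'s internal evolution, so only the $(1-q)$-portion of $O$ needs to bear the cost $p$ of tracking $M$, while the remaining $q$-portion incurs the worst-case cost at most $1$.

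My first step is to verify the weak simulation condition for the pair $(M,O)$ at value $p(1-q)+q$. Fix an arbitrary transition $M \transSim{\alpha} \Phi$. Since $\metric(M,N) \le p$, Definition \ref{def:simulation_quasimetric} supplies a weak transition $N \TransSim{\hat{\alpha}} \Xi$ together with a matching $\omega$ witnessing $\Kantorovich(\metric)(\Phi, \Xi + (1-\size{\Xi})\overline{\dummyN}) \le p$. Composing this with the given $O \TransSim{\hat{\tau}} (1-q)\overline{N} + q\Delta$ yields a weak transition $O \TransSim{\hat{\alpha}} \Psi$ whose ``$N$-component'' contributes $(1-q)\Xi$ to $\Psi$, with the elements of $\Delta$ contributing whatever $\hat{\alpha}$-derivative they admit (reducing to $\Delta$ itself when $\alpha = \tau$). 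The target $\Psi + (1-\size{\Psi})\overline{\dummyN}$ then splits as $(1-q)[\Xi + (1-\size{\Xi})\overline{\dummyN}] + q\Theta$ for some distribution $\Theta$ of total mass~$1$. Scaling $\omega$ by $1-q$ matches $(1-q)\Phi$ against the first summand at cost at most $(1-q)p$, and an arbitrary matching of $q\Phi$ against $q\Theta$ costs at most $q$; the total Kantorovich cost is thus at most $(1-q)p + q = p(1-q)+q$.

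The main obstacle is turning this pointwise verification into the global bound $\metric(M,O) \le p(1-q)+q$. A naive redefinition of $\metric$ at the single pair $(M,O)$ risks violating the triangle inequality at pairs that ``route through'' $(M,O)$. I would instead adapt the technique used for Proposition \ref{M_go_to_N}: construct a pseudoquasimetric $d$ by taking a path-style triangle closure of $\metric$ together with a new edge assigning value $p(1-q)+q$ to $(M,O)$, and show that $d$ is still a weak simulation quasimetric. The critical case is the one treated above, and for triangles that combine $\metric$-edges with the new edge, the simulation condition follows by splicing the matching constructed here with the matchings supplied by the simulation property of $\metric$. The conclusion $\metric(M,O) \le d(M,O) = p(1-q)+q$, and hence $M \simtol{p(1-q)+q} O$, then follows from the minimality of $\metric$ given by Theorem \ref{thm:exists_metric}.
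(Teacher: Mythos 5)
Your core argument coincides with the paper's proof: given $M \transSim{\alpha} \Phi$, you use $\metric(M,N)\le p$ to obtain $N \TransSim{\hat{\alpha}} \Xi$ with an optimal matching $\omega$ of cost at most $p$, compose it with $O \TransSim{\hat{\tau}} (1-q)\overline{N}+q\Delta$ to get $O \TransSim{\hat{\alpha}} (1-q)\Xi + q\Delta'$, pad the missing mass with $\overline{\dummyN}$, and bound the Kantorovich distance via the matching $(1-q)\cdot\omega$ plus the product matching of $\Phi$ with the residual $q$-portion, yielding $(1-q)p+q$ — this is exactly the matching $\omega'$ constructed in the paper, including the treatment of the sub-distribution defect. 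The only divergence is the final step: the paper stops at this pointwise transfer-condition check against $\Kantorovich(\metric)$ and concludes $\metric(M,O)\le p(1-q)+q$ directly (the same style it uses for Proposition~\ref{M_go_to_N}), whereas you add a shortest-path closure of $\metric$ with a new edge at $(M,O)$ and invoke minimality from Theorem~\ref{thm:exists_metric}, in the spirit of Lemma~\ref{lem:non_expansiveness_support}. That extra scaffolding is not wrong — it makes explicit a step the paper treats as immediate — but it is left as a sketch (splicing would in fact need the weak-transition lifting of Lemma~\ref{lemma_sim_weak_transitions} extended to the new edge); the quantitative content of your proof is identical to the paper's.
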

Intuitively, in the simulation between $M$ and $N$ the tolerance $p$ 
must be weighted by taking into consideration  that $O$ may 
evolve into $N$ with a probability $(1-q)$.

The previous results can be generalised to the case where Proposition~\ref{M_go_to_N} is applied to networks that are reached after one $\sigma$-transition.
\begin{prop}
\label{lem:O_go_to_N_sim_M_conSigma}
Let $M$, $N$ and $O$ be networks in \cname{} such that:  
(i)  
$M \simtol{p} N$;  
(ii)
 $N$ can only perform a transition of the form $N \transSim{\sigma} \overline{N'}$, for some $N'$  such that  $N' \ntransSimone[\tau]$;  
(iii) 
 $O \transSim{\sigma} \overline{O'} \TransSim{ \hat \tau} (1-q) \overline{N'} + q \Delta$,
for some $O'\in \cname{}$ and $\Delta \in {\mathcal D}(\cname{})$. 
Then, $M \; \simtol{p  (1-q) + q} \; O$. 
\end{prop}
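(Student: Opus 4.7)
The plan is to establish the strengthening $\metric(X,O) \le (1-q)\metric(X,N) + q$ for every network $X$; the proposition is then the instance $X=M$ combined with $\metric(M,N)\le p$. I would proceed by strong induction on the maximum length $k(X)$ of a $\tau$-chain originating at $X$, which is finite by well-timedness (Proposition~\ref{prop:well-timedness}).

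Fix $X$ and assume $\metric(X,N)<1$ (else the bound is trivial). The simulation $X \simtol{\metric(X,N)} N$ together with hypothesis~(ii) restricts $X$'s non-identity transitions to $\tau$ and $\sigma$. For a transition $X \transSim{\sigma} \Delta_X^\sigma$, I match using the composite $O$-transition $O \TransSim{\hat{\sigma}} (1-q)\overline{N'} + q\Delta$, obtained by concatenating $O \transSim{\sigma} \overline{O'}$ with hypothesis~(iii). By linearity of the Kantorovich lifting in its second argument, together with $\Kantorovich(\metric)(\Delta_X^\sigma, \overline{N'}) \le \metric(X,N)$ extracted from the simulation, the Kantorovich distance is bounded by $(1-q)\metric(X,N)+q$. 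For a transition $X \transSim{\tau} \Delta_X^\tau$ (which requires $k(X)>0$), I match by $O$'s identity transition; each $X' \in \lceil\Delta_X^\tau\rceil$ has $k(X') < k(X)$, so the induction hypothesis yields $\metric(X',O) \le (1-q)\metric(X',N)+q$, and averaging with $\sum_{X'}\Delta_X^\tau(X')\metric(X',N) \le \metric(X,N)$ produces the same Kantorovich bound.

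To convert these matched simulation conditions into the sought bound on $\metric$, I would define a pseudoquasimetric $d$ coinciding with $\metric$ except that $d(X',O) = \min(1,(1-q)\metric(X',N)+q)$ for $X'\neq O$ and $d(O,O)=0$; verifying that $d$ is a weak simulation quasimetric then yields $\metric(M,O) \le d(M,O) \le (1-q)p+q$ by Theorem~\ref{thm:exists_metric}. The main obstacle is the triangle inequality for $d$ through $O$, namely $\metric(X,Y) \le d(X,O) + \metric(O,Y)$ for $X,Y \neq O$, which reduces to the stronger claim itself and produces an apparent circularity. The layered induction on $k$ resolves this, since at each stratum the triangle inequality depends only on the bound at strictly smaller $\tau$-chain depths, which the induction hypothesis supplies.
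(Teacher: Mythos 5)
Your local computations for the pair $(X,O)$ are sound and essentially reproduce the paper's argument: hypothesis (ii) pins the matched action of $N$ to $\sigma$ with target $\overline{N'}$, hypothesis (iii) supplies $O \TransSim{\hat{\sigma}} (1-q)\overline{N'}+q\Delta$, and the convex-combination coupling gives the bound $(1-q)\metric(X,N)+q$. The gap is in the machinery you wrap around this, namely the step that converts the matched transitions into a bound on the \emph{minimal} quasimetric $\metric$. You propose a candidate $d$ that coincides with $\metric$ except on the column $d(\cdot,O)=\min(1,(1-q)\metric(\cdot,N)+q)$ and then invoke minimality (Theorem~\ref{thm:exists_metric}). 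As you note, the triangle inequality through $O$, i.e.\ $\metric(X,Y)\le d(X,O)+\metric(O,Y)$, is equivalent to the very bound $\metric(X,O)\le(1-q)\metric(X,N)+q$ you are trying to prove; but your claim that the induction on the $\tau$-depth $k(X)$ breaks this circularity does not hold up. For a network $X$ at depth $k$, the problematic instance of the triangle inequality needs the bound \emph{for $X$ itself}, i.e.\ at depth $k$, not at strictly smaller depth, so the induction hypothesis supplies nothing here. Moreover, minimality can only be played against a single, globally verified weak simulation quasimetric; you cannot apply Theorem~\ref{thm:exists_metric} ``stratum by stratum'', since Definition~\ref{def:simulation_quasimetric} is a property of $d$ on all pairs at once. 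A second unaddressed point: once the column $d(\cdot,O)$ differs from $\metric(\cdot,O)$ (your definition does not take a minimum with $\metric(\cdot,O)$, so the new values may exceed it), the lifting $\Kantorovich(d)$ differs from $\Kantorovich(\metric)$, and the transfer condition must be re-verified for the \emph{unmodified} pairs $(X,Y)$ with $Y\neq O$ whose optimal matchings put mass on pairs of the form $(X',O)$; your sketch only checks the pairs $(X,O)$.

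For comparison, the paper's proof needs none of this apparatus: it takes a single transition $M\transSim{\alpha}\Theta$, uses (i) to obtain $N\TransSim{\hat{\alpha}}\Theta'$ with $\Kantorovich(\metric)$-cost at most $p$, uses (ii) to conclude $\alpha=\sigma$ and $\Theta'=\overline{N'}$, uses (iii) to obtain the matching weak move $O\TransSim{\hat{\sigma}}(1-q)\overline{N'}+q\Delta$, and then bounds $\Kantorovich(\metric)(\Theta,(1-q)\overline{N'}+q\Delta+\cdots\overline{\dummyN})\le p(1-q)+q$ by exactly the coupling constructed in the proof of Proposition~\ref{lem:O_go_to_N_sim_M}; there is no strengthened pointwise claim, no induction on $\tau$-depth, and no auxiliary quasimetric. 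If you want to keep your architecture, you would have to either prove that $\metric$ is a fixed point of the associated Kantorovich functional (so that a one-step verification at level $r$ directly yields $\metric(M,O)\le r$), or build the candidate $d$ as a metric closure of your modified assignment and re-check the transfer condition for it — both of which are substantial additions that your proposal currently leaves open.
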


In order to understand the intuition behind our weak simulation with tolerance, we report here a few simple \emph{algebraic laws} (recall that $1{\colon}P=P$).
\begin{prop}[Simple algebraic laws] \hfill 
\label{prop:first-laws}   
\begin{enumerate}
\item  \label{random}
$\node {\mu} n {P} 
\: \simtol{1-p} \: 
\node {\mu} n {\tau.(P \oplus_p Q)}$

\item 
$\node \mu n Q 
\, \simtol{r} \, 
\node \mu n {\tau.(\tau.(P \oplus_q Q) \oplus_p R)}$, with $r=(1-p) + pq$

\item  
\label{random3} 
$
\node \mu n {   \tau.(\bigoplus_{i \in I, \, j \in J}\,  (p_i\, q_j) {\colon}  P_j^i  ) }
\, \simtol{0} \, 
\node \mu n {   \tau.(\bigoplus_{i \in I} p_i {\colon}   \tau.(\bigoplus_{j \in J} q_j {\colon} P^i_j)  }
$ 
\item 
$\node \mu n {\bcast{v}{\big(  \tau.(P \oplus_q \tau.P)  \oplus_{p} Q\big)}}
\: \simeq_{0} \: 
\node \mu n {\bcast {v}{(P \,  \oplus_{p} \, \tau.Q)}}
$
\item\label{doppiobcst}
$\node \mu n {\bcast{v}{\bcastzero{w}}} 
\: \simtol{r} \: 
\node \mu n {\tau.\big( \bcast{v}{\tau.(\bcastzero{w} \oplus_q P)} \: \oplus_p \: Q\big)}
$, with $r=1-pq$.
\end{enumerate}
\end{prop}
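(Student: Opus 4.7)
The plan is to prove each of the five laws by specialising the simulation machinery already in place. The main levers are Propositions~\ref{M_go_to_N} and~\ref{lem:O_go_to_N_sim_M}, which convert a weak $\hat\tau$-step of the form $O\TransSim{\hat\tau}(1-q)\overline N + q\Delta$ into a tolerance bound; for the claims with tolerance~$0$ I will also appeal directly to Definition~\ref{def:simulation_quasimetric}, by exhibiting an explicit weak simulation quasimetric and then invoking the minimality of $\metric$ from Theorem~\ref{thm:exists_metric}. Reflexivity of $\metric$ and transitivity (Proposition~\ref{prop:transitivity}) are used throughout.

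Laws~(1) and~(2) are immediate instances of Proposition~\ref{M_go_to_N}. For~(1), the RHS performs a single $\tau$-transition to $p\,\overline{\node\mu n P} + (1-p)\,\overline{\node\mu n Q}$, which is already in the canonical shape $(1-(1-p))\,\overline{\node\mu n P} + (1-p)\cdot\overline{\node\mu n Q}$, yielding the tolerance $1-p$. For~(2), the two consecutive $\tau$ steps reach $pq\,\overline{\node\mu n P} + p(1-q)\,\overline{\node\mu n Q} + (1-p)\,\overline{\node\mu n R}$; since $p(1-q) = 1-r$, rewriting the complement as $r\cdot\Delta$ closes the case.

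Law~(3) is a $\simtol{0}$ claim. I observe that both sides $\TransSim{\hat\tau}$-reduce to the common distribution $\Delta_\ast = \sum_{i,j} p_i q_j\,\overline{\node\mu n{P^i_j}}$, the RHS employing a second layer of $\tau$'s to unfold each $\tau.(\bigoplus_{j} q_j{\colon}P^i_j)$ component. Hence the simulation game for the only initial LHS transition is closed by the identity matching on $\Delta_\ast$ at Kantorovich cost~$0$; this is formalised by defining a weak simulation quasimetric that assigns distance~$0$ to the pair and to all reflexive pairs, distance~$1$ otherwise, and then appealing to Theorem~\ref{thm:exists_metric}. Law~(4) is a mutual $\simtol{0}$: both sides share the initial broadcast of $v$, and after this action every Dirac pairing reduces to a variant of ``$\tau.R$ and $R$ are at distance~$0$'', which holds in either direction because $\node\mu n{\tau.R}\TransSim{\hat\tau}\overline{\node\mu n R}$ (use Proposition~\ref{M_go_to_N} with $q=0$) and, conversely, $\node\mu n R$ can stutter. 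The only subtle pairing is $(\node\mu n{\tau.(P\oplus_q\tau.P)},\,\node\mu n P)$: the left-hand component $\TransSim{\hat\tau}$-reduces to $\overline{\node\mu n P}$ via an outer $\tau$ followed by the $\tau.P$-branch's $\tau$, so again Proposition~\ref{M_go_to_N} with $q=0$ applies.

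Law~(5) is the delicate one, since the tolerance $1-pq$ combines two independent sources of mismatch. My plan is a two-stage argument. First, Law~(1) gives $\node\mu n{\bcastzero{w}}\simtol{1-q}\node\mu n{\tau.(\bcastzero{w}\oplus_q P)}$, and I propagate this distance through the outer broadcast prefix $\bcast{v}{\cdot}$ to obtain
\[
\node\mu n{\bcast{v}{\bcastzero{w}}}\;\simtol{1-q}\;\node\mu n{\bcast{v}{\tau.(\bcastzero{w}\oplus_q P)}}.
\]
Second, the RHS of~(5) satisfies $O\TransSim{\hat\tau}\, p\,\overline N + (1-p)\,\overline{\node\mu n Q}$, with $N$ the network just displayed; Proposition~\ref{lem:O_go_to_N_sim_M}, applied with its hypothesis tolerance instantiated at $1-q$ and its residual probability at $1-p$, then delivers the bound $(1-q)p + (1-p) = 1-pq$. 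The main obstacle will be the first stage: since the paper so far only establishes non-expansiveness for parallel composition, the propagation through the broadcast prefix has to be checked from scratch, directly from Definition~\ref{def:simulation_quasimetric}, by noting that the only initial transition of each outer network is the shared broadcast action yielding Dirac distributions on the two continuations, whose distance is exactly the inductive input.
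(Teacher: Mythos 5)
Your treatment of laws (1), (2) and (5) is sound and, for (5), it is essentially the paper's own route: the paper proves only law (5), by exactly your two-stage argument --- first $\node \mu n {\bcast{v}{\bcastzero{w}}} \simtol{1-q} \node \mu n {\bcast{v}{\tau.(\bcastzero{w}\oplus_q P)}}$, then the $\tau$-step of the right-hand side to $p\cdot\overline{\node \mu n {\bcast{v}{\tau.(\bcastzero{w}\oplus_q P)}}}+(1-p)\cdot\overline{\node\mu n Q}$ followed by Proposition~\ref{lem:O_go_to_N_sim_M}, giving $(1-q)p+(1-p)=1-pq$ --- while laws (1)--(4) are left as ``straightforward''. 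Your instantiations of Proposition~\ref{M_go_to_N} for (1) and (2) (note $p(1-q)=1-r$) and your explicit candidate quasimetric plus minimality of $\metric$ for (3) are correct ways of filling that gap; for (3), just remember that the candidate must also match the $\rcva m v$-moves, which both sides do by stuttering via \rulename{RcvEnb}, so this is harmless.

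The genuine gap is in law (4). Your reduction rests on the claim that ``$\tau.R$ and $R$ are at distance $0$ in either direction''. Proposition~\ref{M_go_to_N} yields only one direction, $\node\mu n R \simtol{0} \node\mu n{\tau.R}$, and ``$\node\mu n R$ can stutter'' matches only the $\tau$-transition of $\node\mu n{\tau.R}$: it does not match its input transitions. By \rulename{RcvEnb} the $\tau$-prefixed node ignores any $\rcva m v$, whereas if $R$ is a receiver $\rcvtime x C D$ and $m\in\mu$, rule \rulename{Rcv} forces $\node\mu n R$ to consume the message and move to $\sem{\node\mu n{{\subst v x}C}}$, and $\metric(\node\mu n{\tau.R},\node\mu n{{\subst v x}C})$ need not be $0$ (take ${\subst v x}C$ a sender: it cannot match the $\sigma$-move available to the $\tau$-derivative of $\node\mu n{\tau.R}$). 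The same problem hits the pairing you single out: from $\node\mu n{\tau.(P\oplus_q\tau.P)}\TransSim{\hat\tau}\overline{\node\mu n P}$, Proposition~\ref{M_go_to_N} gives $\node\mu n P\simtol{0}\node\mu n{\tau.(P\oplus_q\tau.P)}$, which serves the direction RHS $\sqsubseteq_0$ LHS, but the direction LHS $\sqsubseteq_0$ RHS needs the reverse fact $\node\mu n{\tau.(P\oplus_q\tau.P)}\simtol{0}\node\mu n P$ (and likewise $\node\mu n{\tau.Q}\simtol{0}\node\mu n Q$), which your tools do not provide and which requires an explicit analysis of the $\rcva m v$- and $\sigma$-moves when $P$ (resp.\ $Q$) is a receiver --- compare the side conditions the paper imposes in Proposition~\ref{thm:timing}(3). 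So for law (4) you must either supply a direct simulation-game argument covering reception-enabling and time actions, or restrict the shape of $P,Q$; the paper offers no proof of (4) to fall back on.
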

The first law is straightforward.
The second law is a generalisation of  the first one where the right-hand side must resolve two probabilistic choices in order to simulate the left-hand side.
 The third law is an adaptation of the $\tau$-law $\tau.P =P$ of CCS~\cite{Mil89} in a distributed and probabilistic setting.  Similarly, 
the fourth law  reminds a probabilistic and distributed variant of the  $\tau$-law $a.(\tau.(P+\tau.Q))+a.Q=a.(P+\tau.Q)$. 
The last law gives an example of a probabilistic simulation involving sequences of actions. Here, $pq$ is the probability that the network on the right-hand side simulates the network on the left-hand side by transmitting both $u$ and $v$.
In fact, we have: 
\[
\node \mu n {\tau.( \bcast{v}{\tau.(\bcastzero{w} \oplus_q P)} \oplus_p Q)} \;  \transSim{\tau} \; p \cdot \overline{\node \mu n {\bcast{v}{\tau.(\bcastzero {w} \oplus_q P)}}} + (1-p) \cdot \overline{\node \mu n Q}
\]
with $\node \mu n {\bcast{v}{\bcastzero{w}}} \simtol{1-q}   \node \mu n {\bcast{v}{\tau.(\bcastzero {w} \oplus_q P)}}$;  the result   follows by an application of Proposition~\ref{lem:O_go_to_N_sim_M}.

A crucial property of our  simulation 
is the possibility to reason on parallel networks 
in a \emph{compositional} manner. 
Thus, if  $M_1 \simtol{p_1} N_1$ and $M_2 \simtol{p_2} N_2$ then $M_1 \mid M_2 \simtol{p} N_1 \mid N_2$ for some $p$ depending on $p_1$ and $p_2$; the intuition being that if one fixes the maximal  tolerance $p$ between $M_1 \mid M_2$ and $N_1 \mid N_2$, then there are tolerances $p_i$ between $M_i$ and $N_i$, $i \in \{ 1, 2 \}$, ensuring that the tolerance $p$ is respected.
Following this intuition, several compositional criteria for bisimulation metrics can be found in the literature.
Here, we show that our weak simulation with tolerance matches one of the most restrictive among those studied in \cite{GLT16,GT18}, namely \emph{non-expansiveness}~\cite{DJGP02,DGJP04}  (also known as 1-non-extensiveness~\cite{BBLM13b}) requiring that $p \le p_1+p_2$.

\begin{thm}[Non-expansiveness law]
\label{thm:non_expansiveness}
\label{cor:non_expansiveness}
$M_1 \simtol{p_1} N_1$ and $M_2 \simtol{p_2} N_2$ entails $M_1 \mid M_2 \simtol{r} N_1 \mid N_2$, with $r=\min(1,p+q)$. 
\end{thm}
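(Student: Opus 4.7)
The plan is to exploit the minimality of $\metric$ (Theorem~\ref{thm:exists_metric}) together with the additive transitivity of $\simtol{\cdot}$ (Proposition~\ref{prop:transitivity}). The first move is to reduce to the \emph{monocomponent} version of the claim, namely:
\[
 \textbf{Lemma.} \quad \text{If } M \simtol{p} N \text{ and } O \text{ is such that } M\mid O, N\mid O \text{ are well-formed, then } M\mid O \,\simtol{p}\, N\mid O.
\]
Granting the lemma, two applications give $M_1\mid M_2 \simtol{p_1} N_1\mid M_2 \simtol{p_2} N_1\mid N_2$, and Proposition~\ref{prop:transitivity} yields $M_1\mid M_2 \simtol{\min(1,p_1+p_2)} N_1\mid N_2$, as required.

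For the monocomponent lemma, I would construct a candidate pseudoquasimetric $d\colon \cname \times \cname \to [0,1]$ tailored to the parallel operator, defined essentially as
\[
 d(P,Q) \,\deff\, \min\Bigl(\metric(P,Q),\; \inf \bigl\{\metric(M,N)\, \colon\, P\equiv M\mid O,\; Q\equiv N\mid O\bigr\}\Bigr),
\]
so that $d\le \metric$ pointwise and $d(M\mid O,N\mid O)\le \metric(M,N)$ for every legal decomposition. One first checks that $d$ is a $1$-bounded pseudoquasimetric: reflexivity is immediate by taking $M=N=\zero$ and $O=P$ in the infimum, while the triangle inequality is inherited from $\metric$ once the decompositions are aligned along a common ``context'' component (one uses that when two decompositions share the same $O$, the triangle inequality for $\metric$ transports directly; otherwise the bound $d\le\metric$ suffices). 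By the minimality theorem it then suffices to prove that $d$ is itself a weak simulation quasimetric, for this forces $\metric(M\mid O,N\mid O)\le d(M\mid O,N\mid O)\le \metric(M,N) \le p$.

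The simulation game for $d$ is established by a case analysis on the SOS rule used for $M\mid O \transSim{\alpha}\Delta$. In the \emph{non-interactive} cases (rules \rulename{TauPar}, \rulename{$\sigma$-Par}, \rulename{ShhSnd} fired on the $M$-side, and symmetrically on the $O$-side) one uses the witnessing weak transition $N\TransSim{\hat{\alpha}}\Theta_N$ given by $M\simtol{p}N$ and extends it to $N\mid O \TransSim{\hat{\alpha}} \Theta_N\mid\overline{O}$ by rule-replay, using the side conditions of \rulename{TauPar} and the patience/time-determinism properties (Propositions~\ref{prop:MaxProg}--\ref{prop:pat}) to ensure that the $O$-component does not obstruct the replay. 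The optimal matching $\omega$ witnessing $\Kantorovich(\metric)(\Delta_M,\Theta_N+(1{-}\size{\Theta_N})\overline{\dummyN})\le p$ then lifts to a matching $\omega'$ defined by $\omega'(M'\mid O,N'\mid O)=\omega(M',N')$ and $\omega'(M'\mid O,\dummyN)=\omega(M',\dummyN)$; its cost is bounded termwise by the cost of $\omega$, using $d(M'\mid O,N'\mid O)\le\metric(M',N')$ and the fact that $\metric(M',\dummyN)=1$ (since $\dummyN$ performs no actions) already upper-bounds $d(M'\mid O,\dummyN)$.

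The main obstacle, and the real technical heart of the argument, is the \emph{interactive} case of rule \rulename{Bcast} (and its symmetric counterpart with \rulename{RcvPar}), where $M$ broadcasts to $O$ or vice versa. Here the simulating weak transition from $N$ must itself factor as $N\TransSim{\hat\tau}\cdot\transSim{m!v\vartriangleright\mu}\cdot\TransSim{\hat\tau}$, and each of the internal $\tau$-phases must be threaded through the product without disturbing $O$; one needs to show that rules \rulename{ShhSnd} and \rulename{TauPar} are jointly applicable throughout the sequence and that the $\dummyN$-padding introduced by sub-distributions in the middle of the sequence does not contaminate $O$'s behaviour (the side condition $N\not\equiv\dummyN\mid N'$ in \rulename{TauPar} is used crucially). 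The probabilistic derivative on the $O$-side is deterministic in the sense that each outcome of $N$'s broadcast induces the same reception by $O$, which produces a distribution of the form $\Theta_N\mid\Theta_O$ with $\Theta_O$ independent of the $N$-outcome; this lets one build the lifted matching as a product coupling $\omega\otimes\Theta_O$ whose Kantorovich cost against $\Delta_M\mid\Theta_O$ equals that of $\omega$, again bounded by $p$. Putting the cases together shows $d$ is a weak simulation quasimetric, completing the proof of the lemma and thereby of the theorem.
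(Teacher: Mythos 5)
Your proposal follows essentially the same route as the paper: reduce to the single-context lemma ($M \simtol{p} N$ implies $M \mid O \simtol{p} N \mid O$, and symmetrically), apply it twice and conclude by additive transitivity (Proposition~\ref{prop:transitivity}); the lemma itself is proved, as in the paper, by exhibiting a candidate pseudoquasimetric that agrees with $\metric$ on the components, checking the weak-simulation transfer condition by a case analysis on the SOS rule used for $M \mid O \transSim{\alpha} \Delta$ with a lifted matching $\omega'(M' \mid O, N' \mid O) = \omega(M',N')$, and invoking the minimality of $\metric$ (Theorem~\ref{thm:exists_metric}). Your candidate (a min/inf over decompositions) and your fuller discussion of the \rulename{Bcast} case are refinements of, not departures from, the paper's argument, which defines $d'(M \mid O, N \mid O) = \metric(M,N)$ directly and treats only the \rulename{TauPar} case explicitly.
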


Another useful property is that a network can be simulated by means 
of a probabilistic choice whenever it is simulated by all components.
\begin{prop}[Additive law]
\label{prop:propagation-general}
\label{comp_failed_in_sim}
Let $M \sqsubseteq_{s_i} \node \mu n {  P_i }  | N$, for 
 all  $i \in I$, with $I$ a finite index set. Then,   
$M \sqsubseteq_{r}  \node \mu n  {\tau.\bigoplus_{i \in I} p_i {\colon} P_i } \mid N$, 
for $r=\sum_{i \in I} p_i s_i$. 
\end{prop}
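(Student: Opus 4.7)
The plan is to verify the simulation condition of Definition~\ref{def:simulation_quasimetric} at the pair $(M, O)$ with tolerance $r$, where $O := \node \mu n {\tau.\bigoplus_{i \in I} p_i{\colon}P_i} \,|\, N$; combined with the minimality of $\metric$ from Theorem~\ref{thm:exists_metric}, this will give $\metric(M, O) \leq r$, equivalently $M \sqsubseteq_r O$. The key structural fact is that by rules \rulename{Tau} and \rulename{TauPar} together with the distribution semantics of $\bigoplus$, $O$ admits the single $\tau$-step $O \transSim{\tau} \sum_{i \in I} p_i \, \overline{N_i}$, where $N_i := \node \mu n {P_i} \,|\, N$; hence every weak transition from $O$ may be prefaced by this $\tau$-step.

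Given any challenge $M \transSim{\alpha} \Delta$, the hypothesis $M \sqsubseteq_{s_i} N_i$ supplies, for each $i \in I$, a weak transition $N_i \TransSim{\hat{\alpha}} \Theta_i$ with $\Kantorovich(\metric)(\Delta, \Theta_i + (1{-}\size{\Theta_i})\overline{\dummyN}) \leq s_i$. Concatenating the initial $\tau$-step of $O$ with these componentwise weak $\hat{\alpha}$-transitions of the Dirac summands of $\sum_{i} p_i \, \overline{N_i}$ produces $O \TransSim{\hat{\alpha}} \Theta$ with $\Theta := \sum_{i \in I} p_i \, \Theta_i$, and a direct calculation yields $\Theta + (1{-}\size{\Theta})\overline{\dummyN} = \sum_{i \in I} p_i \bigl(\Theta_i + (1{-}\size{\Theta_i})\overline{\dummyN}\bigr)$, expressing the padded simulating distribution as a convex combination of the padded per-$i$ ones.

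For the Kantorovich bound, pick an optimal matching $\omega_i \in \Omega\bigl(\Delta, \Theta_i + (1{-}\size{\Theta_i})\overline{\dummyN}\bigr)$ (Definition~\ref{def_matching}) realising $s_i$. The convex combination $\omega := \sum_{i \in I} p_i \, \omega_i$ is then a valid matching for $\bigl(\Delta, \sum_i p_i(\Theta_i + (1{-}\size{\Theta_i})\overline{\dummyN})\bigr)$, since its left marginal is $\sum_i p_i \Delta = \Delta$ and its right marginal coincides with the target; its transportation cost is at most $\sum_{i \in I} p_i s_i = r$, hence $\Kantorovich(\metric)(\Delta, \Theta + (1{-}\size{\Theta})\overline{\dummyN}) \leq r$. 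Since this holds for every challenge $M \transSim{\alpha} \Delta$, the simulation condition at $(M, O)$ holds with tolerance $r$, and a standard fixed-point argument based on Theorem~\ref{thm:exists_metric} yields $\metric(M, O) \leq r$. The main technical hurdle is justifying the composite weak transition $O \TransSim{\hat{\alpha}} \Theta$ at the sub-distribution level, interleaving a strong $\tau$-step of $O$ with the componentwise weak $\hat{\alpha}$-transitions of each $N_i$; this is the natural $|I|$-ary generalisation of the composition already used to prove Proposition~\ref{lem:O_go_to_N_sim_M}.
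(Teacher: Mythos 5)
Your proposal is correct and follows essentially the same route as the paper's proof: the simulating network first resolves the probabilistic choice via a $\tau$-step, each summand $\node \mu n {P_i} \mid N$ then performs the weak $\hat{\alpha}$-transition guaranteed by $M \sqsubseteq_{s_i} \node \mu n {P_i} \mid N$, and the Kantorovich bound is obtained by taking the convex combination $\sum_{i \in I} p_i\,\omega_i$ of the optimal matchings and checking its marginals and cost. The only cosmetic difference is that you make explicit the identity $\Theta + (1-\size{\Theta})\overline{\dummyN} = \sum_{i} p_i(\Theta_i + (1-\size{\Theta_i})\overline{\dummyN})$ and the appeal to the minimality of $\metric$, both of which the paper leaves implicit.
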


Finally, we report a number of algebraic laws that will be useful in the next section, when analysing gossip protocols. 
\begin{prop}[Further algebraic laws] 
\label{thm:timing} \
\begin{enumerate}
\item
\label{timing1}
$\node \mu n {\sigma^k.\nil}  \q \simeq_0  \q \node \mu n \nil$
\item
\label{timing2}
$\prod_{i \in I}\node {\mu_{i}} {m_i} {P_i}  
\simeq_{r} 
\prod_{j \in J}\node {\nu_{j}} {n_j} {Q_j}$ entails 
$\prod_{i \in I}\node {\mu_{i}} {m_i} {\sigma.P_i}  
\simeq_{r} 
\prod_{j \in J}\node {\nu_{j}} {n_j} {\sigma.Q_j}$  
\item
\label{timing34}
$\node \mu n {\rcvtime x C D}  \;   \simeq_0 \;  
  \node \mu n {\sigma.D}$, 
if nodes in $\mu$ do not send in the current time interval
\item
\label{timing5} 
$\node \mu n {\rcv x C} 
\; \simeq_0 \; 
\node \mu n {\sigma.\rcv x C}$, 
if nodes in $\mu$ do not send in the current time interval
\item
\label{law:no-trans}
$
\node {\mu} m {\nil} | \prod_{i \in I} \node {\mu_{i}} {n_i} {P_i} 
\; \simtol{0} \;
\node {\mu} m {\tau.(\bcastzero{v} \oplus_p \nil)} | \prod_{i \in I} \node {\mu_{i}} {n_i} {P_i}
$
if $\mu \subseteq  \bigcup_{i \in I}n_i$,
and for all $n_i \in \mu$  it holds that $P_i\neq \rcvtime x C D$.
\end{enumerate}
\end{prop}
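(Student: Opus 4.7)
The plan is to dispatch the five laws separately, using weak probabilistic simulations for the tolerance-$0$ cases and Proposition~\ref{M_go_to_N} for law~(\ref{law:no-trans}). For each item, I would first identify all the transitions enabled on both sides (up to structural congruence and the antecedent hypotheses), and then either exhibit an explicit candidate relation or an explicit weak $\hat\tau$-derivation that witnesses the claim.

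For (\ref{timing1}), I take $\mathcal{R}=\{(\node \mu n {\sigma^i.\nil},\node \mu n \nil):0\le i\le k\}$ together with its converse; the only non-idle transitions are $\sigma$-transitions (by \rulename{Sleep} and \rulename{$\sigma$-nil}), which decrement $i$ or keep $\nil$ fixed, and receive-enabling self-loops $\rcva m v$ (by \rulename{RcvEnb}, since no $\sigma^i.\nil$ is a receiver), so $\mathcal{R}$ is closed under matching and is a weak probabilistic simulation in both directions. For (\ref{timing2}), I observe that $\prod_i\node{\mu_i}{m_i}{\sigma.P_i}$ and $\prod_j\node{\nu_j}{n_j}{\sigma.Q_j}$ enable only a synchronous $\sigma$-step (by \rulename{$\sigma$-Par}) and global receive-enabling self-loops; I would therefore extend a weak simulation quasimetric witnessing the assumed $\metric(\prod_i P_i,\prod_j Q_j)\le r$ by setting the distance between the two $\sigma$-prefixed networks to $r$ (and symmetrically). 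The obvious matching that sends the unique $\sigma$-step on the left to the unique $\sigma$-step on the right has Kantorovich cost $\metric(\prod_i P_i,\prod_j Q_j)\le r$, and the self-loops match trivially; Theorem~\ref{thm:exists_metric} then gives the desired bound.

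For (\ref{timing34}), the side-condition ``nodes in $\mu$ do not send in the current time interval'' allows me to restrict attention to reception labels $\rcva m v$ with $m\notin\mu$, so that \rulename{RcvEnb} applies on both sides and yields self-loops. The only remaining observable transition on each side is a $\sigma$-step: \rulename{Timeout} gives $\node \mu n {\rcvtime x C D}\transSim{\sigma}\overline{\node \mu n D}$ and \rulename{Sleep} gives $\node \mu n {\sigma.D}\transSim{\sigma}\overline{\node \mu n D}$. Since the two derivative distributions coincide, the singleton-extended relation is a weak probabilistic simulation in both directions. Law~(\ref{timing5}) is then an immediate consequence of (\ref{timing34}) via the unfolding $\rcv x C\equiv\rcvtime x C{\rcv x C}$ (Definition~\ref{def:struc-cong}), instantiating $D:=\rcv x C$.

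For (\ref{law:no-trans}), the only asymmetric law, I would build a weak $\hat\tau$-path from the RHS to the LHS and invoke Proposition~\ref{M_go_to_N}. Writing $M_L$ and $M_R$ for the two sides, rule \rulename{Tau} (together with \rulename{TauPar}) gives $M_R\transSim{\tau}p\cdot\overline{\node \mu m {\bcastzero v}\mid\prod_i\node{\mu_i}{n_i}{P_i}}+(1{-}p)\cdot\overline{M_L}$. On the $p$-branch, rules \rulename{Snd}, \rulename{Bcast} and \rulename{ShhSnd} produce a further $\tau$-step to $\overline{M_L}$: the broadcast label's destination set collapses to $\emptyset$ because $\mu\subseteq\bigcup_i\{n_i\}$, and each $n_i\in\mu$ yields a receive-enabling self-loop since by assumption no such $P_i$ is a receiver. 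Concatenating the two $\tau$-steps yields $M_R\TransSim{\hat\tau}\overline{M_L}$, and Proposition~\ref{M_go_to_N} with $q=0$ delivers $M_L\simtol 0 M_R$. The main obstacle I foresee is precisely this careful bookkeeping in~(\ref{law:no-trans}) — checking that the broadcast destination set really collapses so that \rulename{ShhSnd} fires, and that the parallel context is left pointwise unchanged so that the derivative is exactly $\overline{M_L}$; a secondary subtlety is making the side-condition in~(\ref{timing34}) and~(\ref{timing5}) formal, which I read as restricting attention to reception labels $\rcva m v$ with $m\notin\mu$, as is customary in contextual algebraic reasoning.
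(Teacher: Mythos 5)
Your proposal is correct and follows essentially the same route as the paper: items (1)--(4) are handled by matching the unique $\sigma$-step (with receive-enabling self-loops matched trivially and (4) obtained from (3) via recursion unfolding), and item (5) by building the weak $\hat\tau$-derivation through \rulename{Tau}, \rulename{Snd}, \rulename{RcvEnb}, \rulename{Bcast} and \rulename{ShhSnd} and then invoking Proposition~\ref{M_go_to_N}. The only cosmetic slip is in (3), where the common $\sigma$-derivative is the distribution $\sem{\node \mu n D}$ rather than a Dirac, which does not affect the argument since both sides reach the same distribution.
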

Please, notice that all laws but last one hold with respect to weak bisimilarity; the last one holds with respect to weak similarity.
Intuitively:
\begin{inparaenum}[(1)]
\item 
$\nil$ does not prevent time passing;
\item 
equalities are preserved underneath $\sigma$ prefixes;
\item 
\label{spiega_laws_tre}
receivers timeout if there are not senders around; 
\item 
 this law is an instance of (\ref{spiega_laws_tre}) because $\rcv x C$ is an 
abbreviation for $\fix X \timeout{\rcv x C}{(1{:}X)} \equiv   \timeout{\rcv x C}(1{:}{\rcv x C})=  \timeout{\rcv x C}({\rcv x C})$ ($1{:}P$ is abbreviated with $P$);
\item
 broadcasting evolves silently if there are no receivers in the neighbourhood. 
\end{inparaenum}

%%%%%%%%%%%%%%%%%%%%%%%%%%%%%%%%%%%%%

\section{Gossiping without collisions}

\label{sec:gossip-nocollisions}
The baseline model for our case study is gossiping without communication 
collisions, where all nodes  are perfectly synchronised (relying on some
clock synchronisation algorithm for WSNs). %%~\cite{SBK05}).
For the sake of clarity, communication proceeds in synchronous \emph{rounds}: 
a node can transmit or receive only one message per round.
In our implementation, rounds are separated by $\sigma$-actions.

The processes involved in the protocol are the following:
\[
\mathsf{snd} \langle u \rangle_{p_{\mathrm{g}}}
\!\!\deff \! \!\tau.(\bcastzero u \oplus_{p_{\mathrm{g}}} \nil)
\quad \quad
\fwd_{p_{\mathrm{g}}} \!\!\deff \!\!{\rcv x { \mathsf{resnd} \langle x \rangle_{p_{\mathrm{g}}}}}
\quad \quad 
\mathsf{resnd} \langle u \rangle_{p_{\mathrm{g}}}\!\!\deff \!\!\sigma.\mathsf{snd} \langle u \rangle_{p_{\mathrm{g}}} \, . 
\]
A sender  broadcasts with a gossip probability $p_{\mathrm{g}}$,
whereas a forwarder  rebroadcasts the received value, in the subsequent round, with the same probability.

We apply our simulation theory to develop algebraic reasonings
 on \emph{message propagation\/}.
As an introductory example, let us consider a fragment of a network with two
sender nodes, $m_1$ and $m_2$, and two forwarder nodes, $n_1$ and $n_2$ which 
are both neighbours of $m_1$ and $m_2$. 
Then, assuming $\nu = \{n_1,n_2\}$ and $\{m_1,m_2\} \subseteq \nu_1 \cap \nu_2$, the following algebraic law holds: 

\begin{equation}
\label{eq:2s2rNoColl}
\begin{array}{c}
\node {\nu} {m_1} {\sndv_{p_1}} \q \big| \q \node {\nu} {m_2} {\sndv_{p_2}} \q \big| \q \node {\nu_1} {n_1} {\fwd_{q_1}} \q \big| \q \node {\nu_2} {n_2} {\fwd_{q_2}} \\[3pt] 
\simtoll{r}  \\[3pt]
\node {\nu} {m_1} {\nil} \q \big| \q \node {\nu} {m_2} {\nil} \q \big| \q \node {\nu_1} {n_1} {\wsndv_{q_1}} \q \big| \q \node {\nu_2} {n_2} {\wsndv_{q_2}} 
\end{array}
\end{equation}

with tolerance $r=(1-p_1)(1-p_2)$. Here, the network on the left-hand-side evolves to the Dirac distribution of the network on the right-hand-side by performing a sequence of $\tau$-actions. 
Thus, the algebraic law follows by an application of Proposition~\ref{M_go_to_N} %%
being $1-r$ the probability that the message $v$ is broadcast to both 
forwarder nodes $n_1$ and $n_2$.
To explain in detail how the sequence of $\tau$-actions are derived, we first observe that both senders $m_1$ and $m_2$ start with the $\tau$-action resolving the probabilistic choice on the transmission. Thus, by an application of rule \rulename{Tau} we can derive the transitions
${\textstyle \node {\nu} {m_i} {\sndv_{p_i}} \transSim{\tau} p_i  \cdot \overline{\node {\nu} {m_i} { \bcastzero v } } + (1-p_i) \cdot \overline{\node {\nu} {m_i} { \nil }}}$, for $i \in \{ 1,2 \}$, which are propagated to the whole network by means of the rule \rulename{TauPar}. At this point, when composing these two transmissions,  there are three possible cases.
\begin{itemize}
\item
With probability $p_1 \cdot p_2$ both senders $m_1$ and $m_2$ decide to broadcast (the order of transmission is not relevant). This means that with probability $p_1 \cdot p_2$ the execution of the two $\tau$-transitions above leads to the distribution  
${\textstyle  \overline{\node {\nu} {m_1} {\bcastzero v}   |   \node {\nu} {m_2} {\bcastzero v}   | \node {\nu_1} {n_1} {\fwd_{q_1}}  |  \node {\nu_2} {n_2} {\fwd_{q_2}}}}$. 
Now, if we suppose that $m_1$ transmits  first, then  we have the transition:  
\[
\begin{array}{c}
\node {\nu} {m_1} {\bcastzero v} \; | \;  \node {\nu} {m_2} {\bcastzero v} \; | \;  \node {\nu_1} {n_1} {\fwd_{q_1}} \; | \;  \node {\nu_2} {n_2} {\fwd_{q_2}}
\\[4pt] 
\transSim{\tau} \\[4pt]
\overline{\node {\nu} {m_1} {\nil} \; | \; \node {\nu} {m_2} {\bcastzero v} \; | \; \node {\nu_1} {n_1} {\wsndv_{q_1}} \; | \; \node {\nu_2} {n_2} {\wsndv_{q_2}} 
 }  . 
\end{array}
\]
\enlargethispage{.3\baselineskip}
When also node $m_2$ broadcasts its value  $v$ then we will have:
\[
\begin{array}{c}
\node {\nu} {m_1} {\nil} \, | \, \node {\nu} {m_2} {\bcastzero v} \, | \, \node {\nu_1} {n_1} {\wsndv_{q_1}} \, | \, \node {\nu_2} {n_2} {\wsndv_{q_2}} \\[4pt]
\transSim{\tau} \\[4pt]
\overline{ \node {\nu} {m_1} {\nil} \, | \,  \node {\nu} {m_2} {\nil} \,  | \,  \node {\nu_1} {n_1} {\wsndv_{q_1}} \,  | \,  \node {\nu_2} {n_2} {\wsndv_{q_2}} } 
 . 
\end{array}
\]
\item
With probability $p_1 \cdot (1-p_2) + (1-p_1) \cdot p_2$ exactly one of the two senders decides to broadcast. This means that the $\tau$-transitions due to the resolution of the probabilistic choice lead either to the  distribution 
${\textstyle  \overline{\node {\nu} {m_1} {\bcastzero v}  | \node {\nu} {m_2} {\nil} |  \node {\nu_1} {n_1} {\fwd_{q_1}}  |  \node {\nu_2} {n_2} {\fwd_{q_2}}}}$ 
or to the distribution 
$ {\textstyle \overline{\node {\nu} {m_1} {\nil}  |  \node {\nu} {m_2} {\bcastzero v} |  \node {\nu_1} {n_1} {\fwd_{q_1}}  |  \node {\nu_2} {n_2} {\fwd_{q_2}}}}$. 
In the former case, \emph{i.e.}, when $m_1$ transmits,  we have the transition
\[
\begin{array}{c}
\node {\nu} {m_1} {\bcastzero v} \, | \, \node {\nu} {m_2} {\nil} \, | \, \node {\nu_1} {n_1} {\wsndv_{q_1}} \, | \, \node {\nu_2} {n_2} {\wsndv_{q_2}} \\[4pt]
\transSim{\tau} \\[4pt]
\overline{ \node {\nu} {m_1} {\nil} \,  | \,  \node {\nu} {m_2} {\nil} \,  | \,  \node {\nu_1} {n_1} {\wsndv_{q_1}} \, | \, \node {\nu_2} {n_2} {\wsndv_{q_2}}}
\end{array}
\]
while in the latter case we have the transition 
\[
\begin{array}{c}
\node {\nu} {m_1} {\nil} \, | \,  \node {\nu} {m_2} {\bcastzero v} \,  | \,  \node {\nu_1} {n_1} {\wsndv_{q_1}} \,  | \,  \node {\nu_2} {n_2} {\wsndv_{q_2}} 
\\[4pt] 
\transSim{\tau} \\[4pt]
\overline{ \node {\nu} {m_1} {\nil} \,  | \,  \node {\nu} {m_2} {\nil} \,  | \,  \node {\nu_1} {n_1} {\wsndv_{q_1}} \, | \,  \node {\nu_2} {n_2} {\wsndv_{q_2}}}
 \, . 
\end{array}
\]
\item
With probability $(1-p_1)(1-p_2)$ both senders decide not to transmit.
In this case,  it is simply not possible to reach the desired network in which both  nodes $n_1$ and $n_2$ forward the message, as they will never receive anything to forward. 
\end{itemize}
Summarising, the network ${\textstyle \node {\nu} {m_1} {\sndv_{p_1}} |   \node {\nu} {m_2} {\sndv_{p_2}}   |  \node {\nu_1} {n_1} {\fwd_{q_1}}  |   \node {\nu_2} {n_2} {\fwd_{q_2}}}$ has probability $1-(1-p_1)(1-p_2)$ to reach (the Dirac distribution of) $ \node {\nu} {m_1} {\nil}  |  \node {\nu} {m_2} {\nil}   |  \node {\nu_1} {n_1} {\wsndv_{q_1}}   |   \node {\nu_2} {n_2} {\wsndv_{q_2}}$ by the application of  a sequence of $\tau$-transitions. This allows us to apply Proposition~\ref{M_go_to_N} to  derive Equation~\ref{eq:2s2rNoColl}.

However, Equation~\ref{eq:2s2rNoColl}  can be generalised to an arbitrary number of senders and forwarders,  provided that  parallel contexts are unable to receive messages in \nolinebreak the \nolinebreak  current \nolinebreak round.  

\begin{thm}[Message propagation]
\label{thm:propagation}
Let $I$ and $J$ be  disjoint subsets of $\mathbb{N}$. 
Let $M$ be a well-formed network defined as 
\[
M \: \equiv \: N \q \big| \q
\prod_{i\in I} \node {\nu_{m_i}} {m_i} {\sndv_{p_i}} \q \big| \q
\prod_{j \in J } \node {\nu_{n_j}} {n_j} {\fwd_{q_j}}
\]
such that,  for all $i \in I$:
\begin{inparaenum}[(1)]
\item
$\{n_j : j \in J\} \subseteq \nu_{m_i} \subseteq   \nds{M}$; 
\item
 the nodes in $\nu_{m_i} \cap \nds{N}$ cannot receive in the current 
round.
\end{inparaenum}
 Then, 
\[
M  
\; \simtoll{r} \; 
N \q \big| \q
\prod_{i \in I} \node {\nu_{m_i}} {m_i} {\nil} \q \big| \q
 \prod_{j \in J} \node {\nu_{n_j}} {n_j} {\wsndv_{q_j}} 
\]
with $r= \prod_{i \in I}(1-p_i)$. 
\end{thm}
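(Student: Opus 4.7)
The proof plan is to build an explicit weak internal reduction of $M$ and then invoke Proposition~\ref{M_go_to_N}. Abbreviate the right-hand side of the claim by $N_\star$, and let
\[ N_{\mathrm{fail}} \;\deff\; N \,\big|\, \prod_{i\in I}\node{\nu_{m_i}}{m_i}{\nil} \,\big|\, \prod_{j\in J}\node{\nu_{n_j}}{n_j}{\fwd_{q_j}} \]
denote the ``no-propagation'' network, in which every sender has resolved its choice to $\nil$ but nothing has been transmitted. The target is to establish
\[ M \q\TransSim{\hat\tau}\q (1-r)\cdot\overline{N_\star} \;+\; r\cdot\overline{N_{\mathrm{fail}}}, \]
so that Proposition~\ref{M_go_to_N} produces $N_\star \simtol{r} M$, i.e., the desired $\simtoll{r}$ law, exactly as in the two-sender/two-forwarder example opening this section.

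I would construct this reduction in two phases. In the \emph{first phase}, for each $i\in I$ I apply rule \rulename{Tau} to $\sndv_{p_i} = \tau.(\bcastzero v \oplus_{p_i} \nil)$ and lift it to the whole network by iterating \rulename{TauPar}, which is applicable throughout since $M$ contains no $\dummyN$-component. The $|I|$ independent steps can be interleaved in any order and together yield a distribution indexed by the subsets $S\subseteq I$ of \emph{committed} senders, with weight $\prod_{i\in S}p_i\prod_{i\notin S}(1-p_i)$: the sender $m_i$ is in state $\bcastzero v$ iff $i\in S$, while every other node is unchanged.

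In the \emph{second phase}, for each outcome with $S\neq\emptyset$ I schedule the broadcasts of the committed senders one at a time. Combining \rulename{Snd} with \rulename{Bcast} and \rulename{RcvPar} across the entire network, each broadcast from $m_{i_0}\in S$ carries the label $\sndto{m_{i_0}}{v}{\mu}$ with $\mu=\nu_{m_{i_0}}\setminus\nds{M}=\emptyset$, using the hypothesis $\nu_{m_{i_0}}\subseteq\nds{M}$, so that \rulename{ShhSnd} turns the step into a $\tau$. The two premises of the theorem then guarantee that the broadcast touches only the intended nodes: every forwarder $n_j\in\{n_j:j\in J\}\subseteq \nu_{m_{i_0}}$ still in state $\fwd_{q_j}$ applies rule \rulename{Rcv} and moves to $\mathsf{resnd}\langle v\rangle_{q_j}=\wsndv_{q_j}$, while every other node (other senders in states $\bcastzero v$ or $\nil$, already-updated forwarders in the $\sigma$-guarded $\wsndv_{q_j}$, and nodes of $N$ within $\nu_{m_{i_0}}\cap\nds{N}$) either fails the $\rcvrs$ predicate or is excluded by the second premise, so \rulename{RcvEnb} leaves it untouched. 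Iterating over all senders in $S$ therefore reaches $N_\star$; for $S=\emptyset$ no broadcast occurs and the state is $N_{\mathrm{fail}}$. Summing the weights, the probability of reaching $N_\star$ is $\sum_{\emptyset\neq S\subseteq I}\prod_{i\in S}p_i\prod_{i\notin S}(1-p_i)=1-\prod_{i\in I}(1-p_i)=1-r$, exactly as required.

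The main technical obstacle lies in the bookkeeping of the second phase: one must verify that along the chosen interleaving---after some senders have moved to $\nil$ and some forwarders have moved to $\wsndv_{q_j}$---the label emitted by the next broadcast is still $\sndto{m_{i_0}}{v}{\emptyset}$ in the evolving network, so that \rulename{ShhSnd} keeps applying uniformly. This is a one-off case analysis on the $\rcvrs$ predicate, but it must be discharged carefully to ensure that the constructed distribution is the derivative of a genuine weak SOS transition rather than just an informally plausible schedule. Proposition~\ref{prop:MaxProg} (Maximal Progress) plays a silent role in the argument: it rules out any spurious $\sigma$-step sneaking in between the pending $\tau$'s and broadcasts, so that everything indeed fits into a single $\TransSim{\hat\tau}$-phase.
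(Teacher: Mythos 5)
Your proposal is correct and follows essentially the same route as the paper: both arguments construct an explicit weak internal reduction $M \TransSim{\hat{\tau}} (1-r)\,\overline{N_\star} + r\,\Delta$ (using the two hypotheses exactly as you do, to drive \rulename{Rcv}, \rulename{RcvEnb} and \rulename{ShhSnd}) and then conclude by Proposition~\ref{M_go_to_N}. The only cosmetic difference is scheduling: the paper allows a broadcast to fire before all senders have resolved their probabilistic choices and therefore discharges the residual mixed states $O_{I_1,I_2,I_3}$ by an auxiliary inductive lemma, whereas you resolve all choices upfront and avoid that lemma; your appeal to Maximal Progress is unnecessary, since exhibiting one weak $\hat{\tau}$-transition suffices regardless of which other transitions are enabled.
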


As an example, consider a simple gossiping network $\mathrm{GSP}_1$, with gossip probability $p$, composed by two source nodes $s_1$ and $s_2$  and a destination node
 $d$. The network is the following:
\[ 
\mathrm{GSP}_1  \;  \deff \;  
\prod_{i=1}^{2}\node {\nu_{s_i}}  {s_i} {\sndv_{p}} \q \big| \q
\node {\nu_d} d {\fine}
\]
with $\nu_{s_1} = \nu_{s_2} = \{d\}$ and $\nu_d = \{s_1, s_2, \mathit{tester} \}$, 
where ${\mathit{tester}}$ is a fresh node of the observer to test successful gossiping. For simplicity, here and in the rest of the paper, the node $\mathit{tester}$  can receive messages but it cannot transmit.

We would like to estimate the distance between  $\mathrm{GSP}_1$ and a network  $\mathrm{DONE}_1$  in which the message $v$  has been delivered to the destination node $d$: 
\[
\mathrm{DONE}_1   \;  \deff \; 
\prod_{i=1}^{2}\node {\nu_{s_i}}  {s_i} {\nil} \q \big| \q 
 \node {\nu_d} d {\wsndv_1}   .
\]
By an application of  Theorem~\ref{thm:propagation} it follows that  
$\mathrm{GSP}_1  \simtoll{ (1-p)^2 } \,   \mathrm{DONE}_1$. 
Thus, the network $\mathrm{GSP}_1$ succeeds in  delivering the message to the destination $d$  with  probability (at least) $1-(1-p)^2= 2p -p^2$.

Actually, Theorem~\ref{thm:propagation} represents an effective tool to deal with message propagation in gossip networks. However, it  requires that  all forwarders $n_j$  should be in the neighbouring of all senders $m_i$ (constraint $\{n_j : j \in J\} \subseteq \nu_{m_i} $), which  may represent a limitation in many cases. Consider, for example, a  gossiping network $\mathrm{GSP}_2$, with gossip probability $p$, composed by two source nodes $s_1$ and $s_2$, a destination node $d$, and three intermediate nodes $n_1$, $n_2$, and $n_3$: 
\[
\mathrm{GSP}_2 \;  \deff \;  
\prod_{i=1}^{2}\node {\nu_{s_i}}  {s_i} {\sndv_{p}} \q  \big| \q
\prod_{i=1}^{3} \node {\nu_{n_i}}  {n_i} {\fwd_{p}} \q  \big| \q 
\node {\nu_d} d {\fine}
\]
where 
$\nu_{s_1} = \{ n_1\}$, $\nu_{s_2} = \{ n_1 , n_2 \}$,
$\nu_{n_1}{=} \{ s_1, s_2 , n_3\}$, $\nu_{n_2}{=}\{ s_2, n_3 \}$,
$\nu_{n_3} = \{ n_1 , n_2,d \}$ and $\nu_d = \{ n_3 , \mathit{tester} \}$.

Here, we would like to estimate the distance between  $\mathrm{GSP}_2$, and a network  $\mathrm{DONE}_2$,  in which  the message $v$ propagated somehow up to the  destination  node  $d$: 
\[ 
\mathrm{DONE}_2 \; \deff \; 
\prod_{i=1}^{2}\node {\nu_{s_i}}  {s_i} {\nil} \q \big| \q
\prod^{3}_{i=1} \node {\nu_{n_i}}  {n_i} {\nil} \q \big| \q
 \node {\nu_d} d {\sigma^2 \!.\, \wsndv_1}  .
\]
Unfortunately,  we cannot directly apply Theorem~\ref{thm:propagation} 
to capture this message propagation because node $s_2$, unlike $s_1$, 
can transmit to both $n_1$ and $n_2$.
In this case, before applying Theorem~\ref{thm:propagation}, 
we would need a result to \emph{compose estimates} of partial networks. 
More precisely, a result which would  allow us  to take into account, in the calculation of the tolerance, both the probability that a sender transmits 
and the probability that the same sender does not transmit.

The additive law of Proposition~\ref{prop:propagation-general} allows us to prove the following result. 
\begin{thm}[Composing networks]
\label{thm:propagation2}
If $M\ntransSimone[\sigma]$ then 

\[
N \q\, \big| \q\,
\node {\nu_{m}} {m} {\sndv_{p}} \q\, \big| \q\,
\prod_{j\in J} \node {\nu_{n_j}} {n_j} {\rcvtime {x_j}{P_j}{Q_j}}
\simtoll {r}
\: M
\]
with $r={p}{q_1}+ {(1{-}p}){q_2}$, whenever
\begin{itemize}
\item
$N \; | \; 
\node {\nu_{m}} {m} {\nil}  \; | \; 
\prod_{j\in J} \node {\nu_{n_j}} {n_j} {{\subst v {x_j}}P_j}
  \simtoll{q_1} \: M
$
\item
$
N \;  | \; 
 \node {\nu_{m}} {m} {\nil}  \;  | \; 
\prod_{j\in J} \node {\nu_{n_j}} {n_j} {\rcvtime {x_j}{P_j}{Q_j}}   
\simtoll{q_2} \: M 
$
\item  
$\{n_j :  j \in J\} \subseteq \nu_{m} \subseteq\{n_j : j \in J\} \cup \nds{N}$
\item 
nodes in $\nu_{m} \cap \nds{N}$ cannot receive in the current round.\footnote{We could generalise the result to
take into account more senders at the same time. 
This would not add expressiveness, it would just speed up the reduction process.}
\end{itemize}
\end{thm}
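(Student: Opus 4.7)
The plan is to reduce the statement to a single additive step. Writing $L = N \mid \node{\nu_m}{m}{\sndv_p} \mid \prod_{j\in J}\node{\nu_{n_j}}{n_j}{\rcvtime{x_j}{P_j}{Q_j}}$ for the left-hand side, I will first derive the weak silent transition
\[
L \;\Urg{\hat\tau}\; p\,\overline{L_1'} \;+\; (1-p)\,\overline{L_2}
\]
where $L_1' = N \mid \node{\nu_m}{m}{\nil} \mid \prod_{j\in J}\node{\nu_{n_j}}{n_j}{{\subst v {x_j}}P_j}$ is the first hypothesis network and $L_2$ is as in the statement. Unfolding $\sndv_p$ as $\tau.(\bcastzero v \oplus_p \nil)$, rule \rulename{Tau} yields a $\tau$-step at node $m$ that, lifted through (iterated) \rulename{TauPar}, gives $L \transSim{\tau} p\,\overline{L_1} + (1-p)\,\overline{L_2}$ with $L_1 = N \mid \node{\nu_m}{m}{\bcastzero v} \mid \prod_{j\in J}\node{\nu_{n_j}}{n_j}{\rcvtime{x_j}{P_j}{Q_j}}$. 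From $L_1$ the pending broadcast fires: rule \rulename{Snd} emits the message, each $n_j$ absorbs it by \rulename{Rcv} (well-formedness of $L$ gives $m \in \nu_{n_j}$), the nodes of $N$ lying in $\nu_m$ absorb it by \rulename{RcvEnb} using the non-receiver assumption, and iterated \rulename{Bcast} combines these into a label $\sndto m {v} \mu$ with $\mu = \nu_m \setminus (\{n_j : j\in J\}\cup\nds N) = \emptyset$ by the containment hypothesis. Rule \rulename{ShhSnd} then silences this broadcast into $L_1 \transSim{\tau} \overline{L_1'}$; concatenating the two single $\hat\tau$-steps along the definition of $\Urg{\hat\tau}$ (with $L_2$ moving reflexively) produces the announced weak transition.

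With this in hand, the conclusion follows from an additive principle dual to Proposition~\ref{prop:propagation-general}: both networks in the support of $p\,\overline{L_1'} + (1-p)\,\overline{L_2}$ are simulated by $M$ with tolerances $q_1$ and $q_2$ respectively, so the overall tolerance of $L$ should be the corresponding convex combination. Concretely, I would build a weak simulation quasimetric $d$ extending $\metric$ by setting $d(L, M) = r$ and $d(L_1, M) = q_1$ (closing under the triangle inequality where needed), then verify the simulation clause of Definition~\ref{def:simulation_quasimetric} at $(L, M)$. For the sender's $\tau$-move $L \transSim{\tau} p\,\overline{L_1} + (1-p)\,\overline{L_2}$, $M$ responds with the reflexive $M \Urg{\hat\tau} \overline M$, and the Kantorovich lifting under the matching $\omega(L_1, M) = p$, $\omega(L_2, M) = 1-p$ evaluates to $p\,d(L_1, M) + (1-p)\,d(L_2, M) \le p q_1 + (1-p) q_2 = r$. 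The bound $d(L_1, M) \le q_1$ is verified by a nested check at $(L_1, M)$: the single $\tau$-step $L_1 \transSim{\tau} \overline{L_1'}$ is matched by $M$'s identity weak move, with Kantorovich cost $\metric(L_1', M) \le q_1$ coming directly from the first hypothesis.

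The hard part will be the compositional handling of the remaining transitions of $L$, namely internal $\tau$-moves originating in $N$ and external reception actions $\rcva{m'}{v'}$ at some $n_j$. These mirror step-for-step the corresponding transitions of $L_2$, whose matching $M$-responses are supplied by $L_2 \simtoll{q_2} M$; however, the $L$-post-states still carry the node $\node{\nu_m}{m}{\sndv_p}$ whereas the $L_2$-post-states carry $\node{\nu_m}{m}{\nil}$, so this residual discrepancy must be re-absorbed via one further nesting that exploits exactly the same pending $\tau$-step analysed above, so that the triangle inequality keeps the induced distances within $r$. Maximal progress excludes any $\sigma$-transition of $L$, and the hypothesis $M \ntransSimone[\sigma]$ plays the symmetric role of ruling out an $M$-side $\sigma$-move that $L$ could not mimic. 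Once this bookkeeping is closed, $d$ satisfies both clauses of Definition~\ref{def:pseudoquasimetric} and the simulation clause of Definition~\ref{def:simulation_quasimetric}, yielding $\metric(L, M) \le d(L, M) = r$ as required.
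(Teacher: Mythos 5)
There is a genuine gap: you run the simulation game in the wrong direction. In this paper $A \simtoll{r} B$ is the \emph{converse} of $\sqsubseteq$, i.e.\ it stands for $B \sqsubseteq_r A$, that is $\metric(B,A)\le r$; this is forced by how Theorem~\ref{thm:propagation} and Equation~\ref{eq:2s2rNoColl} are obtained from Proposition~\ref{M_go_to_N}, and it is how the present theorem is used (the $\mathrm{DONE}$ network is the one being simulated by the gossip network). So the statement asserts that the composite network $L = N \mid \node {\nu_{m}} {m} {\sndv_{p}} \mid \prod_{j}\node {\nu_{n_j}} {n_j} {\rcvtime {x_j}{P_j}{Q_j}}$ \emph{simulates} $M$: the challenger is $M$, the defender is $L$, and the hypotheses give $\metric(M,L_1')\le q_1$ and $\metric(M,L_2)\le q_2$. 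Your construction instead verifies the clause of Definition~\ref{def:simulation_quasimetric} at the pair $(L,M)$, i.e.\ it lets $L$ challenge and $M$ defend, and its key numerical step appeals to ``$\metric(L_1',M)\le q_1$'', which is not among the hypotheses: quasimetrics are not symmetric, and nothing bounds the distances in that direction. Even if the bookkeeping you flag as the ``hard part'' were completed, you would have shown $L \sqsubseteq_r M$ rather than $M \sqsubseteq_r L$; and in that direction the claim is false in general, since nothing forces $M$ to answer $L$'s moves at all (a deadlocked $M$ satisfies the hypotheses vacuously with $q_1=q_2=0$, yet cannot match $L$'s initial $\tau$).

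What is salvageable is your first paragraph: the weak transition $L \TransSim{\hat\tau} p\,\overline{L_1'} + (1-p)\,\overline{L_2}$, and in particular the silenced broadcast derived via \rulename{Snd}, \rulename{Rcv}, \rulename{RcvEnb}, \rulename{Bcast} and \rulename{ShhSnd} using the containment $\nu_m\subseteq\{n_j\}\cup\nds N$ and the no-receive condition, is exactly the paper's derivation of $M_0\transSim{\tau}\overline{M_1}$ (your $L_1$ and $L_1'$). But it must be exploited on the \emph{defender's} side: when $M$ moves, $L$ first resolves the probabilistic choice in $\sndv_p=\tau.(\bcastzero{v}\oplus_p\nil)$ (and, in the branch of probability $p$, fires the silent broadcast) and then the residuals $L_1'$ and $L_2$ respond using the two hypotheses, so the Kantorovich combination yields $pq_1+(1-p)q_2$. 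The paper packages precisely this: $M_0\transSim{\tau}\overline{M_1}$ and Proposition~\ref{M_go_to_N} give $M_1\simtol{0}M_0$, transitivity (Proposition~\ref{prop:transitivity}) then gives $M\simtol{q_1}M_0$, and the Additive law (Proposition~\ref{prop:propagation-general}) applied to the $\tau$-prefixed probabilistic choice delivers $M\simtol{pq_1+(1-p)q_2}L$. Re-orienting your argument along these lines—rather than building an ad hoc quasimetric with the roles of the two sides swapped—closes the gap.
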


Intuitively:
\begin{inparaenum}[(i)]
\item  
in the network $N   |   \node {\nu_{m}} {m} {\sndv_{p}}   |   \prod_{j\in J} \node {\nu_{n_j}} {n_j} {\rcvtime {x_j}{P_j}{Q_j}}$ the sender $m$ has not performed
yet the $\tau$-action that resolves the probabilistic choice between broadcasting the message $v$ or not;
\item  
in the network $\; N \ | \node {\nu_{m}} {m} {\nil}   |  \prod_{j\in J} \node {\nu_{n_j}} {n_j} {{\subst v {x_j}}P_j}$  the sender $m$ has
resolved the probabilistic choice deciding to broadcast the message $v$;
\item 
 finally, 
in the network $ N  |  \node {\nu_{m}} {m} {\nil}  |  \prod_{j\in J} \node {\nu_{n_j}} {n_j} {\rcvtime {x_j}{P_j}{Q_j}}$ the sender $m$
has  resolved the probabilistic choice deciding not to  broadcast $v$.
\end{inparaenum}

Later on we will use  the following instance of Theorem~\ref{thm:propagation2}. 
\begin{cor}
\label{cor:propagation2}
If $M \ntransSimone{\sigma}$ then
\[
N \q \big| \q
\node {\nu_{m}} {m} {\sndv_{p}} \q \big| \q
\prod_{j\in J} \node {\nu_{n_j}} {n_j} {\fwd_{{q}}}
 \simtoll{r}  \: 
M 
\]
with $r=pq_1 + (1{-}p)q_2$, whenever:
\begin{itemize}
\item
$
N \q \big| \q  
\node {\nu_{m}} {m} {\nil}  \q \big| \q
\prod_{j\in J} \node {\nu_{n_j}} {n_j} {\wsndv_{q} } 
\simtoll{q_1} \:  
M $
\item
$
N \q \big| \q
 \node {\nu_{m}} {m} {\nil}  \q \big| \q
\prod_{j\in J} \node {\nu_{n_j}} {n_j} {\fwd_{{q}}} 
\simtoll{q_2} \:  
M$
\item  
$\{n_j \mid j \in J\} \subseteq \nu_{m} \subseteq\{n_j \mid j \in J\} \cup \nds{M}$
\item 
nodes in $\nu_{m} \cap \nds{N}$ cannot receive in the current round.
\end{itemize}
\end{cor}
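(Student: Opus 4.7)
The plan is to recognise that Corollary~\ref{cor:propagation2} is essentially a direct instantiation of Theorem~\ref{thm:propagation2} in which each receiver process takes the specific shape $\fwd_q$. So the first step is to unfold the abbreviation $\fwd_q = \rcv x {\mathsf{resnd}\langle x\rangle_q}$ into its underlying recursion-with-timeout form. By the notational conventions given in the paper, $\rcv x C$ stands for $\fix X {\rcvtime x C {(1{:}X)}}$. Applying the structural congruence rule for $\fix$ (Definition~\ref{def:struc-cong}), together with the convention $1{:}P=P$, we obtain
\[
\fwd_q \;\equiv\; \rcvtime{x}{\mathsf{resnd}\langle x\rangle_q}{\fwd_q} .
\]
This reveals exactly the ingredients required to match the template $\rcvtime{x_j}{P_j}{Q_j}$ in Theorem~\ref{thm:propagation2}, namely by choosing $P_j \,=\, \mathsf{resnd}\langle x_j\rangle_q$ and $Q_j \,=\, \fwd_q$.

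Second, I would verify that the two process-level premises of Theorem~\ref{thm:propagation2} are precisely the two premises listed in the corollary. For the first premise, the substitution yields ${\subst v {x_j}}P_j = \mathsf{resnd}\langle v\rangle_q = \sigma.\mathsf{snd}\langle v\rangle_q = \wsndv_q$, which coincides with the receiver process appearing in the first hypothesis of the corollary. For the second premise, $Q_j = \fwd_q$ is already the persistent forwarder that appears in the second hypothesis of the corollary. The remaining side conditions of Theorem~\ref{thm:propagation2}, namely $M \ntransSimone{\sigma}$, the inclusion $\{n_j : j \in J\} \subseteq \nu_m \subseteq \{n_j : j \in J\} \cup \nds{N}$, and the requirement that nodes in $\nu_m \cap \nds{N}$ cannot receive in the current round, are imported verbatim from the statement of the corollary.

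Third, with these identifications fixed, Theorem~\ref{thm:propagation2} delivers exactly the required simulation with tolerance $r = p q_1 + (1{-}p) q_2$, which is precisely the conclusion of Corollary~\ref{cor:propagation2}. The only delicate point is the bookkeeping around the fact that $\fwd_q$ is a defined abbreviation rather than a primitive receiver; but because structural congruence $\equiv$ is identified throughout the calculus (and in particular the rule \rulename{Rec} of Table~\ref{tab:net} handles $\fix$ silently), the transitions of the unfolded and folded forms coincide, so the application of Theorem~\ref{thm:propagation2} is justified without any additional quantitative argument. Consequently I do not anticipate a genuine obstacle: the whole proof is a syntactic instantiation of the parent theorem, and the only possible source of friction — matching the shape $\rcvtime{x_j}{P_j}{Q_j}$ against $\fwd_q$ — is dispatched by a single unfolding step.
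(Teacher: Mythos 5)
Your proposal is correct and coincides with the paper's own treatment: the paper gives no separate proof of Corollary~\ref{cor:propagation2}, presenting it exactly as an instance of Theorem~\ref{thm:propagation2}, and your instantiation (unfolding $\fwd_q$ via the recursion rule of structural congruence to $\rcvtime{x}{\mathsf{resnd}\langle x\rangle_q}{\fwd_q}$, taking $P_j=\mathsf{resnd}\langle x_j\rangle_q$ and $Q_j=\fwd_q$, so that ${\subst v {x_j}}P_j=\wsndv_q$) is precisely the intended argument. The only cosmetic wrinkle is the $\nds{N}$ versus $\nds{M}$ occurrence in the neighbourhood side condition, which is an inconsistency already present between the theorem and the corollary in the paper itself and does not affect your argument.
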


Now, we have all  algebraic tools to compute an estimation of the tolerance $r$, such that 
$ \mathrm{GSP}_2 \simtoll{r} \mathrm{DONE}_2 $.
Basically, we will compute the tolerances for two partial networks and then will use Theorem~\ref{thm:propagation2} to compose the two tolerances. 

As a first step, we compute an estimation for  the network $\mathrm{GSP}_2$ in which the sender $s_2$ \emph{has already
broadcast} the message $v$ to its neighbours $n_1$ and $n_2$. 
To this end,   we derive the following chain of similarities by applying, in sequence:
\begin{inparaenum}[(i)] 
\item 
Proposition~\ref{thm:timing}(\ref{law:no-trans}) 
to express that if $n_1$ receives the message transmitted by $s_2$ then the message broadcast  by $s_1$ gets lost;
\item
Proposition~\ref{thm:timing}(\ref{timing5})  
to formalise that in the current round  both nodes $n_3$ and $d$ do not receive any message because there are no transmitters in their neighbourhood, as a consequence, they are obliged to timeout;
\item 
Theorem~\ref{thm:propagation} in which nodes $n_1$ and $n_2$ act as senders and $n_3$ as a forwarder, together with Proposition~\ref{thm:timing}(\ref{timing2}) to work underneath $\sigma$ prefixes;
\item
Proposition~\ref{thm:timing}(\ref{timing1}) to remove $\sigma$ prefixes in both nodes $n_1$ and $n_2$;
\item  
Proposition~\ref{thm:timing}(\ref{timing5}) because in the next round there are no nodes that can transmit to  $d$; 
\item 
Theorem~\ref{thm:propagation} in which $n_3$ acts as sender and $d$ as forwarder, together with Proposition~\ref{thm:timing}(\ref{timing2}) to work underneath $\sigma$ prefixes;
\item
Proposition~\ref{thm:timing}(\ref{timing1}) to remove  $\sigma$ prefixes in  $n_3$.
\end{inparaenum}
In all steps, we have reasoned in a compositional manner, up to common parallel components (Theorem~\ref{cor:non_expansiveness}). For convenience, in the chain below we define two auxiliary networks: $\mathrm{NET}_a$ and $\mathrm{NET}_b$.
\[
\begin{array}{rrl} 
&&\mathrm{GSP}_2(s_1[v]\mbox{; }s_2 \mbox{ gossiped})
\\[3pt]
& \deff &
\node {\nu_{s_1}}  {s_1} {\sndv_p} \; \big| \;
\node {\nu_{s_2}}  {s_2} {\nil} \; \big| \;
\prod^{2}_{i=1} \node {\nu_{n_i}}  {n_i} {\wsndv_{p}} \; \big| \;
 \node {\nu_{n_3}}  {n_3} {\fwd_{p}} \; \big| \;
\node {\nu_d} d {\fine}
\\[3pt]
& \simtoll{0} & 
\prod_{i=1}^{2}\node {\nu_{s_i}}  {s_i} {\nil} \; \big| \;
\prod^{2}_{i=1} \node {\nu_{n_i}}  {n_i} {\wsndv_{p}} \; \big| \;
\node {\nu_{n_3}}  {n_3} {\fwd_{p}} \; \big| \;
\node {\nu_d} d {\fine}
\end{array}
\]
\[
\begin{array}{rrl} 
& \simtoll{0} &
\prod_{i=1}^{2}\node {\nu_{s_i}}  {s_i} {\nil} \; \big| \;
\prod^{2}_{i=1} \node {\nu_{n_i}}  {n_i} {\sigma.  \sndv_{p}} \; \big| \;
 \node {\nu_{n_3}}  {n_3} {\sigma.  \fwd_{p}} \; \big| \;
\node {\nu_d} d {\sigma.  \fine}
\\[3pt]
& \deff & \mathrm{NET}_a
\\[3pt]
& \simtoll{(1-p)^2} & 
\prod_{i=1}^{2}\node {\nu_{s_i}}  {s_i} {\nil} \; \big| \;
\prod^{2}_{i=1} \node {\nu_{n_i}}  {n_i} {\sigma.\nil} \; \big| \;
 \node {\nu_{n_3}}  {n_3} {\sigma. \wsndv_{p}} \; \big| \;
\node {\nu_d} d {\sigma .\fine}
\\[3pt]
& \deff & \mathrm{NET}_b 
\\[3pt] 
& \simtoll{0} & 
\prod_{i=1}^{2}\node {\nu_{s_i}}  {s_i} { \nil} \; \big| \;
\prod^{2}_{i=1} \node {\nu_{n_i}}  {n_i} { \nil} \; \big| \;
\node {\nu_{n_3}}  {n_3} { \sigma^2.\, \sndv_{p}} \; \big| \;
\node {\nu_d} d { \sigma\!.\,\fine}
\\[3pt]
& \simtoll{0} & 
\prod_{i=1}^{2}\node {\nu_{s_i}}  {s_i} { \nil} \; \big| \;
\prod^{2}_{i=1} \node {\nu_{n_i}}  {n_i} { \nil} \; \big| \;
\node {\nu_{n_3}}  {n_3} { \sigma^2\!.\, \sndv_{p}} \; \big| \;
\node {\nu_d} d { \sigma^2\!.\,\fine}
\\[3pt]
& \simtoll{1-p} & 
\prod_{i=1}^{2}\node {\nu_{s_i}}  {s_i} {\nil} \; \big| \;
\prod^{2}_{i=1} \node {\nu_{n_i}}  {n_i} {\nil} \; \big| \;
 \node {\nu_{n_3}}  {n_3} {  \sigma^2\!.\, \nil} \; \big| \;
\node {\nu_d} d {\sigma^2\!. \,  \wsndv_1}
\\[3pt]
& \simtoll{0} & 
\prod_{i=1}^{2}\node {\nu_{s_i}}  {s_i} {\nil} \; \big| \;
\prod^{3}_{i=1} \node {\nu_{n_i}}  {n_i} {\nil} \; \big| \;
 \node {\nu_d} d {\sigma^3\!. \, \sndv_1} 
\\[3pt]
 & = & \mathrm{DONE}_2  . 
\end{array}
\]
By transitivity (Proposition~\ref{prop:transitivity}) we can easily derive 
$
\mathrm{NET}_b
\: \simtoll{1-p} \:
 \mathrm{DONE}_2$. 
This allows us  to apply  Proposition~\ref{lem:O_go_to_N_sim_M_conSigma}, in which $M$, $N$, and $O$ are instantiated with $\mathrm{DONE}_2$, $\mathrm{NET}_b$, and $\mathrm{NET}_a$, respectively, to derive the equation 
$
\mathrm{NET}_a 
\, \simtoll{1-2p^2+p^3}  \, 
 \mathrm{DONE}_2 
$
where the tolerance $1-2p^2+p^3$ is obtained by solving the expression $(1-p)(1-(1-p)^2) + (1-p)^2$. 
By transitivity  on the first two steps of the  chain above, we can finally derive the equation
\begin{equation}
\label{eq_GSP1-1}
\mathrm{GSP}_2(s_1[v]\mbox{; }s_2 \mbox{ gossiped})\; 
\simtoll{1-2p^2+p^3} \; \, \mathrm{DONE}_2   
\end{equation}

Similarly, we can compute an estimation of the tolerance  which
allows the network $\mathrm{GSP}_2$ in which the sender $s_2$ \emph{did not 
broadcast} the message $v$ to its neighbours to simulate 
the network $\mathrm{DONE}_2$.  
To this end,  we derive the following chain of similarities by applying, in sequence: 
\begin{inparaenum}[(i)]
\item Theorem~\ref{thm:propagation};
\item Proposition~\ref{thm:timing}(\ref{timing5});
\item Theorem~\ref{thm:propagation} and  Proposition~\ref{thm:timing}(\ref{timing2});
\item  Proposition~\ref{thm:timing}(\ref{timing1}) and 
 Proposition~\ref{thm:timing}(\ref{timing5});
\item Theorem~\ref{thm:propagation} and  Proposition~\ref{thm:timing}(\ref{timing2}); 
\item  Proposition~\ref{thm:timing}(\ref{timing1}) and  Proposition~\ref{thm:timing}(\ref{law:no-trans}). 
\end{inparaenum}
Again, in all steps, we have reasoned up to common parallel components 
(Theorem~\ref{cor:non_expansiveness}).
For convenience, in the chain below we define three auxiliary networks: $\mathrm{NET}_c$,  $\mathrm{NET}_d$ and 
$\mathrm{NET}_e$. 
\[
\begin{array}{rrl}
&&\mathrm{GSP}_2(s_1[v]\mbox{; }s_2 \mbox{ did not gossip})
\\[3pt]
& \deff &
\node {\nu_{s_1}}  {s_1} {\sndv_p} \; \big| \;
\node {\nu_{s_2}}  {s_2} {\nil} \; \big| \;
\prod_{i=1}^{3} \node {\nu_{n_i}}  {n_i} {\fwd_{p}} \; \big| \;
\node {\nu_d} d {\fine}
\\[3pt]
& \simtoll{1-p} &
\prod_{i=1}^2\node {\nu_{s_i}}  {s_i} {\nil} \; \big| \;
 \node {\nu_{n_1}}  {n_1} {\wsndv_{p}} \; \big| \;
\prod^{3}_{i=2} \node {\nu_{n_i}}  {n_i} {\fwd_{p}} \; \big| \;
\node {\nu_d} d {\fine}
\\[3pt]
& \deff & \mathrm{NET}_c
\\[3pt]  
& \simtoll{0} &
\prod_{i=1}^{2}\node {\nu_{s_i}}  {s_i} {\nil} \; \big| \;
 \node {\nu_{n_1}}  {n_1} {\sigma. \sndv_{p}} \; \big| \;
\prod^{3}_{i=2} \node {\nu_{n_i}}  {n_i} {\sigma. \fwd_{p}} \; \big| \;
\node {\nu_d} d { \sigma . \fine}
\\[3pt]
& \deff & \mathrm{NET}_d
\\[3pt] 
& \simtoll{1-p} &
\prod_{i=1}^{2}\node {\nu_{s_i}}  {s_i} {\nil} \; \big| \;
 \node {\nu_{n_1}}  {n_1} {\sigma.\nil} \; \big| \;
 \node {\nu_{n_2}}  {n_2} {\sigma. \fwd_p} \; \big| \;
\node {\nu_{n_3}}  {n_3} {\sigma. \wsndv_{p}} \; \big| \;
\node {\nu_d} d {\sigma. \fine}
\\[3pt]
& \deff & \mathrm{NET}_e
\\[3pt] %%[3pt]
& \simtoll{0} &
\prod_{i=1}^{2}\node {\nu_{s_i}}  {s_i} {\nil} \; \big| \;
 \node {\nu_{n_1}}  {n_1} {\nil} \; \big| \;
 \node {\nu_{n_2}}  {n_2} {\sigma^2\!.\,  \fwd_p} \; \big| \;
\node {\nu_{n_3}}  {n_3} {\sigma^2\!.\,  \sndv_{p}} \; \big| \;
\node {\nu_d} d {\sigma^2\!.\,  \fine}
\\[3pt]
& \simtoll{1-p} &
\prod_{i=1}^{2}\node {\nu_{s_i}}  {s_i} {\nil} \; \big| \;
 \node {\nu_{n_1}}  {n_1} {\nil} \; \big| \;
 \node {\nu_{n_2}}  {n_2} {\sigma^2\!.\,  \wsndv_p} \; \big| \;
\node {\nu_{n_3}}  {n_3} { \sigma^2\!.\nil} \; \big| \;
\node {\nu_d} d {\sigma^2\!.\,  \wsndv_1}
\\[3pt]
& \simtoll{0} &
\prod_{i=1}^{2}\node {\nu_{s_i}}  {s_i} {\nil} \; \big| \;
\prod^{3}_{i=1} \node {\nu_{n_i}}  {n_i} {\nil} \; \big| \;
\node {\nu_d} d { \sigma^3 . \sndv_1} 
\\[3pt]
& = & \mathrm{DONE}_2 \,  . 
\end{array}
\]
Then,   by transitivity (Proposition~\ref{prop:transitivity}) we can derive the equation 
\[
\mathrm{NET}_e
\; \simtoll{1-p} \; \, 
 \mathrm{DONE}_2 . 
\]
This allows us  to apply  Proposition~\ref{lem:O_go_to_N_sim_M_conSigma}, in which $M$, $N$, and $O$ are instantiated with $\mathrm{DONE}_2$, $\mathrm{NET}_e$, and $\mathrm{NET}_d$, respectively, to  derive the equation
\[
\mathrm{NET}_d
\; \simtoll{1- p^2} \; \,
 \mathrm{DONE}_2 
\]
where the tolerance $1- p^2$ is obtained by solving the expression $(1-p)(1-(1-p)) + (1-p)$. 
 By transitivity  we can easily derive:
\[
\mathrm{NET}_c
\; \simtoll{1- p^2} \; \,
 \mathrm{DONE}_2 . 
\] 
This last equation  can be used to apply  Proposition~\ref{lem:O_go_to_N_sim_M},  
in which $M$, $N$, and $O$ are instantiated with $\mathrm{DONE}_2$, $\mathrm{NET}_c$, and $\mathrm{GSP}_2(s_1[v]\mbox{; }s_2 \mbox{ did not gossip}) $, respectively,  to  derive:
\begin{equation}
\label{eq_GSP1-2}
\mathrm{GSP}_2(s_1[v]\mbox{; }s_2 \mbox{ did not gossip}) \; \simtoll{1-p^3} \;  \mathrm{DONE}_2 
\end{equation}
where the tolerance $1- p^3$ is obtained by solving the expression $(1-p)(1-(1-p^2)) + (1-p^2)$.

Finally, 
we can apply  Corollary~\ref{cor:propagation2} to Equations~\ref{eq_GSP1-1} and \ref{eq_GSP1-2} to derive: 
\[ 
\mathrm{GSP}_2 \q
\simtoll{1-(3p^3-2p^4)} \q \; \mathrm{DONE}_2 .
\]
Thus,  the gossip network $\mathrm{GSP}_2$ will succeed in propagating the messages to the destination $d$ with probability at least $3p^3-2p^4$. 
For instance, with a gossip probability $p=0.8$ the destination will receive the message with probability $0.716$, with a margin of $10\%$.
For $p=0.85$ the success probability  increases to $0.798$, with a margin of $6\%$; while for $p=0.9$ the probability to propagate the message    
to the destination rises to  $0.88$, with a margin of only $2\%$. 
As a consequence, $p=0.9$ can be considered the threshold of our small network.\footnote{Had we had
more senders we would have estimated 
a better threshold.}

Note that  in the previous example both messages may reach the 
destination node in exactly three rounds. 
However, more generally,
we could have different message propagation paths in the same
network  which might take a different amount of time to be traversed. 
The algebraic theory we developed up to now does not allow us to deal with 
paths of different lengths. As an example, we would like to estimate the distance between 
the network
\[
\mathrm{GSP}_3 \, \deff \,
 \node {\nu_{s_1}}  {s_1} {\sndv_{p}} \q \big| \q\node {\nu_{s_2}}  {s_2} {\sndv_{p}} \q \big| \q
  \node {\nu_{n }}  {n } {\fwd_{p}} \q \big| \q
\node {\nu_d} d {\fine}
\]
with topology
$\nu_{s_1} = \{  d\}$, $\nu_{s_2} = \{   n  \}$,
$\nu_{n }{=} \{ s_2, d\}$ and
$\nu_{d} = \{s_1,n,\mathit{tester}\}$,
and the network
\[ 
\mathrm{DONE}_3 \,  \deff \, 
\node {\nu_{s_1}}  {s_1} {\nil} \q \big| \q \node {\nu_{s_2}}  {s_2} {\nil} \q \big| \q
  \node {\nu_{n }}  {n } {\nil} \q \big| \q
 \node {\nu_d} d { \tau.(\sigma.  \sndv_1 \oplus_p \sigma^2 . \sndv_1)}
\]
in which the message $v$ propagated  up to the destination node $d$
following two different paths. Thus, $d$  will probabilistically choose between broadcasting $v$  after one or two rounds. 

The following result provides the missing instrument.
\begin{thm}[Composing paths]
\label{thm:propagation2bis}
Let $M$ be a well-formed network.  Then, 
\[
M \q \big| \q
\node {\nu_{m}} {m} {\tau.\bigoplus_{i \in I} p_i {\colon} Q_i} 
\q \simtoll{r}  \q
\prod_{j\in J} \node {\nu_{n_j}} {n_j} {\nil }  \; \big| \; \node {\nu_d} d {\tau.\bigoplus_{i \in I} p_i {\colon} P_i} 
\]
with $r=\sum_{i \in I} p_i s_i$, whenever
$
M | 
\node {\nu_{m}} {m} {Q_i} 
\, \simtoll{s_i}  \;
\prod_{j\in J} \node {\nu_{n_j}} {n_j} {\nil }  |  \node {\nu_d} d {P_i } 
$, for any $i \in I$.
\end{thm}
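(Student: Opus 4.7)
My plan is to prove $\metric(L, R) \le r$ by directly verifying the simulation condition of Definition~\ref{def:simulation_quasimetric} and then invoking the minimality of $\metric$ (Theorem~\ref{thm:exists_metric}); throughout the argument I write $L = M \mid \node{\nu_m}{m}{\tau.\bigoplus_{i\in I} p_i{\colon}Q_i}$, $R = \prod_{j\in J}\node{\nu_{n_j}}{n_j}{\nil} \mid \node{\nu_d}{d}{\tau.\bigoplus_{i\in I} p_i{\colon}P_i}$, and $L_i = M \mid \node{\nu_m}{m}{Q_i}$, $R_i = \prod_{j\in J}\node{\nu_{n_j}}{n_j}{\nil} \mid \node{\nu_d}{d}{P_i}$ for the corresponding branches. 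I would not try to reduce the claim to the additive law (Proposition~\ref{prop:propagation-general}) applied at node $d$: that would require $L \simtoll{s_i} R_i$ for each $i$, a strictly stronger property than the hypothesis (indeed, computing the Kantorovich estimate for $L$'s $\tau$-step against a static $R_i$ yields only the weaker bound $p_i s_i + (1-p_i)$). The proof must therefore synchronise the two $\tau.\bigoplus$ resolutions.

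The decisive transition is $L \transSim{\tau} \sum_{i\in I} p_i\, \overline{L_i}$ originating from the $\tau.\bigoplus$ at node $m$. I would match it with $R$'s own $\tau$-step at node $d$, giving $R \transSim{\tau} \sum_{i\in I} p_i\, \overline{R_i}$, and use the diagonal matching $\omega(L_i, R_i) = p_i$ in $\Omega(\sum_i p_i\, \overline{L_i},\, \sum_i p_i\, \overline{R_i})$; the hypotheses $\metric(L_i, R_i) \le s_i$ then yield Kantorovich cost $\sum_i p_i s_i = r$, which is the bound required by Definition~\ref{def:simulation_quasimetric} for this transition. For any other transition $L \transSim{\alpha} \Delta$, Maximal Progress (Proposition~\ref{prop:MaxProg}) rules out $\alpha = \sigma$ (since $m$ is $\tau$-prefixed), so $\alpha$ must be driven by $M$ or by an external reception; in such a case I would first let $R$ silently move to $\sum_i p_i\, \overline{R_i}$, invoke the hypothesis $L_i \simtoll{s_i} R_i$ on each branch to obtain matching weak transitions $R_i \TransSim{\hat{\alpha}} \Theta_i$, and assemble them with weights $p_i$ into a single weak transition $R \TransSim{\hat{\alpha}} \sum_i p_i \Theta_i$.

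The main obstacle lies in the Kantorovich estimate for this second case: a transition of $L$ that does not touch $m$ leaves node $m$ still in state $\tau.\bigoplus p_j{\colon}Q_j$, whereas the analogous transitions originating from $L_i$ leave it in state $Q_i$, so distances between ``un-resolved'' and ``resolved'' versions of the same network appear on both sides of the matching. I would close this gap by framing the sought bound as a fixed-point property of the simulation operator on the lattice of pseudoquasimetrics and propagating it along the $\tau$-depth by well-founded induction, which is legitimate between any two consecutive $\sigma$-ticks thanks to well-timedness (Proposition~\ref{prop:well-timedness}). This fixed-point step is the technically delicate part of the argument and mirrors the construction of $\metric$ itself as the minimum weak simulation quasimetric.
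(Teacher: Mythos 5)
Your proposal runs the simulation game in the wrong direction. In this paper $A \simtoll{r} B$ is the converse of $\simtol{r}$: it abbreviates $B \simtol{r} A$, i.e.\ $\metric(B,A) \le r$, so in the statement of Theorem~\ref{thm:propagation2bis} the left-hand (gossip-style) network is the \emph{simulator} and the right-hand (DONE-style) network is the one whose transitions must be matched. This orientation is forced by how Proposition~\ref{M_go_to_N} is used to justify Equation~\ref{eq:2s2rNoColl}, by the reduction of Theorem~\ref{thm:propagation2} to the additive law, and it is explicit in the paper's own proof of this theorem, which rewrites the hypothesis as $\prod_{j\in J}\node{\nu_{n_j}}{n_j}{\nil} \mid \node{\nu_d}{d}{P_i} \: \simtol{s_i} \: M \mid \node{\nu_m}{m}{Q_i}$. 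So the quantity to bound is $\metric(R,L)$, not $\metric(L,R)$ as you set out to prove, and the hypotheses bound $\metric(R_i,L_i)$, not $\metric(L_i,R_i)$.

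With the correct orientation the challenger is $R=\prod_{j\in J}\node{\nu_{n_j}}{n_j}{\nil}\mid\node{\nu_d}{d}{\tau.\bigoplus_{i\in I} p_i{\colon}P_i}$, whose only relevant move is the $\tau$-resolution at $d$ into $\sum_{i\in I} p_i\,\overline{R_i}$; $L$ answers with its own $\tau$-resolution at $m$ into $\sum_{i\in I} p_i\,\overline{L_i}$, and your ``decisive transition'' paragraph, with the roles of $L$ and $R$ exchanged and the diagonal coupling $\omega(R_i,L_i)=p_i$, already completes the argument exactly as in the paper: $\Kantorovich(\metric)$ of the two derivatives is at most $\sum_{i\in I} p_i s_i$ by the hypotheses. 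The entire ``second case'' that you identify as the main obstacle --- transitions of $L$ driven by $M$, which leave node $m$ unresolved and are controlled by none of the hypotheses --- exists only because you made the gossip network the challenger; in the paper's direction $M$'s activity generates no challenges at all. Moreover, your proposed repair for that case (a ``fixed-point property of the simulation operator'' propagated by ``well-founded induction along the $\tau$-depth'') is only a gesture: it is precisely the point where the bound would have to be re-established at every $M$-derivative of $L$, nothing in the hypotheses or in well-timedness gives you that, and no concrete quasimetric witnessing the claimed inequality is constructed. As it stands the proposal neither proves the stated theorem nor substantiates the reversed inequality it aims at; correcting the orientation reduces it to the short direct proof, making both the rejected detour around Proposition~\ref{prop:propagation-general} and the fixed-point machinery unnecessary.
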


Let us provide an estimation of the distance between $\mathrm{GSP}_3$ and $\mathrm{DONE}_3$.
As a first step, we compute an estimation of the tolerance  which allows $\mathrm{GSP}_3$ to simulate the first probabilistic behaviour of $\mathrm{DONE}_3$ (broadcast after one round).
To this end, we derive the following chain of similarities by applying, in sequence:   
\begin{inparaenum}[(i)]
\item Theorem~\ref{thm:propagation} to model the transmission of $s_1$;  
\item again Theorem~\ref{thm:propagation} to model the transmission of $s_2$; 
\item Proposition~\ref{thm:timing}(\ref{law:no-trans})  to model that the transmission of $n$ gets lost, together
with Proposition~\ref{thm:timing}(\ref{timing2}) to work underneath $\sigma$ prefixes.
\end{inparaenum}
In all steps, we reason up to parallel components (Theorem~\ref{cor:non_expansiveness}). 
\[
\begin{array}{rrl}
&& \mathrm{GSP}_3 (s_1 \mbox{ gossiped} \mbox{; } s_2[v]) 
\\[2pt]
& \deff & 
\node {\nu_{s_1}}  {s_1} {\sndv_{1}} \; \big| \; 
\node {\nu_{s_2}}  {s_2} {\sndv_{p}} \; \big| \;
\node {\nu_{n }}  {n } {\fwd_{p}} \; \big| \;
\node {\nu_d} d {\fine}  
\\[3pt]
&\simtoll{0} &
\node {\nu_{s_1}}  {s_1} {\nil} \; \big| \;
\node {\nu_{s_2}}  {s_2} {\sndv_{p}} \; \big| \;
\node {\nu_{n }}  {n } {\fwd_{p}} \; \big| \;
\node {\nu_d} d {\wsndv_{1}}  
\\[3pt]
&\simtoll{1-p} &
\node {\nu_{s_1}}  {s_1} {\nil} \; \big| \;
\node {\nu_{s_2}}  {s_2} {\nil} \; \big| \;
\node {\nu_{n }}  {n } {\wsndv_{p}} \; \big| \;
\node {\nu_d} d {\wsndv_{1}}  
\\[3pt]
& =  &
\node {\nu_{s_1}}  {s_1} {\nil} \; \big| \;
\node {\nu_{s_2}}  {s_2} {\nil} \; \big| \;
\node {\nu_{n }}  {n } {\sigma. \sndv_{p}} \; \big| \;
\node {\nu_d} d {\sigma. \sndv_{1}}  
\\[3pt]
& \simtoll{0} &
 \node {\nu_{s_1}}  {s_1} {\nil} \; \big| \;\node {\nu_{s_2}}  {s_2} {\nil} \; \big| \;
  \node {\nu_{n }}  {n } {\nil} \; \big| \;
 \node {\nu_d} d {  \sigma. \sndv_1}  .
\end{array}
\]
By transitivity (Proposition~\ref{prop:transitivity}) we derive: 
\begin{equation}
\label{newEx1}
\mathrm{GSP}_3 (s_1 \mbox{ gossiped} \mbox{; } s_2[v]) \; 
  \simtoll{1-p}  \; \, 
 \node {\nu_{s_1}}  {s_1} {\nil} \; \big| \; \node {\nu_{s_2}}  {s_2} {\nil} \; \big| \; 
  \node {\nu_{n }}  {n } {\nil} \; \big| \;
 \node {\nu_d} d {  \sigma. \sndv_1} 
\end{equation}

Then, we compute an estimation of the tolerance  which allows the network $\mathrm{GSP}_3$ to simulate the second probabilistic behaviour of $\mathrm{DONE}_3$.
To this end, we derive the following  chain of similarities by applying, 
in sequence:
\begin{inparaenum}[(i)]
\item Theorem~\ref{thm:propagation} to model the transmission of $s_2$; 
\item Proposition~\ref{thm:timing}(\ref{timing5}) as without transmitters the node $d$ is obliged to timeout; 
\item   Theorem~\ref{thm:propagation} as node $n$ may transmit to $d$, working underneath a $\sigma$ prefix (Proposition~\ref{thm:timing}(\ref{timing2}));
\item  Proposition~\ref{thm:timing}(\ref{timing1}) to remove a $\sigma$ prefix in node $n$.
\end{inparaenum}
As usual, in all steps, we reason up to parallel components (Theorem~\ref{cor:non_expansiveness}).
For convenience, in the chain below we define two auxiliary networks:   $\mathrm{NET}_f$ and 
$\mathrm{NET}_g$. 
\begin{displaymath}
\begin{array}{rrl}  
& & \mathrm{GSP}_3(s_1 \mbox{ did not gossip} \mbox{; } s_2[v]) 
\\[2pt] 
& \deff & 
\node {\nu_{s_1}}  {s_1} {\nil} \; \big| \;
\node {\nu_{s_2}}  {s_2} {\sndv_{p}} \; \big| \;
\node {\nu_{n }}  {n } {\fwd_{p}} \; \big| \;
\node {\nu_d} d {\fine}
\\[3pt]
& \simtoll{1-p} &
\node {\nu_{s_1}}  {s_1} {\nil} \; \big| \;
\node {\nu_{s_2}}  {s_2} {\nil} \; \big| \;
\node {\nu_{n }}  {n } {\wsndv_{p}} \; \big| \;
\node {\nu_d} d {\fine}
\\
& \deff & \mathrm{NET}_f
\\[3pt]
& \simtoll{0} &
\node {\nu_{s_1}}  {s_1} {\nil} \; \big| \;
\node {\nu_{s_2}}  {s_2} {\nil} \; \big| \;
\node {\nu_{n }}  {n } {\sigma.\sndv_{p}} \; \big| \;
\node {\nu_d} d {\sigma.\fine}
\\[3pt]
& \simtoll{1-p} &
\node {\nu_{s_1}}  {s_1} {\nil} \; \big| \;
\node {\nu_{s_2}}  {s_2} {\nil} \; \big| \;
\node {\nu_{n }}  {n } {\sigma.\nil} \; \big| \;
\node {\nu_d} d {\sigma.\wsndv_1}
\\[3pt]
& \simtoll{0} &
\node {\nu_{s_1}}  {s_1} {\nil} \; \big| \;
\node {\nu_{s_2}}  {s_2} {\nil} \; \big| \;
\node {\nu_{n }}  {n } {\nil} \; \big| \;
\node {\nu_d} d {\sigma^2.  \sndv_1}
\\[2pt]  
& \deff & \mathrm{NET}_g.
\end{array}
\end{displaymath}
Then,   by transitivity (Proposition~\ref{prop:transitivity}) we can derive 
%%\[
$\mathrm{NET}_f
\, \simtoll{1-p} \,
\mathrm{NET}_g $. 
%%\]
This last equation  can be used to apply  Proposition~\ref{lem:O_go_to_N_sim_M},   in which $M$, $N$, and $O$ are instantiated with $\mathrm{NET}_g $, $\mathrm{NET}_f$, and $\mathrm{GSP}_3(s_1 \mbox{ did not gossip} \mbox{; } s_2[v]) $, respectively,  to  derive:
 \begin{equation}
\label{newEx2}
\mathrm{GSP}_3(s_1 \mbox{ did not gossip; } s_2[v]) \q 
 \simtoll{1-p^2} \q \,  
 \mathrm{NET}_g 
\end{equation}
where the tolerance $1- p^2$ is obtained by solving the expression $(1-p)(1-(1-p)) + (1-p)$.

Finally, we can apply Theorem~\ref{thm:propagation2bis} to Equations~\ref{newEx1} and \ref{newEx2} to obtain: 
\[
\mathrm{GSP}_3 \; \simtoll{r} \; \mathrm{DONE}_3 
\]
with $r= p(1-p)+(1-p)(1  -p^2)$. Thus, the network $\mathrm{GSP}_3$ will succeed in transmitting both  messages $v$ to the destination $d$, with probability at least $1- r$. 

We conclude by observing that, in  order to deal with paths of different length, one should apply Theorem~\ref{thm:propagation2bis} to all possible paths.

%%%%%%%%%%%%%%%%%%%%%%%%%%%%%%%%%%%%%

\section{Gossiping with collisions}
\label{sec:gossip-collisions}

An important aspect  of  wireless communication is the presence of collisions:
if two or more stations are transmitting over the same channel during the same round then a \emph{communication collision} occurs at the receiver nodes. 
Collisions in wireless systems cannot be avoided, although there are protocols to reduce their occurrences (see, for instance, the  protocol IEEE 802.11 CSMA/CA). 

In Section~\ref{sec:gossip-nocollisions}, we have reasoned assuming no collisions.
Here, we demonstrate that  the presence of communication collisions deteriorates the performance of gossip protocols.

The calculus \cname{} allows us to easily express communication collisions.
A receiver node experiences a collision if it is exposed to more than one transmission in the same round; as a consequence the reception in that round fails.
We can model this behaviour by redefining the forwarder process 
as follows:
\[
\mathsf{resndc}\langle u \rangle_{p_{\mathrm{g}}} \!\deff \!
 \rcvtime x {\nil} {\mathsf{snd}\langle u \rangle_{p_{\mathrm{g}}}}
\Q\Q\Q\Q
\fwdc_{p_{\mathrm{g}}} \!\deff\! {\rcv x \mathsf{resndc}\langle x \rangle_{p_{\mathrm{g}}}} \, .
\]
Here, the forwarder $\fwdc_{p_{\mathrm{g}}}$ waits for a message in the current 
round. If  a second message is received in the same round then the 
forwarding  is doomed to fail; otherwise if no other messages are received, 
the process moves to next round and broadcasts the received message with 
the given gossip probability. 
Thus, for example, the Equation~\ref{eq:2s2rNoColl} seen in 
 the previous section becomes:  
\begin{equation}
\label{eq:2s2rWithColl}
\begin{array}{c}
\node {\nu} {m_1} {\sndv_{p_1}} \; \big| \;
\node {\nu} {m_2} {\sndv_{p_2}}\; \big| \;
\node {\nu_1} {n_1} {\fwdc_q}\; \big| \;
\node {\nu_2} {n_2} {\fwdc_r} \\[3pt]
\simtoll{r}
\\[3pt]
\Q \node {\nu} {m_1} {\nil} \; \big| \;
\node {\nu} {m_2} {\nil} \; \big| \;
\node {\nu_1} {n_1} {\wsndvc_{q}} \; \big| \;
\node {\nu_2} {n_2} {\wsndvc_{r}} 
\end{array}
\end{equation}
with a tolerance $r=1-(p_1(1-p_2)+ p_2(1-p_1))$ which is definitely greater 
than the tolerance $(1{-} p_1)(1{-} p_2)$ seen in Equation~\ref{eq:2s2rNoColl} for the same network in the 
absence of communication collisions. 
As expected, the difference  between the two tolerances corresponds to the probability $p_1 \cdot p_2$ that both senders $m_1$ and $m_2$ decide to transmit.
In the previous section, we argued that in the absence of collisions the first   broadcast message is received by both  nodes $n_1$ and $n_2$, while the second message %%, broadcast by the other sender, 
 is simply ignored by  $n_1$ and $n_2$. 
However, this situation obviously gives rise to a communication collision. More generally, when collisions are taken into account, 
Theorem~\ref{thm:propagation} should be reformulated as follows:

\begin{thm}[Message propagation with collisions]
\label{thm:propagation3}
Let $I$ and $J$ be disjoint subsets of $\mathbb{N}$. 
Let $M$ be a well-formed network defined as
\[
M \equiv  
N \q \big| \q
\prod_{i\in I} \node {\nu_{m_i}} {m_i} {\sndv_{p_i}} \q \big| \q
\prod_{j \in J } \node {\nu_{n_j}} {n_j} {\fwdc_{q_j}}
\]
such that,  for all $i \in I$:
\begin{inparaenum}[(1)]
\item 
 $\{n_j : j \in J\} \subseteq \nu_{m_i} \subseteq  \nds{M}$, and
\item 
 the nodes in $\nu_{m_i} \cap \nds{N}$ cannot receive in the current round.
\end{inparaenum}
Then,
\[
M 
\;  \simtoll{r} \;  \, 
 N \q \big| \q
\prod_{i \in I} \node {\nu_{m_i}} {m_i} {\nil} \q \big| \q
 \prod_{j \in J} \node {\nu_{n_j}} {n_j} {\wsndvc_{q_j}} 
\]
with $r= 1- \sum_{i\in I}p_i \prod_{j\in I{\setminus}\{i\}}(1-p_j)$. 
\end{thm}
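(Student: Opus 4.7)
The plan is to construct a weak internal evolution of $M$ that reaches the target network
\[
N_{\star} \;\deff\; N \;\big|\; \prod_{i \in I} \node {\nu_{m_i}} {m_i} {\nil} \;\big|\; \prod_{j \in J} \node {\nu_{n_j}} {n_j} {\wsndvc_{q_j}}
\]
with probability exactly $1-r = \sum_{i \in I} p_i \prod_{j \in I \setminus \{i\}}(1-p_j)$, which is the probability that \emph{exactly one} sender decides to broadcast. Once a derivation $M \TransSim{\hat \tau} (1-r)\overline{N_{\star}} + r \Delta$ is exhibited for an appropriate $\Delta \in \mathcal{D}(\cname)$, the conclusion follows from Proposition~\ref{M_go_to_N}.

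First, I would resolve the probabilistic choices of every sender. Because $\sndv_{p_i} = \tau.(\bcastzero v \oplus_{p_i} \nil)$, rule \rulename{Tau} applies to each $\node{\nu_{m_i}}{m_i}{\sndv_{p_i}}$, and iterated \rulename{TauPar} lets us fire those $\tau$-steps sequentially inside $M$ (no $\dummyN$ is present). After $|I|$ such steps the reached distribution is indexed by the subset $K\subseteq I$ of senders that chose to broadcast, each index carrying weight $\prod_{i\in K}p_i\prod_{i\notin K}(1-p_i)$; in the corresponding network the senders in $K$ are in state $\bcastzero v$, those in $I\setminus K$ have reduced to $\nil$, and the forwarders remain in $\fwdc_{q_j}$.

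Next I would split on $|K|$. When $K=\{i\}$ is a singleton, the lone sender fires \rulename{Snd}; since $\nu_{m_i}\subseteq\nds{M}$ and every node of $\nu_{m_i}\cap\nds{N}$ is unable to receive in the current round by hypothesis~(2), rule \rulename{Bcast} composes the transmission with \rulename{Rcv} on every forwarder and with \rulename{RcvEnb} on the other inhabitants of $N$, exhausting the residual neighbourhood so that \rulename{ShhSnd} turns the step into $\tau$. Unfolding $\fwdc_{q_j}$ via \rulename{Rec} and applying \rulename{Rcv} rewrites each forwarder to $\mathsf{resndc}\langle v\rangle_{q_j}=\wsndvc_{q_j}$, $m_i$ itself reduces to $\nil$, and $N$ is left untouched; the branch thus produces exactly $\overline{N_{\star}}$. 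When $|K|\geq 2$, the first broadcast drives every $n_j$ into the collision-check state $\rcvtime x {\nil} {\sndv_{q_j}}$, and a subsequent broadcast by another element of $K$ triggers \rulename{Rcv} again, substituting $v$ for $x$ and pushing each forwarder into $\nil$: no such branch can reach $N_{\star}$. When $K=\emptyset$ the forwarders stay in $\fwdc_{q_j}$ forever under $\hat \tau$-steps, so $N_{\star}$ is again unreachable.

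Pooling the singleton branches gives the coefficient $1-r$ on $\overline{N_{\star}}$, and collecting all other branches into $r\Delta$ yields the required weak transition. The most delicate point is the bookkeeping around rule \rulename{Bcast}: one must verify that composing the unique broadcast with every parallel component fully exhausts $\nu_{m_i}$ (so that \rulename{ShhSnd} is globally applicable) and that the \rulename{Rcv}/\rulename{RcvEnb} alternative chosen at each neighbour is uniquely determined by the process shape together with hypothesis~(2); the combinatorial identity $\sum_{i\in I}p_i\prod_{j\neq i}(1-p_j)=1-r$ is then immediate from the independence of the individual $\tau$-resolutions, and Proposition~\ref{M_go_to_N} delivers the statement.
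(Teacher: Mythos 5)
Your proposal is correct and follows essentially the same route as the paper's own proof: resolve each sender's $\tau$-choice first (rules \rulename{Tau}/\rulename{TauPar}), case-split on the set of senders that elected to broadcast, show that exactly the singleton branches reach the target network via \rulename{Snd}, \rulename{Rcv}, \rulename{RcvEnb}, \rulename{Bcast} and \rulename{ShhSnd} (the empty and multi-broadcast branches being collected into the residual $\Delta$), and conclude with Proposition~\ref{M_go_to_N} using the weight $\sum_{i\in I}p_i\prod_{j\in I\setminus\{i\}}(1-p_j)$. No gaps worth flagging.
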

Here, the reader may notice that the tolerance is 
different with respect to that derived in Theorem~\ref{thm:propagation}. This is because, in the presence of collisions,  forwarders successfully receive (and forward) a value only if exactly one sender transmits it; the forwarding fails otherwise.

As an example, consider  a variant 
of the gossiping network $\mathrm{GSP}_1$ of
the previous section where communication collisions are taken into account:
\[
\mathrm{GSP}_4 \; \deff \; 
\prod_{i=1}^{2}\node {\nu_{s_i}}  {s_i} {\sndv_{p}} \q \big| \q 
\node {\nu_d} d {\fwdc_1}   . 
\]
The corresponding network where the message has been 
propagated up to the destination node is not 
affected by the presence of collisions. Just for convenience, we define $\mathrm{DONE}_4 =\mathrm{DONE}_1$.
Since  the node $\mathit{tester} \in \nu_d$
 cannot transmit, by an application of  Theorem~\ref{thm:propagation3} and
Proposition~\ref{thm:timing}(\ref{random3})  it follows that:  
\[
\mathrm{GSP}_4 \;  \simtoll{ 1-2 p(1-p) } \; \,  \mathrm{DONE}_4   .
\]
Thus, $\mathrm{GSP}_4$ succeeds in propagating the message $v$ with (at least) probability  $2 p(1-p)$. In the previous section, we have seen that in 
the absence of collisions, 
the success probability of $\mathrm{GSP_1}$ is (at least)  $2p -p^2$. 
Since  $2 p(1-p) < 2p - p^2$, for any $p> 0$, it follows that the 
 presence of collisions  downgrades the performance of
this small network of $p^2$, that coincides with the probability that both senders transmit.

Notice that,  unlike Theorem~\ref{thm:propagation},  
Theorem~\ref{thm:propagation2} is still valid when dealing with communications collisions due to its general formulation. 
 However, for
commodity, in the following we will use two instances of it. 
\begin{cor}
\label{cor:propagation2_coll}
If $M \ntransSimone[\sigma]$ then
\[
\begin{array}{c}
N \q \big| \q
\node {\nu_{m}} {m} {\sndv_{p}} \q \big| \q
\prod_{j\in J} \node {\nu_{n_j}} {n_j} {\wsndvc_{{q}}}  
\simtoll{r}   \:
M
\end{array}
\]
with $r=ps_1 + (1-p)s_2$, whenever: 
\begin{itemize}
\item
$
N \q \big| \q  
\node {\nu_{m}} {m} {\nil}  \q \big| \q
\prod_{j\in J} \node {\nu_{n_j}} {n_j} { \nil}
\simtoll{s_1} \:
M$
\item
$
N \q \big| \q
 \node {\nu_{m}} {m} {\nil}  \q \big| \q
\prod_{j\in J} \node {\nu_{n_j}} {n_j} {\wsndvc_{q}}
\simtoll{s_2} \:
M$
\item  
$\{n_j : j \in J\} \subseteq \nu_{m} \subseteq\{n_j : j \in J\} \cup \nds{M}$
\item 
nodes in $\nu_{m} \cap \nds{M}$ cannot receive in the current round.
\end{itemize}
\end{cor}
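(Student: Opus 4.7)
The plan is to derive this corollary as a direct specialization of Theorem~\ref{thm:propagation2}. The key observation is that the ``has received, waiting to rebroadcast'' state $\wsndvc_q$ is in fact a receive-timeout process: unfolding the abbreviations introduced at the start of this section gives
\[
\wsndvc_q \;=\; \mathsf{resndc}\langle v \rangle_q \;=\; \rcvtime{x}{\nil}{\sndv_q} .
\]
Hence each factor $\node{\nu_{n_j}}{n_j}{\wsndvc_q}$ fits the pattern $\node{\nu_{n_j}}{n_j}{\rcvtime{x_j}{P_j}{Q_j}}$ required by Theorem~\ref{thm:propagation2}, with $P_j \deff \nil$ and $Q_j \deff \sndv_q$.

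With this identification I would invoke Theorem~\ref{thm:propagation2}. The condition $M \ntransSimone[\sigma]$, the neighbourhood inclusion $\{n_j : j\in J\} \subseteq \nu_m$, and the assumption that the nodes in $\nu_m \cap \nds{N}$ cannot receive in the current round all transfer verbatim from the hypotheses of the corollary. The two behavioural premises of Theorem~\ref{thm:propagation2}, after observing that ${\subst v {x_j}}\nil = \nil$ because $\nil$ has no free variables, specialize to
\[
N \; \big| \; \node{\nu_m}{m}{\nil} \; \big| \; \prod_{j\in J}\node{\nu_{n_j}}{n_j}{\nil} \; \simtoll{q_1} \; M
\]
and
\[
N \; \big| \; \node{\nu_m}{m}{\nil} \; \big| \; \prod_{j\in J}\node{\nu_{n_j}}{n_j}{\wsndvc_q} \; \simtoll{q_2} \; M ,
\]
which are precisely the two hypotheses of the corollary under the renaming $q_1 \mapsto s_1$, $q_2 \mapsto s_2$. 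The conclusion of Theorem~\ref{thm:propagation2} then yields the corollary immediately, with tolerance $r = p s_1 + (1-p) s_2$.

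There is essentially no obstacle here: the proof is a routine unfolding of notation followed by a single application of an already established result. The only point worth checking is the substitution step, namely that the bound variable $x_j$ does not occur in $Q_j = \sndv_q$, so that ${\subst v {x_j}}$ trivially acts on $P_j = \nil$ and leaves the timeout branch untouched. Once this is noted, the tolerance formula is read off directly from Theorem~\ref{thm:propagation2}.
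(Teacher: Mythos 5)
Your proposal is correct and takes essentially the same route as the paper: the corollary is stated there exactly as an instance of Theorem~\ref{thm:propagation2}, obtained as you do by unfolding $\wsndvc_q$ into the timeout process $\rcvtime{x}{\nil}{\sndv_q}$ and taking $P_j=\nil$, $Q_j=\sndv_q$, so that the two behavioural premises and the tolerance $r=ps_1+(1-p)s_2$ are read off directly. The only cosmetic point is that the corollary phrases its side conditions with $\nds{M}$ where Theorem~\ref{thm:propagation2} uses $\nds{N}$; this wording discrepancy is already present in the paper's statements and does not affect your argument.
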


\begin{cor}
\label{cor:propagation2_coll_bis}
If $M \ntransSimone[\sigma]$ then
\[
\begin{array}{c}
N \q \big| \q
\node {\nu_{m}} {m} {\sndv_{p}} \q \big| \q
\prod_{j\in J} \node {\nu_{n_j}} {n_j} {\fwdc_{{q}}} 
\simtoll{r} \:
M
\end{array}
\]
with $r=ps_1 + (1-p)s_2$, whenever:
\begin{itemize}
\item
$
N \q \big| \q  
\node {\nu_{m}} {m} {\nil}  \q \big| \q
\prod_{j\in J} \node {\nu_{n_j}} {n_j} {\wsndvc_{q} } 
\simtoll{s_1} \: 
M
$
\item
$
N \q \big| \q
 \node {\nu_{m}} {m} {\nil}  \q \big| \q
\prod_{j\in J} \node {\nu_{n_j}} {n_j} {\fwdc_{{q}}}
\simtoll{s_2} \: 
M$
\item  
$\{n_j : j \in J\} \subseteq \nu_{m} \subseteq\{n_j : j \in J\} \cup \nds{M}$
\item 
nodes in $\nu_{m} \cap \nds{M}$ cannot receive in the current round.
\end{itemize}
\end{cor}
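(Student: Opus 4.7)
The plan is to derive this corollary as a direct instance of Theorem~\ref{thm:propagation2}, exactly mirroring the way Corollary~\ref{cor:propagation2} specialises that theorem to the collision-free forwarder $\fwd$. The only genuinely new ingredient is recognising how the collision-aware forwarder process $\fwdc_q$ fits into the pattern $\rcvtime{x_j}{P_j}{Q_j}$ required by Theorem~\ref{thm:propagation2}.

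First I would unfold the recursion in $\fwdc_q$. Recall that $\fwdc_q = \rcv x {\mathsf{resndc}\langle x\rangle_q}$, which is an abbreviation for $\fix X \rcvtime x {\mathsf{resndc}\langle x\rangle_q}{(1{\colon}X)}$. By one application of structural congruence (Definition~\ref{def:struc-cong}) we have
\[
\node{\nu_{n_j}}{n_j}{\fwdc_q}\;\equiv\;\node{\nu_{n_j}}{n_j}{\rcvtime{x}{\mathsf{resndc}\langle x\rangle_q}{\fwdc_q}}\,,
\]
so each forwarder node is in the form $\rcvtime{x_j}{P_j}{Q_j}$ required by Theorem~\ref{thm:propagation2}, with $P_j = \mathsf{resndc}\langle x_j\rangle_q$ and $Q_j = \fwdc_q$.

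Next, I would compute the instantiated continuation. Since $\mathsf{resndc}\langle x_j\rangle_q = \rcvtime{y}{\nil}{\mathsf{snd}\langle x_j\rangle_q}$, we have
\[
{\subst v {x_j}}P_j \;=\; \rcvtime{y}{\nil}{\mathsf{snd}\langle v\rangle_q} \;=\; \mathsf{resndc}\langle v\rangle_q,
\]
which is precisely the process abbreviated by $\wsndvc_q$. Thus the two hypotheses of the corollary correspond exactly to the two hypotheses of Theorem~\ref{thm:propagation2}: the first (with $\wsndvc_q$ at each $n_j$) matches the case where the sender $m$ has broadcast and every receiver has moved to ${\subst v {x_j}}P_j$, and the second (with $\fwdc_q$ at each $n_j$) matches the case where $m$ did not broadcast and each receiver is back in $\rcvtime{x_j}{P_j}{Q_j}$ (using $\equiv$ to fold the recursion again).

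The topological side conditions and the non-reception requirement are stated identically in the two results, so they transfer without change; the assumption $M\ntransSimone[\sigma]$ is likewise shared. Applying Theorem~\ref{thm:propagation2} with the identifications above yields the desired bound $r = ps_1 + (1-p)s_2$, completing the proof. I do not expect any real obstacle: the only subtlety is the bookkeeping between $\fwdc_q$, its one-step unfolding, and the abbreviation $\wsndvc_q$, which is essentially the same bookkeeping already carried out in the proof of Corollary~\ref{cor:propagation2}.
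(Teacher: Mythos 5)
Your proposal is correct and follows exactly the route the paper intends: the paper gives no separate proof of this corollary, presenting it (like Corollary~\ref{cor:propagation2}) as a direct instance of Theorem~\ref{thm:propagation2}, whose general formulation remains valid under collisions. Your bookkeeping is the right one---unfolding $\fwdc_q$ by structural congruence to $\rcvtime{x}{\mathsf{resndc}\langle x\rangle_q}{\fwdc_q}$, taking $P_j=\mathsf{resndc}\langle x_j\rangle_q$, $Q_j=\fwdc_q$, and observing ${\subst v {x_j}}P_j=\wsndvc_q$---so the corollary follows by instantiation (the only nitpick being that the side conditions in the corollary are phrased with $\nds{M}$ rather than the theorem's $\nds{N}$, a discrepancy in the paper itself rather than in your argument).
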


Let us apply the previous results to 
study how communication collisions degrade the performances 
of gossip protocols. Let us reformulate the network 
$\mathrm{GSP}_2$ of the previous section in a scenario with potential 
communication collisions. Let us define
\[
\mathrm{{GSP}_5} \; \deff \; 
\prod_{i=1}^{2}\node {\nu_{s_i}}  {s_i} {\sndv_{p}} \q \big| \q
\prod_{i=1}^{3} \node {\nu_{n_i}}  {n_i} {\fwdc_{p}} \q \big| \q
\node {\nu_d} d {\fwdc_1}
\]
with the same network topology as $\mathrm{GSP}_2$: $\nu_{s_1} = \{ n_1\}$, $\nu_{s_2} = \{ n_1 , n_2 \}$, $\nu_{n_1}{=} \{ s_1, s_2 , n_3\}$, $\nu_{n_2}{=}\{ s_2, n_3 \}$, $\nu_{n_3} = \{ n_1 , n_2,d \}$ and $\nu_d = \{ n_3 , \mathit{tester} \}$.
Note that, in the same round,  the sender $s_2$ may 
broadcast to its neighbours $n_1$ and $n_2$, whereas  the sender
$s_1$ may broadcast (only) to its neighbour $n_1$, causing a failure at $n_1$.

As in the previous section, we will quantify the tolerance 
between the gossiping network $\mathrm{GSP}_5$ and the network
 $\mathrm{DONE}_5 = \mathrm{DONE}_2$, in which at least one instance of the message
$v$ has been successfully propagated up to the destination node.

As a first step, we compute an estimation of the tolerance which allows $\mathrm{GSP}_5$ to simulate $\mathrm{DONE}_5$, under the hypothesis that the sender
$s_1$  broadcasts to $n_1$ and
the sender $s_2$  
broadcasts to $n_1$ and $n_2$, causing a failure at $n_1$.
To this end,   
we derive the following chain of similarities by applying, 
in sequence:  
\begin{inparaenum}[(i)]
\item Proposition~\ref{thm:timing}(\ref{timing5})  to nodes $n_3$ and $d$, together to  Proposition~\ref{thm:timing}(\ref{timing34}) applied to node $n_2$, to express that these nodes do not receive any message because there are no transmitters in their neighbourhood, as a consequence, they are obliged to timeout;
\item
Theorem~\ref{thm:propagation3} 
in which node  $n_2$ acts as a sender and $n_3$ as a forwarder, together  
 Proposition~\ref{thm:timing}(\ref{timing2}) to work underneath $\sigma$ prefixes;
\item
Proposition~\ref{thm:timing}(\ref{timing1}) to remove a $\sigma$ prefix in $n_2$
together with  Proposition~\ref{thm:timing}(\ref{timing5}) as node $d$ cannot receive message and hence it timeouts; 
\item
Theorem ~\ref{thm:propagation3} in which node  $n_3$ acts as a sender and $d$ as a forwarder, together with Proposition~\ref{thm:timing}(\ref{timing2}) to work underneath $\sigma$ prefixes;
\item
Proposition~\ref{thm:timing}(\ref{timing1}) to remove  $\sigma$ prefixes in $n_3$, together with Proposition~\ref{thm:timing}(\ref{timing34}) as node $d$ is obliged to timeout.
In all steps, we reason up to parallel networks 
(Theorem~\ref{cor:non_expansiveness}).
\end{inparaenum}
For convenience, in the chain below we define two auxiliary networks:  $\mathrm{NET}_h$ and  $\mathrm{NET}_i$. 
\[
\begin{array}{rrl}
& & \mathrm{GSP}_5(s_1 \mbox{ gossiped}; s_2 \mbox{ gossiped}) 
\\[2pt]
& \deff & 
\prod_{i=1}^{2}\node {\nu_{s_i}}  {s_i} {\nil} \; \big| \;
\node {\nu_{n_1}}  {n_1} {\nil} \; \big| \;
\node {\nu_{n_2}}  {n_2} {\sndvc_{p}} \; \big| \;
 \node {\nu_{n_3}}  {n_3} {\fwdc_{p}} \; \big| \;
\node {\nu_d} d {\fwdc_1}
\\[3pt]
& \simtoll{0}& 
\prod_{i=1}^{2}\node {\nu_{s_i}}  {s_i} { \nil} \; \big| \;
\node {\nu_{n_1}}  {n_1} { \nil} \; \big| \;
\node {\nu_{n_2}}  {n_2} {\sigma. \sndv_{p}} \; \big| \;
\node {\nu_{n_3}}  {n_3} { \sigma.\fwdc_{p}} \; \big| \;
\node {\nu_d} d {\sigma. \fwdc_1}
\\[3pt]
& \deff & \mathrm{NET}_h
\\[3pt]
& \simtoll{1-p}& 
\prod_{i=1}^{2}\node {\nu_{s_i}}  {s_i} {\nil} \; \big| \;
\node {\nu_{n_1}}  {n_1} {\nil} \; \big| \;
\node {\nu_{n_2}}  {n_2} {\sigma. \nil} \; \big| \;
\node {\nu_{n_3}}  {n_3} { \sigma.{{\sndvc_{p}}}} \; \big| \;
\node {\nu_d} d { \sigma.\fwdc_1}
\\[3pt]
& \deff & \mathrm{NET}_i
\\[3pt]
& \simtoll{0}& 
\prod_{i=1}^{2}\node {\nu_{s_i}}  {s_i} { \nil} \; \big| \;
\prod^{2}_{i=1} \node {\nu_{n_i}}  {n_i} { \nil} \; \big| \;
 \node {\nu_{n_3}}  {n_3} { \sigma^2 \sndv_{p}} \; \big| \;
\node {\nu_d} d {\sigma^2. \fwdc_1}
\\[3pt]
& \simtoll{1-p}& 
\prod_{i=1}^{2}\node {\nu_{s_i}}  {s_i} {\nil} \; \big| \;
\prod^{2}_{i=1} \node {\nu_{n_i}}  {n_i} {\nil} \; \big| \;
 \node {\nu_{n_3}}  {n_3} {\sigma^2. \nil} \; \big| \;
\node {\nu_d} d {\sigma^2. {{\sndvc_1}}}
\\[2pt]
& \simtoll{0}& 
\prod_{i=1}^{2}\node {\nu_{s_i}}  {s_i} {\nil} \; \big| \;
\prod^{3}_{i=1} \node {\nu_{n_i}}  {n_i} {\nil} \; \big| \;
\node {\nu_d} d { \sigma^3 . \sndv_1} 
\\[3pt]
& =& 
\mathrm{DONE}_5  .
\end{array}
\]
Now, by transitivity (Proposition~\ref{prop:transitivity}) we can derive the following equation: 
\[
\mathrm{NET}_i
%%& 
\; \simtoll{1-p}  \; 
 \mathrm{DONE}_5 . 
\]
Thus, we are in condition  to apply  Proposition~\ref{lem:O_go_to_N_sim_M_conSigma}, in which $M$, $N$, and $O$ are instantiated with $\mathrm{DONE}_5$, $\mathrm{NET}_i$, and $\mathrm{NET}_h$, respectively, to derive the equation:
\[
\mathrm{NET}_h \; \simtoll{1-p^2} \; \, \mathrm{DONE}_5 
\]
where the tolerance $1- p^2$ is obtained by solving the expression $(1-p)(1-(1-p)) + (1-p)$. 
By transitivity  on the first   step  of the  chain above, we can finally derive the equation:
\begin{equation}
\label{eq_GSP3-1}
\mathrm{GSP}_5(s_1 \mbox{ gossiped}; s_2 \mbox{ gossiped}) \; \simtoll{1-p^2} \; \, 
\mathrm{DONE}_5  
\end{equation}
%%% promemoria per noi:
%%% 1-p^2 viene fuori da (1-p)*p + (1-p)

Similarly, we compute an estimation of the tolerance which allows 
  $\mathrm{GSP}_5$ to simulate $\mathrm{DONE}_5$, 
under the hypothesis that the sender $s_2$ broadcasts  to $n_1$ and $n_2$, and
the sender $s_1$ does not transmit. 
To this end, we derive the following chain of similarities by applying, 
in sequence:
\begin{inparaenum}[(i)]
\item Proposition~\ref{thm:timing}(\ref{timing34}) to timeout nodes $n_1$ and $n_2$, together with Proposition~\ref{thm:timing}(\ref{timing5}) to timeout nodes $n_3$ and $d$; 
\item
Theorem~\ref{thm:propagation3} in which $n_1$ and $n_2$ act as senders and 
$n_3$ as forwarder, together with  Proposition~\ref{thm:timing}(\ref{timing2}) 
to work underneath $\sigma$ prefixes; 
\item
Proposition~\ref{thm:timing}(\ref{timing1}) to remove $\sigma$ prefixes from $n_1$ and $n_2$, together with Proposition~\ref{thm:timing}(\ref{timing34}) to timeout $n_3$, together with Proposition~\ref{thm:timing}(\ref{timing5}) to timeout $d$,
together with  Proposition~\ref{thm:timing}(\ref{timing2}) 
to work underneath $\sigma$ prefixes; 
\item
Theorem~\ref{thm:propagation3} where $n_3$ acts as sender and $d$ as forwarder, together with  Proposition~\ref{thm:timing}(\ref{timing2}) to work underneath $\sigma$ prefixes;
\item
Proposition~\ref{thm:timing}(\ref{timing1}) to remove $\sigma$ prefixes in $n_3$, together with  Proposition~\ref{thm:timing}(\ref{timing34}) to timeout node $d$.
\end{inparaenum}
As usual, in all steps, we reason up to parallel composition 
(Theorem~\ref{cor:non_expansiveness}).
For convenience, in the chain below we define two auxiliary networks: $\mathrm{NET}_l$ and $\mathrm{NET}_m$. 
\[
\begin{array}{rrl}
&& \mathrm{GSP}_5(s_1 \mbox{ did not gossip}\mbox{; }s_2 \mbox{ gossiped})
\\[2pt]
& \deff &
\prod_{i=1}^{2}\node {\nu_{s_i}} {s_i} {\nil} \; \big| \;
\prod^{2}_{i=1} \node {\nu_{n_i}} {n_i} {\wsndvc_{p}} \; \big| \;
\node {\nu_{n_3}} {n_3} {\fwdc_{p}} \; \big| \;
\node {\nu_d} d {\fwdc_1}
\\[3pt]
&\simtoll{0}&
\prod_{i=1}^{2}\node {\nu_{s_i}} {s_i} { \nil} \; \big| \;
\prod^{2}_{i=1} \node {\nu_{n_i}} {n_i} {\sigma. \sndv_{p}} \; \big| \;
\node {\nu_{n_3}} {n_3} {\sigma. \fwdc_{p}} \; \big| \;
\node {\nu_d} d {\sigma.\fwdc_1}
\\[3pt]
& \deff & \mathrm{NET}_l
\\[3pt]
&\simtoll{1-q}&
\prod_{i=1}^{2}\node {\nu_{s_i}} {s_i} {\nil} \; \big| \;
\prod^{2}_{i=1} \node {\nu_{n_i}} {n_i} {\sigma. \nil} \; \big| \;
\node {\nu_{n_3}} {n_3} { \sigma.{\sndvc_{p}}} \; \big| \;
\node {\nu_d} d { \sigma.\fwdc_1}
\\[3pt]
& \deff & \mathrm{NET}_m
\\[3pt]
&\simtoll{0}&
\prod_{i=1}^{2}\node {\nu_{s_i}} {s_i} {\nil} \; \big| \;
\prod^{2}_{i=1} \node {\nu_{n_i}} {n_i} {\nil} \; \big| \;
\node {\nu_{n_3}} {n_3} {\sigma^2 .\sndv_{p}} \; \big| \;
\node {\nu_d} d {\sigma^2. \fwdc_1}
\\[3pt]
&\simtoll{1-p}&
\prod_{i=1}^{2}\node {\nu_{s_i}} {s_i} {\nil} \; \big| \;
\prod^{2}_{i=1} \node {\nu_{n_i}} {n_i} {\nil} \; \big| \;
\node {\nu_{n_3}} {n_3} {\sigma^2. \nil} \; \big| \;
\node {\nu_d} d {\sigma^2. {{\sndvc_1}}}
\\[3pt]
&\simtoll{0}&
\prod_{i=1}^{2}\node {\nu_{s_i}} {s_i} {\nil} \; \big| \;
\prod^{3}_{i=1} \node {\nu_{n_i}} {n_i} {\nil} \; \big| \;
\node {\nu_d} d {\sigma^3. \sndv_1}
\\[3pt]
& =&
\mathrm{DONE}_5
\end{array}
\]
with $q=2p(1{-}p)$.
By transitivity (Proposition~\ref{prop:transitivity}) we can derive the equation
\[
\mathrm{NET}_m
\; \simtoll{1-p} \; \, 
\mathrm{DONE}_5 . 
\]
Thus, we are in condition to apply Proposition~\ref{lem:O_go_to_N_sim_M_conSigma}, in which $M$, $N$, and $O$ are instantiated with $\mathrm{DONE}_5$, $\mathrm{NET}_m$, and $\mathrm{NET}_l$, respectively, to derive the equation 
\[
\mathrm{NET}_l
\; \simtoll{1 -2(p^2 -p^3)} \; \, 
\mathrm{DONE}_5
\]
where the tolerance $ 1 -2(p^2 -p^3)$ is obtained by solving the expression $(1-p)(1-(1-q)) + (1-q)$ with $q=2p(1{-}p)$.
By transitivity on the first step of the chain above, we can finally derive the equation 
\begin{equation}
\label{eq_GSP3-2}
\mathrm{GSP}_5(s_1 \mbox{ did not gossip}; s_2 \mbox{ gossiped})
 \q \simtoll{ 1 -2(p^2 -p^3)} \q \, 
\mathrm{DONE}_5  
\end{equation}
%
%%%promemoria per noi:
%%% 1 -2(p^2 -p^3) vien e fuori da 2p(1-p) * (1-p) +[1 - 2p(1-p)]

Now, let us denote with $\mathrm{GSP}_5(s_1[v] \mbox{; } s_2 \mbox{ gossiped})$ the following network: 
\[ 
\node {\nu_{s_1}}  {s_1} {\sndv_{p}} \; \big| \;
\node {\nu_{s_2}}  {s_2} {\nil} \; \big| \;
\prod^{2}_{i=1} \node {\nu_{n_i}}  {n_i} {{\sndvc_{p}}} \; \big| \;
\node {\nu_{n_3}}  {n_3} {\fwdc_{p}} \; \big| \;
\node {\nu_d} d {\fwdc_1}  . 
\]
This is the gossip network $\mathrm{GSP}_5$ in which the sender $s_2$ has already broadcast the message $v$ to its neighbours $n_1$ and $n_2$, and the sender $s_1$ has still  to decide whether to broadcast $v$ (to $n_1$) or not. We can apply  Corollary~\ref{cor:propagation2_coll} to Equation~\ref{eq_GSP3-1} and Equation~\ref{eq_GSP3-2} to derive 
\begin{equation} 
\label{eq_GSP3-3}
\mathrm{GSP}_5(s_1[v]\mbox{; } s_2 \mbox{ gossiped}) \;  \simtoll{r} \; \, 
\mathrm{DONE}_5 
\end{equation}
with tolerance $r=1- 2p^2 +3p^3 -2p^4$, 
obtained by solving $p(1-p^2) + (1-p)(1-2(p^2-p^3))$. 

%%%promemoria per noi:
%%%% 1- 2p^2 +3p^3 -2p^4 viene fuori da p(1-p^2) + (1-p)(1-2p^2+2p^3)
This gives us an estimation of the tolerance which
allows the network $\mathrm{GSP}_5$ to simulate 
the network $\mathrm{DONE}_5$ when 
the sender $s_2$ has already broadcast the message $v$ to its neighbours $n_1$ and $n_2$, whereas the sender $s_1$ has still  to decide whether to broadcast $v$ (to $n_1$) or not.

Similarly, we compute an estimation of the tolerance  which
allows the network $\mathrm{GSP}_5$ to simulate the network $\mathrm{DONE}_5$
when the sender $s_2$ has decided not to broadcast the message $v$,  while $s_1$ has still to decide  whether to broadcast $v$ or not.
To this end, we derive the following chain of similarities by applying, in sequence: 
\begin{inparaenum}[(i)]
\item
Theorem~\ref{thm:propagation3} where $s_1$ broadcasts its message and  $n_1$
receives it; 
\item Proposition~\ref{thm:timing}(\ref{timing34}) to timeout node $n_1$, together with 
Proposition~\ref{thm:timing}(\ref{timing5}) to timeout nodes $n_2$, $n_3$, and $d$; 
\item
Theorem~\ref{thm:propagation3} where $n_1$ acts as transmitter and $n_3$ as receiver, together with Proposition~\ref{thm:timing}(\ref{timing2}) to work underneath $\sigma$ prefixes;
\item
Proposition~\ref{thm:timing}(\ref{timing1}) to remove the $\sigma$ prefix in $n_1$, together with Proposition~\ref{thm:timing}(\ref{timing5}) to timeout nodes $n_2$ and $d$, and Proposition~\ref{thm:timing}(\ref{timing34}) to timeout $n_3$; 
\item
Theorem~\ref{thm:propagation3} in which $n_3$ transmits its value and both 
$n_2$ and $d$ receive it, together with Proposition~\ref{thm:timing}(\ref{timing2})
to work underneath $\sigma$ prefixes;
\item
Proposition~\ref{thm:timing}(\ref{timing1}) to remove $\sigma$ prefixes from $n_3$, 
together with Proposition~\ref{thm:timing}(\ref{timing34}) to timeout $n_2$;  
Proposition~\ref{thm:timing}(\ref{law:no-trans})  to model that the transmission of $n$ gets lost, together
with Proposition~\ref{thm:timing}(\ref{timing2}) to work underneath $\sigma$ prefixes.
\end{inparaenum}
As done before, in all steps we use Theorem~\ref{cor:non_expansiveness} to reason up to parallel components.
For convenience, in the chain below we define three auxiliary networks:  $\mathrm{NET}_n$,  $\mathrm{NET}_o$ and  $\mathrm{NET}_t$. 
\[
\begin{array}{rrl}
&& \mathrm{GSP}_5(s_1[v] \mbox{; } s_2 \mbox{ did not gossip})
\\[2pt]
& \deff &
\node {\nu_{s_1}}  {s_1} {\sndv_p} \; \big| \;
\node {\nu_{s_2}}  {s_2} {\nil} \; \big| \;
\prod_{i=1}^{3} \node {\nu_{n_i}}  {n_i} {\fwdc_{p}} \; \big| \;
\node {\nu_d} d {\fwdc_1} 
\\[3pt]
&\simtoll{1-p}&
\prod_{i=1}^2\node {\nu_{s_i}}  {s_i} {\nil} \; \big| \;
\node {\nu_{n_1}}  {n_1} {\sndvc_{p}} \; \big| \;
\prod^{3}_{j=2} \node {\nu_{n_j}}  {n_j} {\fwdc_{p}} \; \big| \;
\node {\nu_d} d {\fwdc_1}
\\[3pt]
& \deff & \mathrm{NET}_n
\\[3pt]
&\simtoll{0}&
\prod_{i=1}^{2}\node {\nu_{s_i}}  {s_i} { \nil} \; \big| \;
\node {\nu_{n_1}}  {n_1} {  \sigma.\sndv_{p}} \; \big| \;
\prod^{3}_{i=2} \node {\nu_{n_i}}  {n_i} { \sigma. \fwdc_{p}} \; \big| \;
\node {\nu_d} d {  \sigma.\fwdc_1}
\\[3pt]
& \deff & \mathrm{NET}_o
\\[3pt]
&\simtoll{1-p}&
\prod_{i=1}^{2}\node {\nu_{s_i}}  {s_i} {\nil} \; \big| \;
\node {\nu_{n_1}}  {n_1} { \sigma. \nil} \; \big| \;
\node {\nu_{n_2}}  {n_2} { \sigma. \fwdc_p} \; \big| \;
\node {\nu_{n_3}}  {n_3} { \sigma. \sndvc_{p}} \; \big| \;
\node {\nu_d} d { \sigma . \fwdc_1}
\\[3pt]
& \deff & \mathrm{NET}_t
\\[3pt]
&\simtoll{0}&
\prod_{i=1}^{2}\node {\nu_{s_i}} 
 {s_i} {\nil} \; \big| \;
\node {\nu_{n_1}}  {n_1} { \nil} \; \big| \;
\node {\nu_{n_2}}  {n_2} {\sigma^2.\fwdc_p} \; \big| \;
\node {\nu_{n_3}}  {n_3} {\sigma^2.\sndv_{p}} \; \big| \;
\node {\nu_d} d {\sigma^2.\fwdc_1}
\\[3pt]
&\simtoll{1-p}&
\prod_{i=1}^{2}\node {\nu_{s_i}}  {s_i} {\nil} \; \big| \;
 \node {\nu_{n_1}}  {n_1} {\nil} \; \big| \;
 \node {\nu_{n_2}}  {n_2} {\sigma^2 .\sndvc_p} \; \big| \;
\node {\nu_{n_3}}  {n_3} {  \sigma^2.\nil} \; \big| \;
\node {\nu_d} d {\sigma^2. \sndvc_1}
\\[3pt]
&\simtoll{0}&
\prod_{i=1}^{2}\node {\nu_{s_i}}  {s_i} { \nil} \; \big| \;
 \node {\nu_{n_1}}  {n_1} { \nil} \; \big| \;
 \node {\nu_{n_2}}  {n_2} { \sigma^3. \sndv_p} \; \big| \;
\node {\nu_{n_3}}  {n_3} { \nil} \; \big| \;
\node {\nu_d} d { \sigma^3.\sndv_1 }
\\[3pt]
&\simtoll{0}&
\prod_{i=1}^{2}\node {\nu_{s_i}}  {s_i} {\nil} \; \big| \;
\prod^{3}_{i=1} \node {\nu_{n_i}}  {n_i} {\nil} \; \big| \;
 \node {\nu_d} d {\sigma^3. \sndv_1}
\\[3pt]
&=& \mathrm{DONE}_5 \,  .
\end{array}
\]
By transitivity (Proposition~\ref{prop:transitivity}) we can derive the equation 
\[
\mathrm{NET}_t
\; \simtoll{1-p} \; \, 
 \mathrm{DONE}_5 . 
\]
This allows us  to apply  Proposition~\ref{lem:O_go_to_N_sim_M_conSigma}, in which $M$, $N$, and $O$ are instantiated with $\mathrm{DONE}_5$, $\mathrm{NET}_t$, and $\mathrm{NET}_o$, respectively, to  derive the equation 
\[
\mathrm{NET}_o
\; \simtoll{1- p^2} \; \,
 \mathrm{DONE}_5 
\]
where the tolerance $1- p^2$ is obtained by solving the expression $(1-p)(1-(1-p)) + (1-p)$. 
 Then, by transitivity (Proposition~\ref{prop:transitivity}) we can easily derive
\[
\mathrm{NET}_n
\; \simtoll{1- p^2} \; \,
 \mathrm{DONE}_5 . 
\] 
This last equation  can be used to apply  Proposition~\ref{lem:O_go_to_N_sim_M}, 
in which $M$, $N$, and $O$ are instantiated with $\mathrm{DONE}_5$, $\mathrm{NET}_n$, and 
$\mathrm{GSP}_5(s_1[v]\mbox{; } s_2 \mbox{ did not gossip})$, respectively,  to  derive:
\begin{equation}
\label{eq_GSP3-4}
\begin{array}{c}
\mathrm{GSP}_5(s_1[v] \mbox{; } s_2 \mbox{ did not gossip})  \simtoll{1-p^3} \q\,
\mathrm{DONE}_5 
\end{array}
\end{equation}
where the tolerance $1- p^3$ is obtained by solving the expression $(1-p)(1-(1-p^2)) + (1-p^2)$. 

%%% promemoria per noi
%%% 1-p^3 viene fuori come segue. prima calcoliamo 1-p^2 come in un promemoria precedente.
%%% poi: (1-p^2)*p + (1-p)

Now, by applying Corollary~\ref{cor:propagation2_coll_bis} to Equation~\ref{eq_GSP3-3} and Equation~\ref{eq_GSP3-4} we can combine the two probabilistic behaviours of the sender $s_2$, thus obtaining: 
\[
\mathrm{GSP}_5 \; \simtoll{r}\; \,  \mathrm{DONE}_5 
\]
with tolerance $r=1-(3p^3-4p^4+2p^5)$ obtained by solving $p(1-2p^2+3p^3-2p^4) +(1-p)(1-p^3)$.
%%% promemoria per noi
%%%% 1-(3p^3-4p^4+2p^5) viene fuori da p(1-2p^2+3p^3-2p^4) +(1-p)(1-p^3)
This says that the gossip network $\mathrm{GSP}_5$ will succeed in transmitting the message $v$ to the destination $d$ with probability (at least) $3p^3-4p^4+2p^5$. 

In the previous section, we have seen that in 
the absence of collisions, 
the success probability of $\mathrm{GSP}_2$ is (at least)  $3p^3 -2p^4$. 
Since  $3p^3 -4p^4 +2p^5 < 3p^3 - 2p^4$, for any $p> 0$, it follows that the 
 presence of collisions  downgrades the performance of
this small network of $2p^4 - 2p^5$. 
Thus, for instance, for a gossip probability $p=0.8$ the destination node 
in $\mathrm{GSP}_5$ will receive the message with probability (at least) 
$0.55$,  whereas in $\mathrm{GSP}_2$ this probability is $0.716$; similarly for $p=0.9$ the probability of success in $\mathrm{GSP}_5$ is about $0.74$ while in $\mathrm{GSP}_2$ it is $0.88$. In bigger networks, with more collisions, the degradation 
will be much more evident. 

\begin{rem}
Due to its general formulation, Theorem~\ref{thm:propagation2bis}, dealing with message propagation 
in paths of different lengths, is valid also in the presence of communication collisions.
\end{rem}

%%%%%%%%%%%%%%%%%%%%%%%%%%%%%%%%%%%%%%%%%

\section{Random delays to mitigate the effects of collisions}
\label{sec:random}
In this section, we show how \cname{} can be used to  model and reason on
 \emph{randomised gossip protocols\/}, where messages may be broadcast  
 in different time instants,  according to some probability distribution (see, \emph{e.g.}, \cite{ansgar2006}). 
In this manner, if there is a potential  collision at a given round 
there are still chances that the message will reach the final destination 
in the following rounds, improving the probability of success of the gossip protocol. 

Randomisation may be implemented in different ways. 
For instance, one may decide to broadcast messages according
to a \emph{uniform  probability distribution} within a \emph{discrete} time 
interval $1{..}k$, for $k \in \mathbb{N}$. This means that 
the message may be transmitted in each of the $k$ time instants, with a 
probability $\frac{1}{k}$. 
In such a scenario, both sender and forwarder processes should be 
reformulated as follows:
\[
\begin{array}{rcl}
{\sndu v}_{p_{\mathrm{g}},k}
& \deff & \tau.\big(\tau.(\bigoplus_{i=1}^k \frac{1}{k}{\colon}\sigma^{i}. \sndv_{1} )
 \oplus_{p_{\mathrm{g}}} \nil\big)\\[1pt]
\wsndvu_{p_{\mathrm{g}},k}
& \deff & \rcvtime y \nil {{\sndu v}_{p_{\mathrm{g}},k}} \\[1pt]
\fwdu_{p_{\mathrm{g}},k} & \deff &
{\rcv x { \mathsf{resndu}\langle x\rangle_{p_{\mathrm{g}},k}}} .
\end{array}
\]

The main algebraic instrument to reason on the performance of the protocol remains message propagation. 
The following result is a straightforward reformulation of Theorem~\ref{thm:propagation3}. In Corollary~\ref{thm:propagationRandomDelays}, for simplicity, 
we have only one sender that broadcasts with probability $1$. 
This sender  derives by the resolution of the 
probabilistic choice within any term ${\sndu v}_{p_{\mathrm{g}},k}$.
\begin{cor}[Message propagation with collisions and random delays]
\label{thm:propagationRandomDelays}
Let $J$ be a subset of $\mathbb{N}$.
Let $M$ be a well-formed network defined by
\[
M \equiv  
N \q \big| \q
 \node {\nu_{m}} {m} {\sndv_{1}} \q \big| \q
\prod_{j \in J } \node {\nu_{n_j}} {n_j} {\fwdu_{q_j ,k}}
\]
such that:
\begin{inparaenum}[(1)]
\item 
 $\{n_j \mid j \in J\} \subseteq \nu_{m} \subseteq  \nds{M}$, and 
\item 
 the nodes in $\nu_{m} \cap \nds{N}$ cannot receive in the current round.
\end{inparaenum}
Then, 
\[
M
\;
\simtoll{0}
\; 
N \; | \;  \node {\nu_{m}} {m} {\nil} \; | \;
 \prod_{j \in J} \node {\nu_{n_j}} {n_j} {\wsndvu_{q_j,k}} .
\]
\end{cor}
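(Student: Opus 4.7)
My plan is to view the corollary as a specialization of Theorem~\ref{thm:propagation3} to the case of a single deterministic sender, with the routine replacement of the simple forwarder $\fwdc$ by its random-delay variant $\fwdu$ (and $\wsndvc$ correspondingly by $\wsndvu$). With $|I|=1$ and $p_m = 1$, the tolerance $1 - \sum_i p_i \prod_{j \in I \setminus \{i\}}(1-p_j)$ of that theorem collapses to $1 - 1 = 0$, so no failure probability needs to be tracked. I would therefore exhibit a single silent evolution $M \TransSim{\hat\tau} \overline{R}$, where $R$ denotes the right-hand side of the claim, and then appeal to Proposition~\ref{M_go_to_N} instantiated with $q = 0$.

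Concretely, I would first unfold the sender: by the convention $1{:}P = P$ we have $\sndv_1 = \tau.\bcastzero{v}$, so rules Tau and Snd chain into the local transitions $\node{\nu_m}{m}{\sndv_1} \transSim{\tau} \overline{\node{\nu_m}{m}{\bcastzero{v}}} \transSim{m!v \vartriangleright \nu_m} \overline{\node{\nu_m}{m}{\nil}}$. Next, I would use the symmetry of the neighbour relation from well-formedness to conclude that every $n_j \in \nu_m$ satisfies $m \in \nu_{n_j}$; rule Rcv then lets $\node{\nu_{n_j}}{n_j}{\fwdu_{q_j,k}}$ receive $v$ and become $\node{\nu_{n_j}}{n_j}{\mathsf{resndu}\langle v \rangle_{q_j,k}}$, which coincides syntactically with $\node{\nu_{n_j}}{n_j}{\wsndvu_{q_j,k}}$ by the natural definition of $\mathsf{resndu}$, analogous to the $\mathsf{resndc}$ clause of Section~\ref{sec:gossip-collisions}. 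The nodes of $N$ are all left unchanged: those outside $\nu_m$ trivially satisfy the premise of rule RcvEnb, while those in $\nu_m \cap \nds{N}$ do so by the hypothesis that they cannot receive in the current round. Iterating rules Bcast and RcvPar then yields $M \transSim{m!v \vartriangleright \nu_m \setminus \nds{M}} \overline{R}$, and since $\nu_m \subseteq \nds{M}$ the residual neighbour set is empty, so rule ShhSnd upgrades the label to $\tau$ and chaining with the initial $\tau$ step delivers $M \TransSim{\hat\tau} \overline{R}$.

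Finally, applying Proposition~\ref{M_go_to_N} with $q = 0$ (so that $(1{-}q)\overline{R} + q\Delta = \overline{R}$) will give $R \simtol{0} M$, which by the notational convention used for the similarity chains of Sections~\ref{sec:gossip-nocollisions} and~\ref{sec:gossip-collisions} is precisely $M \simtoll{0} R$. The only real obstacle is bookkeeping: one must verify that every parallel component of $N$ either applies rule RcvEnb or is handled by rule Rcv-$\zero$, but this is immediate from the hypothesis on $\nu_m \cap \nds{N}$ together with the fact that nodes outside $\nu_m$ are not $m$'s neighbours. No genuinely probabilistic reasoning beyond this trivial instance of Proposition~\ref{M_go_to_N} is required, because with a single transmitter of gossip probability $1$ there is neither a collision possibility nor the risk of a silent drop.
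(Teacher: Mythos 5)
Your proposal is correct and matches the paper's intent: the paper treats this corollary as a straightforward reformulation of Theorem~\ref{thm:propagation3} for a single sender broadcasting with probability $1$ (tolerance collapsing to $0$), and its underlying argument is exactly the one you spell out — exhibit the silent evolution $M \TransSim{\hat\tau} \overline{R}$ via \rulename{Tau}, \rulename{Snd}, \rulename{Rcv}, \rulename{RcvEnb}, \rulename{RcvPar}/\rulename{Bcast} and \rulename{ShhSnd}, then conclude by Proposition~\ref{M_go_to_N}. Your explicit handling of the replacement of $\fwdc$ by $\fwdu$ (the received continuation being $\wsndvu_{q_j,k}$) and of the direction of $\simtoll{0}$ is consistent with the paper's conventions, so nothing further is needed.
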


In the rest of this section, we will 
apply Corollary~\ref{thm:propagationRandomDelays} to reason on a variant of the
network $\mathrm{GSP}_1$ of Section~\ref{sec:gossip-nocollisions} where random delays are introduced. 

Consider  a gossiping network $\mathrm{GSP}_6$, with gossip probability $p$, composed by two source nodes $s_1$ and $s_2$ that may 
delay their transmission within a discrete time interval $1{..}2$, and a destination node
 $d$. The network is the following: 
\[
\mathrm{GSP}_6 \q \deff \q 
\prod_{i=1}^{2}\node {\nu_{s_i}}  {s_i} {\sndu v_{p,2}} \q \big| \q
\node {\nu_d} d {\finer}
\]
where 
$\nu_{s_1} = \nu_{s_2} = \{d\}$ and 
$\nu_{d}{=} \{ s_1, s_2 ,\mathit{tester}\}$.

Our goal is to estimate the distance between  $\mathrm{GSP}_6$ and a network  $\mathrm{DONE}_6$  in which the message $v$ propagated  up to the destination node $d$ that will rebroadcast  the message according
to some probabilistic delay depending on the probability $p$. 
\[
\label{eq:GSP4_sure}
\mathrm{DONE}_6  \;  \deff \; 
\prod_{i=1}^{2}\node {\nu_{s_i}}  {s_i} {\nil} \q \big| \q
 \node {\nu_d} d {\tau.(\sigma^3.\sndv_{1} \, \oplus_{(\frac{1}{2} {+} \frac{p^2}{2})}  \,  \sigma^4.\sndv_1)} . 
\]
At the end of this section,  we will show that  
$\mathrm{GSP}_6 \simtoll{r}     \mathrm{DONE}_6 $,  with tolerance $r=1 - (2p - \frac{3}{2} p^2 )$. 
In order to estimate the tolerance $r$ we distinguish three possible 
cases, depending whether the sender nodes $s_1$ and $s_2$ eventually 
transmit or not. 
More precisely, we estimate the tolerance of the simulation in the following cases: 
\begin{inparaenum}[(a)]
\item
\label{1SI2NO}
node $s_1$ transmits while $s_2$ does not,
\item
\label{1NO2SI}
node $s_2$ transmits while $s_1$ does not,
\item
\label{1SI2SI}
both nodes $s_1$ and $s_2$ transmit.
\end{inparaenum}
Notice that we do not analyse the case when neither $s_1$ nor $s_2$ transmits, since such a behaviour does not allow  $\mathrm{GSP}_6$ to simulate $\mathrm{DONE}_6$.
Then, by applying Theorem~\ref{thm:propagation2bis} to compose different paths we combine the tolerances obtained in cases (\ref{1NO2SI}) and (\ref{1SI2SI}) to get the tolerance when $s_2$ transmits independently whether $s_1$ transmits or not.
%%and independently of the probabilistic choice by $s_1$ to transmit or not.
Finally,  we use again Theorem~\ref{thm:propagation2bis} to combine this tolerance  with that obtained from case  (\ref{1SI2NO}) to infer $r$.

Let us analyse these three different cases.
\subsubsection*{(a) Node $s_1$ transmits while $s_2$ does not.}
In this case, there are obviously no collisions and the message will 
eventually reach the destination.
 Now, if $s_1$ decides to transmit with a delay of $h$ time units,  for $h \in \{1,2\}$, then, by applying in sequence: 
\begin{inparaenum}[(i)]
 \item Proposition~\ref{thm:timing}(\ref{timing5}) to timeout node $d$, as for $h$ rounds it will not get any message; 
 \item  Corollary~\ref{thm:propagationRandomDelays} to propagate the message from the sender $s_1$ to the node $d$,  together with  Proposition~\ref{thm:timing}(\ref{timing2}) to work underneath $\sigma$ prefixes; 
 \item   Proposition~\ref{thm:timing}(\ref{timing1}) to remove $\sigma$ prefixes in $s_1$; 
\item  Proposition~\ref{thm:timing}(\ref{timing34}) together with the straightforward equation
$\node {\nu_d} d { {\sndu v}_{1,1} } \simtoll{0} \node {\nu_d} d { {\sigma. {\sndv}_1}  }  $;
\end{inparaenum} it follows that: 
\begin{center}
\begin{math}
\begin{array}{rrl}
& & 
\mathrm{GSP}_6(s_1 \mbox{ sends in } h \mbox{ rounds; }s_2 \mbox{ did not gossip})\\[2pt]
& \deff & 
\node {\nu_{s_1}}  {s_1} { \sigma^h.\sndv_{1}  } \; \big| \;
\node {\nu_{s_2}}  {s_2} {\nil} \; \big| \;
\node {\nu_d} d {\finer} 
\\[3pt]
&\simtoll{0}&
\node {\nu_{s_1}}  {s_1} { \sigma^h.\sndv_{1}  } \; \big| \;
\node {\nu_{s_2}}  {s_2} {\nil} \; \big| \;
\node {\nu_d} d { \sigma^h.\finer} 
\\[3pt]
& \simtoll{0} & 
\node {\nu_{s_1}}  {s_1} { \sigma^h.\nil   } \; \big| \;
\node {\nu_{s_2}}  {s_2} {\nil} \; \big| \;
\node {\nu_d} d { \sigma^h. \wsndvu_{1,1}} 
\\[3pt]
&\simtoll{0}&
\node {\nu_{s_1}}  {s_1} { \nil   } \; \big| \;
\node {\nu_{s_2}}  {s_2} {\nil} \; \big| \;
\node {\nu_d} d { \sigma^h. \wsndvu_{1,1}} 
\\[3pt]
&\simtoll{0}&
\prod_{i=1}^{2}\node {\nu_{s_i}}  {s_i} {\nil} \; \big| \;
 \node {\nu_d} d { \sigma^{h+2}.\sndv_{1} }   .
\end{array}
\end{math}
\end{center}
By transitivity  (Proposition~\ref{prop:transitivity}) we can derive:  
\begin{equation}
\label{eq_GSP4-1}
\mathrm{GSP}_6(s_1 \mbox{ sends in } h \mbox{ rounds; }s_2 \mbox{ did not gossip})
\; 
 \simtoll{0}  \; 
\prod_{i=1}^{2}\node {\nu_{s_i}}  {s_i} {\nil} \q \big| \q 
 \node {\nu_d} d { \sigma^{h+2}.\sndv_{1} } 
\end{equation}

We know that  $s_1$ transmits in $\mathrm{GSP}_6$ according to a uniform distribution probability in the time interval $1{..}2$. Thus, by an application
of Theorem~\ref{thm:propagation2bis}, 
we can combine the two instances of Equation~\ref{eq_GSP4-1}, for $h=1$ and $h=2$, to obtain:
\begin{equation}
\label{eq_GSP4-9}
\begin{array}{rl}
& \mathrm{GSP}_6(s_1 \mbox{ sends in $1$ or $2$ rounds; }s_2 \mbox{ did not gossip})\\[2pt]
\deff & 
\node {\nu_{s_1}}  {s_1} {\tau.(\sigma.\sndv_{1}  \oplus_{\frac{1}{2}}  \sigma^2.\sndv_1)} \; \big| \;
\node {\nu_{s_2}}  {s_2} {\nil} \; \big| \;
\node {\nu_d} d {\finer}  \\[3pt]
 \simtoll{0} &
\prod_{i=1}^{2}\node {\nu_{s_i}}  {s_i} {\nil} \; \big| \;
 \node {\nu_d} d {\tau.(\sigma^3.\sndv_{1}  \oplus_{\frac{1}{2}}  \sigma^4.\sndv_1)} . 
\end{array}
\end{equation} 
The complete behaviour of $s_1$ can be derived by the following chain of similarities,  obtained by an application of Proposition~\ref{prop:first-laws}(\ref{random}) and Equation~\ref{eq_GSP4-9}: 
\[
\begin{array}{rrl}
&& \mathrm{GSP}_6(s_1[v] \mbox{; } s_2 \mbox{ did not gossip})
\\[2pt]
& \deff &\node {\nu_{s_1}}  {s_1} {\sndu v_{p,2} } \; \big| \;
\node {\nu_{s_2}}  {s_2} {\nil} \; \big| \;
\node {\nu_d} d {\finer} 
\\[3pt]
&\simtoll{1-p}&
\node {\nu_{s_1}}  {s_1} {\tau.(\sigma.\sndv_{1}  \oplus_{\frac{1}{2}}  \sigma^2.\sndv_1)} \; \big| \;
\node {\nu_{s_2}}  {s_2} {\nil} \; \big| \;
\node {\nu_d} d {\finer} 
\\[3pt]
&\simtoll{0}&
\prod_{i=1}^{2}\node {\nu_{s_i}}  {s_i} {\nil} \; \big| \;
 \node {\nu_d} d {\tau.(\sigma^3.\sndv_{1}  \oplus_{\frac{1}{2}}  \sigma^4.\sndv_1)} .
\end{array}
\]
Finally, by Proposition~\ref{prop:transitivity} we get:  
\begin{equation}
\label{eq_GSP4-10}
\mathrm{GSP}_6(s_1[v] \mbox{; } s_2 \mbox{ did not gossip})
\:
 \simtoll{1-p} \: 
\prod_{i=1}^{2}\node {\nu_{s_i}}  {s_i} {\nil} \; \big| \;
 \node {\nu_d} d {\tau.(\sigma^3.\sndv_{1}  \oplus_{\frac{1}{2}}  \sigma^4.\sndv_1)} 
\end{equation}

Let us consider now the symmetric case.  
\subsubsection*{(b) Node $s_2$ transmits while $s_1$ does not.}
This case can be arranged  by switching the roles of the nodes $s_1$ and 
$s_2$ in  Equation~\ref{eq_GSP4-9}: 
\begin{equation}
\label{eq_GSP4-14}
\begin{array}{rl}
& \mathrm{GSP}_6(\mbox{$s_1$ did not gossip; $s_2$ sends in $1$ or $2$ rounds})
\\[2pt] 
\deff & \node {\nu_{s_1}}  {s_1} {\nil} \; \big| \;
\node {\nu_{s_2}}  {s_2} {\tau.(\sigma.\sndv_{1}  \oplus_{\frac{1}{2}}  \sigma^2.\sndv_1)} \; \big| \;
\node {\nu_d} d {\finer} \\[3pt]
 \simtoll{0}  &
\prod_{i=1}^{2}\node {\nu_{s_i}}  {s_i} {\nil} \; \big| \;
 \node {\nu_d} d {\tau.(\sigma^3.\sndv_{1}  \oplus_{\frac{1}{2}}  \sigma^4.\sndv_1)} 
\end{array}
\end{equation}

Finally, let us consider the last case. 
\subsubsection*{(c) Both nodes $s_1$ and $s_2$ transmit.}
If both senders decide to transmit in the same time instant then there will be
 a collision and the transmitted message will never reach the destination. 
If this happens then the network $\mathrm{GSP}_6$ cannot simulate a network 
in which the message has reached the destination node $d$. 
On the other hand, if $s_1$ broadcasts with a delay of one time unit, 
 and $s_2$ with a delay of two time units, then there will be  no collisions and the message may eventually reach the destination.

Thus, by an application  of: 
\begin{inparaenum}[(i)]
\item 
Proposition~\ref{thm:timing}(\ref{timing5}) to timeout node $d$ in the absence 
of a sender;  
\item  Corollary~\ref{thm:propagationRandomDelays} to broadcast the message of
$s_1$ to $d$,  together with Proposition~\ref{thm:timing}(\ref{timing2}) to work underneath $\sigma$ prefixes; 
\item Proposition~\ref{thm:timing}(\ref{timing34}) to timeout node $d$, as after one round there are no senders around; 
\item  the straightforward equation
$\node {\nu_d} d { {\sndu v}_{1,1} } \simtoll{0} \node {\nu_d} d { {\sigma. {\sndv}_1}  }  $, together with Proposition~\ref{thm:timing}(\ref{timing2}) to work underneath $\sigma$ prefixes;
\item 
Proposition~\ref{thm:timing}(\ref{law:no-trans}) to model that the message broadcast by $s_2$ gets lost, 
together with Proposition~\ref{thm:timing}(\ref{timing2}) to work underneath $\sigma$ prefixes;
\item   Proposition~\ref{thm:timing}(\ref{timing1}) to remove $\sigma$ prefixes in $s_1$ and $s_2$;
\end{inparaenum} it follows that: 
\[
\begin{array}{rl}
& \mathrm{GSP}_6(\mbox{$s_1$ sends in $1$ round; $s_2$ sends in $2$ rounds})
\\[1pt]
\deff & 
\node {\nu_{s_1}}  {s_1} {\sigma.\sndv_{1}  } \q \big| \q
\node {\nu_{s_2}}  {s_2} {\sigma^2.\sndv_{1}  } \q \big| \q
\node {\nu_d} d {\finer}  
\end{array}
\]
\[
\begin{array}{rl}
 \simtoll{0} &  
\node {\nu_{s_1}}  {s_1} {\sigma.\sndv_{1}  } \q \big| \q
\node {\nu_{s_2}}  {s_2} {\sigma^2.\sndv_{1}  } \q \big| \q
\node {\nu_d} d {\sigma.\finer} 
 \\[3pt]
 \simtoll{0} &  
\node {\nu_{s_1}}  {s_1} {\sigma. \nil  } \q \big| \q
\node {\nu_{s_2}}  {s_2} {\sigma^2.\sndv_{1}  } \q \big| \q
\node {\nu_d} d {\sigma.   \wsndvu  _{1,1}}   
\\[3pt]
 \simtoll{0} &  
\node {\nu_{s_1}}  {s_1} {\sigma. \nil  } \q \big| \q
\node {\nu_{s_2}}  {s_2} {\sigma^2.\sndv_{1}  } \q \big| \q
\node {\nu_d} d {\sigma. \sigma.  {\sndu v}_{1,1}}   \\[3pt]
 \simtoll{0} &  
\node {\nu_{s_1}}  {s_1} {\sigma. \nil  } \q \big| \q
\node {\nu_{s_2}}  {s_2} {\sigma^2.\sndv_{1}  } \q \big| \q
\node {\nu_d} d {\sigma. \sigma. \sigma . {\sndv}_{1}}   \\[3pt]
 \simtoll{0} & 
\node {\nu_{s_1}}  {s_1} {\sigma. \nil  } \q \big| \q
\node {\nu_{s_2}}  {s_2} {\sigma^2.\nil  } \q \big| \q
 \node {\nu_d} d { \sigma^3.\sndv_{1} }   \\[3pt]
  \simtoll{0} & 
\prod_{i=1}^{2}\node {\nu_{s_i}}  {s_i} {\nil} \q \big| \q
 \node {\nu_d} d { \sigma^3.\sndv_{1} }   .
\end{array}
\]
By transitivity  (Proposition~\ref{prop:transitivity}) we derive:
\begin{equation}
\label{eq_GSP4-7}
 \mathrm{GSP}_6(\mbox{$s_1$ sends in $1$ round; $s_2$ sends in $2$ rounds}) \;
 \simtoll{0} \; \, 
\prod_{i=1}^{2}\node {\nu_{s_i}}  {s_i} {\nil} \q \big| \q
 \node {\nu_d} d { \sigma^3.\sndv_{1} }  
\end{equation}  

As nodes $s_1$ and $s_2$ have the same neighbours,  the symmetric case, when $s_1$ broadcasts with a delay of two rounds and $s_2$ with a delay of one round, can be captured by switching the roles of $s_1$ and $s_2$ in Equation~\ref{eq_GSP4-7}: 
\begin{equation}
\label{eq_GSP4-8}
\begin{array}{rl}
& \mathrm{GSP}_6(\mbox{$s_1$ sends in $2$ rounds ; $ s_2$ sends in $1$ round})
\\[2pt]
 \deff &
\node {\nu_{s_1}}  {s_1} {\sigma^2.\sndv_{1} } \q \big| \q
\node {\nu_{s_2}}  {s_2} {\sigma.\sndv_{1}  } \q \big| \q
\node {\nu_d} d {\finer} \\[3pt]
   \simtoll{0} & 
\prod_{i=1}^{2}\node {\nu_{s_i}}  {s_i} {\nil} | 
 \node {\nu_d} d { \sigma^3.\sndv_{1} }  . 
\end{array}
\end{equation}

Next, we compose the previous estimations to evaluate the tolerance when both 
senders transmit with a probabilistic delay. 
Please, notice that
the following  law holds for any network $N$ ($1$ is the maximum distance):
\begin{equation*}
\label{eq_GSP4-5} 
\node {\nu_{s_1}}  {s_1} {\sigma^h.\sndv_{1}  } \q \big| \q
\node {\nu_{s_2}}  {s_2} {\sigma^h.\sndv_{1}  } \q \big| \q
\node {\nu_d} d {\finer} 
\: \simtoll{1 } \: N  
\end{equation*}
In the rest of this section, we will use the following instances of the law above:

\begin{equation}
\label{eq_GSP4-5-instance1}
\begin{array}{c}
\node {\nu_{s_1}}  {s_1} {\sigma.\sndv_{1}  } \: \big| \:
\node {\nu_{s_2}}  {s_2} {\sigma.\sndv_{1}  }\: \big| \:
\node {\nu_d} d {\finer} 
\; \simtoll{1} \;    \prod_{i=1}^{2}\node {\nu_{s_i}}  {s_i} {\nil} \: \big| \:
 \node {\nu_d} d { \sigma^3.\sndv_{1} }  
\end{array}
\end{equation}

\begin{equation}
\label{eq_GSP4-5-instance2}
\begin{array}{c}
\node {\nu_{s_1}}  {s_1} {\sigma^2.\sndv_{1}  }  \big| \,
\node {\nu_{s_2}}  {s_2} {\sigma^2.\sndv_{1}  } \big| \,
\node {\nu_d} d {\finer} 
\, \simtoll{1} \,   \prod_{i=1}^{2}\node {\nu_{s_i}}  {s_i} {\nil} \, \big| \,
 \node {\nu_d} d { \sigma^3.\sndv_{1} } 
\end{array}
\end{equation}

Thus,  in the  case when $s_1$ broadcasts with a delay of one round and $s_2$ 
with a probabilistic delay of one or two rounds, by applying, in sequence: 
\begin{inparaenum}[(i)]
\item Theorem~\ref{thm:propagation2bis} in which the network $M$ is 
structural congruent to $\node {\nu_{s_1}}  {s_1} {\sigma.\sndv_{1}  } | \node {\nu_d} d {\finer}$, and the node $m$ is instantiated as $s_2$, 
to compose the paths deriving from Equation~\ref{eq_GSP4-7} and Equation~\ref{eq_GSP4-5-instance1};  
\item
the  straightforward law
$ \node{\nu_d}{d} {\tau.(\sigma^3.\sndv_{1}  \oplus_{\frac{1}{2}}  \sigma^3.\sndv_1)}  \simtoll{0}  \node{\nu_d}{d}{\sigma^3.\sndv_{1}}  $;
\end{inparaenum}
we derive: 
\[
\begin{array}{rl}
& \mathrm{GSP}_6(\mbox{$s_1$ sends in $1$ round; $s_2$ sends in $1$ or $2$ rounds}) 
\\[2pt]
\deff & 
\node {\nu_{s_1}}  {s_1} { \sigma.\sndv_{1}  } \q \big| \q
\node {\nu_{s_2}}  {s_2} {\tau.(\sigma.\sndv_{1}  \oplus_{\frac{1}{2}}  \sigma^2.\sndv_1)} \q \big| \q
\node {\nu_d} d {\finer} \\[3pt]
\simtoll{\frac{1}{2}} & 
\prod_{i=1}^{2}\node {\nu_{s_i}}  {s_i} {\nil} \q \big| \q
 \node {\nu_d} d {\tau.(\sigma^3.\sndv_{1}  \oplus_{\frac{1}{2}}  \sigma^3.\sndv_1)}   \\[3pt] 
\simtoll{0} & 
\prod_{i=1}^{2}\node {\nu_{s_i}}  {s_i} {\nil} \q \big| \q
\node {\nu_d} d {\sigma^3.\sndv_{1} }  .
\end{array}
\]
By transitivity (Proposition~\ref{prop:transitivity})
we obtain 
\begin{equation}
\label{eq_GSP4-11}
\begin{array}{c}
\mathrm{GSP}_6(\mbox{\small $s_1$ sends in $1$ rnd; $s_2$ sends in $1$ or $2$ rnds}) 
\: \simtoll{\frac{1}{2}} \: 
\prod_{i=1}^{2}\node {\nu_{s_i}}  {s_i} {\nil} | 
\node {\nu_d} d {\sigma^3.\sndv_{1} }  
\end{array}
\end{equation}

Analogously, in  the case when $s_1$ broadcasts with a delay of two rounds and $s_2$ with a probabilistic delay of one or two rounds, we can apply
in sequence:
\begin{inparaenum}[(i)]
\item Theorem~\ref{thm:propagation2bis} in which the network $M$ is 
structural congruent to $\node {\nu_{s_1}}  {s_1} {\sigma^2.\sndv_{1}  } | \node {\nu_d} d {\finer}$ and the node $m$ is instantiated as $s_2$, 
to compose the paths deriving from Equation~\ref{eq_GSP4-8} and Equation~\ref{eq_GSP4-5-instance2}; 
\item
the  straightforward law
$ \node{\nu_d}{d} {\tau.(\sigma^3.\sndv_{1}  \oplus_{\frac{1}{2}}  \sigma^3.\sndv_1)}  \simtoll{0}  \node{\nu_d}{d}{\sigma^3.\sndv_{1}}  $;
\end{inparaenum}
we derive: 
\[
\begin{array}{rl}
& \mathrm{GSP}_6(\mbox{$s_1$ sends in $2$ rounds; $s_2$ sends in $1$ or $2$ rounds}) \\
\deff &
\node {\nu_{s_1}}  {s_1} { \sigma^2.\sndv_{1}  } \; \big| \;
\node {\nu_{s_2}}  {s_2} {\tau.(\sigma.\sndv_{1}  \oplus_{\frac{1}{2}}  \sigma^2.\sndv_1)} \; \big| \;
\node {\nu_d} d {\finer} \\[3pt]
 \simtoll{\frac{1}{2}} &
 \prod_{i=1}^{2}\node {\nu_{s_i}}  {s_i} {\nil} \q \big| \q
 \node {\nu_d} d {\tau.(\sigma^3.\sndv_{1}  \oplus_{\frac{1}{2}}  \sigma^3.\sndv_1)}   \\[3pt] 
\simtoll{0} & 
\prod_{i=1}^{2}\node {\nu_{s_i}}  {s_i} {\nil} \; \big| \;
 \node {\nu_d} d { \sigma^3.\sndv_{1} }  .
\end{array}
\]
By transitivity (Proposition~\ref{prop:transitivity})
we obtain
\begin{equation}
\label{eq_GSP4-12}
\mathrm{GSP}_6(\mbox{\small $s_1$ sends in $2$ rnds; $s_2$ sends in $1$ or $2$ rnds})  
 \: \simtoll{\frac{1}{2}} \;  \prod_{i=1}^{2}\node {\nu_{s_i}}  {s_i} {\nil} \; \big| \;
 \node {\nu_d} d { \sigma^3.\sndv_{1} }  
\end{equation}

Thus, to estimate the tolerance 
 when both $s_1$ and $s_2$ broadcast with a probabilistic delay of one or two steps,  
we can apply  in sequence: 
\begin{inparaenum}[(i)]
\item Theorem~\ref{thm:propagation2bis} in which the network $M$ is 
structural congruent to $\node {\nu_{s_2}}  {s_2} {\tau.(\sigma.\sndv_{1}  \oplus_{\frac{1}{2}}  \sigma^2.\sndv_1)}   | 
\node {\nu_d} d {\finer} $ and the node $m$ is instantiated as $s_1$, 
to compose the paths deriving from Equation~\ref{eq_GSP4-11} and Equation~\ref{eq_GSP4-12};
\item
the  straightforward law
$ \node{\nu_d}{d} {\tau.(\sigma^3.\sndv_{1}  \oplus_{\frac{1}{2}}  \sigma^3.\sndv_1)}  \simtoll{0}  \node{\nu_d}{d}{\sigma^3.\sndv_{1}}  $;
\end{inparaenum}
 to obtain: 
\[
\begin{array}{rl}
& \mathrm{GSP}_6(\mbox{both $s_1$ and $s_2$ send in $1$ or $2$ rounds})
\\[2pt]
\deff &
\prod_{i=1}^{2}\node {\nu_{s_i}}  {s_i} {\tau.(\sigma.\sndv_{1}  \oplus_{\frac{1}{2}}  \sigma^2.\sndv_1)} \q \big| \q
\node {\nu_d} d {\finer}  \\[3pt]
 \simtoll{\frac{1}{2}} &
 \prod_{i=1}^{2}\node {\nu_{s_i}}  {s_i} {\nil} \q \big| \q
 \node {\nu_d} d {\tau.(\sigma^3.\sndv_{1}  \oplus_{\frac{1}{2}}  \sigma^3.\sndv_1)}   \\[3pt] 
\simtoll{0} & 
\prod_{i=1}^{2}\node {\nu_{s_i}}  {s_i} {\nil} \q \big| \q
 \node {\nu_d} d {\sigma^3.\sndv_{1}  }  .
\end{array}
\]
By transitivity (Proposition~\ref{prop:transitivity})
we obtain 
\begin{equation}
\label{eq_GSP4-13}
\mathrm{GSP}_6(\mbox{\small both $s_1$ and $s_2$ send in $1$ or $2$ rounds})
\: \simtoll{\frac{1}{2}}  \; 
\prod_{i=1}^{2}\node {\nu_{s_i}}  {s_i} {\nil}  \; \big| \; 
 \node {\nu_d} d {\sigma^3.\sndv_{1}  }  
\end{equation}

By exploiting the three cases \emph{(a), (b), (c)} analysed above, we can calculate now the tolerance $r$ of $\mathrm{GSP}_6 \, \simtoll{r} \, \mathrm{DONE}_6$. 
First, to estimate the tolerance 
when $s_2$ transmits (with a probabilistic delay of one or two steps) independently whether $s_1$ transmits or not,  
we can apply: 
\begin{inparaenum}[(i)]
\item Theorem~\ref{thm:propagation2bis} in which the network $M$ is 
structural congruent to $\node {\nu_{s_2}}  {s_2} {\tau.(\sigma.\sndv_{1}  \oplus_{\frac{1}{2}}  \sigma^2.\sndv_1)}   | 
\node {\nu_d} d {\finer}  $ and the node $m$ is instantiated as $s_1$, 
to compose the paths deriving from  Equation~\ref{eq_GSP4-14} and Equation~\ref{eq_GSP4-13};
\item 
 Proposition~\ref{prop:first-laws}(\ref{random3}) 
\item both additivity and commutativity properties of  $\oplus$;
\end{inparaenum}
 to obtain:     
\[
\begin{array}{rl}
& \mathrm{GSP}_6(\mbox{$s_1[v]$; $s_2$ sends in $1$ or $2$ rounds})
\\[2pt]
\deff & \node {\nu_{s_1}}  {s_1} {\sndu v_{p,2} } \q \big| \q
\node {\nu_{s_2}}  {s_2} {\tau.(\sigma.\sndv_{1}  \oplus_{\frac{1}{2}}  \sigma^2.\sndv_1)} \q \big| \q
\node {\nu_d} d {\finer} \\[3pt]
 \simtoll{ \frac{p}{2}} & 
  \prod_{i=1}^{2}\node {\nu_{s_i}}  {s_i} {\nil} \q \big| \q
 \node {\nu_d} d {\tau.\big(\sigma^3.\sndv_{1}  \oplus_{p} \tau.(\sigma^3.\sndv_{1}  \oplus_{\frac{1}{2}}  \sigma^4.\sndv_1)\big)}   \\[3pt] 
\simtoll{0} & 
  \prod_{i=1}^{2}\node {\nu_{s_i}}  {s_i} {\nil} \q \big| \q
 \node {\nu_d} d {\tau.\big(p{\colon}\sigma^3.\sndv_{1}  \: \oplus \:  {\frac{1}{2}}(1-p){\colon}\sigma^3.\sndv_{1}  \: \oplus \:  
 {\frac{1}{2}} (1-p){\colon} \sigma^4.\sndv_1 \big)}   \\[3pt] 
\simtoll{0} & 
\prod_{i=1}^{2}\node {\nu_{s_i}}  {s_i} {\nil} \q \big| \q
 \node {\nu_d} d {\tau.(\sigma^3.\sndv_{1} \oplus_{\left( \frac{1}{2} + \frac{p}{2}\right)}  \sigma^4.\sndv_1)}   .
\end{array}
\]
By transitivity (Proposition~\ref{prop:transitivity})
we obtain 
\begin{equation}
\label{eq_GSP4-15}
 \mathrm{GSP}_6(\mbox{\small $s_1[v]$; $s_2$ sends in $1$ or $2$ rnds}) \: \simtoll{\frac{p}{2}}  \: 
\prod_{i=1}^{2}\node {\nu_{s_i}}  {s_i} {\nil}  | 
 \node {\nu_d} d {\tau.(\sigma^3.\sndv_{1} \oplus_{q}  \sigma^4.\sndv_1)}  
\end{equation}
with $q = \left( \frac{1}{2} + \frac{p}{2}\right)$.
Finally, in order to estimate the behaviour of the whole network $\mathrm{GSP}_6$ we can apply in sequence:
\begin{inparaenum}[(i)]
\item Theorem~\ref{thm:propagation2bis}
in which the network $M$ is 
structural congruent to $\node {\nu_{s_1}}  {s_1} {\sndu v_{p,2}}   | 
\node {\nu_d} d {\finer}  $ and the node $m$ is instantiated as $s_2$, 
to compose the paths deriving from Equation~\ref{eq_GSP4-10} and  Equation~\ref{eq_GSP4-15};
\item 
 Proposition~\ref{prop:first-laws}(\ref{random3}) together with the additivity and commutativity properties of  $\oplus$;
\end{inparaenum}
 to obtain:     
\[
\begin{array}{rl}
& \mathrm{GSP}_6 \\[2pt]
\deff &\prod_{i=1}^{2}\node {\nu_{s_i}}  {s_i} {\sndu v_{p,2}} \q \big| \q
\node {\nu_d} d {\finer}\\[3pt]
 \simtoll{ r} & 
  \prod_{i=1}^{2}\node {\nu_{s_i}}  {s_i} {\nil} \q \big| \q
 \node {\nu_d} d {\tau.\big(
\tau.(\sigma^3.\sndv_{1}  \oplus_{q} \sigma^4.\sndv_{1}) 
\oplus_{p} 
\tau.(\sigma^3.\sndv_{1} \oplus_{\frac{1}{2} }  \sigma^4.\sndv_1)\big)}   \\[3pt] 
\simtoll{0} & \prod_{i=1}^{2}\node {\nu_{s_i}}  {s_i} {\nil} \q \big| \q
 \node {\nu_d} d {\tau.(\sigma^3.\sndv_{1} \, 
\oplus_{q'} 
 \,  \sigma^4.\sndv_1)} \\[2pt]
\deff & 
\mathrm{DONE}_6   . 
\end{array}
\]
with   
$q' = p \cdot q+ (1-p) \cdot \frac{1}{2}  = \frac{1}{2} {+} \frac{p^2}{2}$ and
$r=p \cdot \frac{p}{2} +  (1-p) (1-p)=1 -(2p - \frac{3}{2} p^2 )$.
%\remarkR{l'oplus ha 1-p  quindi secondo me e' $(1-p) \cdot \frac{1}{2} +  p  \cdot r'$}
By transitivity (Proposition~\ref{prop:transitivity})
we can finally derive
\[
\mathrm{GSP}_6 \; \simtoll{r} \; \,   \mathrm{DONE}_6  .
\]
Hence,  the network $\mathrm{GSP_6}$ succeeds in  
delivering the message to the destination node with (at least) probability  
$2p -  \frac{3}{2} p^2$.

Summarising, in the previous sections we have seen that in the absence of collisions 
the success probability of $\mathrm{GSP}_1$ is (at least) $2p-p^2$. 
Whereas when collisions are taken into account, the network $\mathrm{GSP}_4$
succeeds in propagating the message with (at least) probability $2p-2p^2$,
where $2p -2p^2 < 2p - p^2$, for any $p>0$. 
Since $2p-2p^2 < 2p - \frac{3}{2}p^2 < 2p - p^2$, for any $p>0$, 
this small example shows that the introduction of random delays 
may mitigate the effect of collisions.

%%%%%%%%%%%%%%%%%%%%%%%%%%%%%%%%%%%%%%%%%
%%%%%%%%%%%%%%%%%%%%%%%%%%%%%%%%%%%%%%%%%
%%%%                                                                                                        %%%%%
%%%%                              C O N C L U S I O N I                                       %%%%%
%%%%                                                                                                        %%%%%       
%%%%%%%%%%%%%%%%%%%%%%%%%%%%%%%%%%%%%%%%%
%%%%%%%%%%%%%%%%%%%%%%%%%%%%%%%%%%%%%%%%%

\section{Conclusions, related and future work}
\label{sec:ConRelWor}

We have  proposed a compositional analysis technique to formally study probabilistic gossip protocols 
expressed in a simple probabilistic timed process calculus for wireless
 networks. 
The calculus is equipped with a notion of \emph{weak simulation quasimetric} 
which is used to define a \emph{weak simulation with tolerance}, \emph{i.e.}, 
a \emph{compositional} simulation theory   to express that a probabilistic 
 system may be  simulated by another one with a given tolerance  measuring  the distance between the two systems. Basically, weak simulation quasimetric
is the asymmetric counterpart of  \emph{weak bisimulation metric}~\cite{DJGP02}, and the quantitative analogous of  \emph{weak simulation preorder}~\cite{BKHH02,BHK04}.
Based on our simulation quasimetric,  we have developed an \emph{algebraic theory} to estimate the performance of \emph{gossip wireless networks} in terms of the probability to successfully propagate messages up to the desired destination. 
In our study we have considered gossip networks, with and 
without \emph{communication collisions\/}, and \emph{randomised gossip networks\/} 
adopting a uniform probability distribution to decide whether broadcasting a message.

A nice survey of formal verification techniques for the analysis of gossip
 protocols appears in Bakhashi et al.'s paper~\cite{Fokkink2007}. 
Probabilistic model-checking has been used by Fehnker and Gao~\cite{ansgar2006} to study the influence of different modelling choices on message propagation in flooding and gossip protocols, and
by Kwiatkowska et al.~\cite{Marta2008} to investigate the expected rounds of gossiping required to form a connected network and how the expected path length between nodes evolves over the execution of the protocol.
However, the analysis of gossip protocols in large-scale networks remains beyond the capabilities of current probabilistic model-checking tools.  
For this reason, Bakhashi et al.~\cite{WanFokkink2009b} have suggested to apply mean-field analysis for a formal evaluation of gossip protocols.
Intuitively, the stochastic process representing the modelled system converges to a deterministic process if the number of nodes goes to infinity, providing an approximation for large numbers of nodes. 
Finally, Bakhashi et al.~\cite{WanFokkink2009} have developed and validated an analytical model, based on epidemic techniques, for a shuffle protocol, a protocol to disseminate data items to a collection of wireless devices, in a decentralised fashion.

A preliminary version of the current paper has appeared in~\cite{LMT17}. 
However, in that paper neither  communication collisions nor randomised gossip protocols are taken into account. %%Moreover, no proofs are provided. 
The current paper is the ideal continuation of Lanotte and Merro's work~\cite{LaMe11}. 
In that paper,  the authors developed a notion of \emph{simulation up to probability} to measure the closeness rather than the distance between two
probabilistic systems. Then, as in here, simulation up to probability  has been  used to provide 
an algebraic theory to evaluate the performance of gossip networks. 
Despite the similarity of the two simulation theories, 
the simulation up to probability has a number 
of limitations that have motivated the current work: 
\begin{inparaenum}[(i)]
\item 
  the simulation up to probability is not transitive, while 
simulation quasimetrics are transitive by definition;
 \item 
  in order  to work with a transitive relation, 
 paper~\cite{LaMe11} introduces  
an auxiliary rooted simulation which is much stronger than the main
definition;
 \item
  that rooted simulation (and hence the simulation up to probability)
 is not suitable to compose
estimates originating from paths with different lengths (as we
do here by means of Theorem~\ref{thm:propagation2bis}), and, more
generally, to deal  with more transmissions; 
\item
  paper~\cite{LaMe11}
does not consider randomised gossip protocols. 
\end{inparaenum}

Our simulation quasimetric has been inspired by~\cite{DGJP04,DJGP02,BW05,DCPP06}, where the notion of behavioural distance between two probabilistic systems  is formalised in terms of the notion of bisimulation metric.
Bisimulation metric works fine for systems being approximately equivalent.
Recent applications of weak bisimulation metrics can be found in \cite{LMT18b,iFM2018}.
However, when the simulation game works only in one direction, as in the gossip protocols analysed in the current paper, an asymmetric notion of simulation pseudometric is required.

The compositionality criteria  proposed in the literature for bisimulation metrics require that operators satisfy different forms of \emph{uniform continuity} property~\cite{GT13,GT14,GLT15,GT15,LMT17b}.
We proved that simulation quasimetric matches one of the most restrictive, namely \emph{non-expansiveness}~\cite{DJGP02,DGJP04}  (also known as 1-non-extensiveness~\cite{BBLM13b}).

Finally, several process calculi for wireless systems have
been proposed in the last years \cite{MeSa06,NaHa06,Godskesen07,SRS06,GhasWanFok09, FGHMPT12,BHJRVPP15}.  
Among these, Merro et al.~\cite{MBS11,CHM15} have proposed two different 
calculi modelling time-consuming communication to formally represent and study communication collisions
in a wireless setting. 
Song and Godskesen~\cite{SoGo10} have proposed the first probabilistic untimed calculus
for wireless systems, where connections are established with a given 
probability.  
\cname{} is a probabilistic variant of the calculus proposed by Macedonio and Merro~\cite{MaMe14} that takes inspiration from Deng et al.'s probabilistic CSP~\cite{Dengetal2008}. 

As future work, we intend to study gossip protocols in the presence 
of lossy channels. 
We then plan to apply our metric-based simulation theory in the context of other probabilistic protocols, such as those for \emph{probabilistic 
anonymity}~\cite{Cha88}  and \emph{probabilistic non-repudiation}~\cite{MR99}.

\section*{Acknowledgments}
The authors would like to thank the anonymous reviewers for their insightful reviews.

%%%%%%%%%%%%%%%%%%%%%%%%%%%%%%%%%%%%%%%%%%%%%%%%%%%%%%%%%%%%%%%%%%%%%%%

\bibliography{main}
\bibliographystyle{alpha}

%%%%%%%%%%%%%%%%%%%%%%%%%%%%%%%%%%%%%%%%%%%%%%%%%%%%%%%%%%%%%%

\appendix

\section{Proofs}

\subsection{Proofs of results in Section~\ref{sec:simulation}}
\
\vspace{0.2 cm}

\noindent
\emph{Proof of Proposition~\ref{prop_simulazione}}.
Let ${\mathcal R}$ 
denote the relation ${\mathcal R} = \{ (M, N) \, : \, d(M,N) =0 \}$.
We have to prove that, given any pair of related networks
$M \, {\mathcal R} \, N$,
whenever
$M \transSim{\alpha} \Delta$ there are a transition $N \TransSim{\hat{\alpha}} \Theta$ and a matching $\omega \in \Omega(\Delta, \Theta)$ with $\omega(M',N')>0$ only if
$M' \, {\mathcal R} \, N'$.
By definition of ${\mathcal R}$, relation $M \, {\mathcal R} \, N$ implies $d(M,N)=0$.
Being $d$ a weak simulation quasimetric, $d(M,N)=0$ implies that whenever
$M \transSim{\alpha} \Delta$ there are a transition $N \TransSim{\hat{\alpha}} \Theta$
with $\Kantorovich(d)(\Delta, \Theta + (1-\size{\Theta}) \overline{\dummyN}) \le d(M,N) = 0$.
From $\Kantorovich(d)(\Delta, \Theta + (1-\size{\Theta}) \overline{\dummyN}) \le 0$ we infer that $\size{\Theta} = 1$, thus giving $\Kantorovich(d)(\Delta, \Theta) = 0$.
Let $\omega \in \Omega(\Delta, \Theta)$ be one of the optimal matchings realising $\Kantorovich(d)(\Delta, \Theta)$. 
From $\Kantorovich(d)(\Delta, \Theta) = 0$ we infer that $\omega(M',N') > 0$ only if $d(M',N') = 0$.
Namely, $\omega(M',N') > 0$ only if $M' \, {\mathcal R} \, N'$, which complete the proof.
\qed

%%%
%%%
%%%

To prove Theorem~\ref{thm:exists_metric} we need two preliminary results.
First we show that the pseudoquasimetric property is preserved by function $\Kantorovich$, namely $\Kantorovich(d)$ is a pseudoquasimetric whenever $d$ is a pseudoquasimetric.

\begin{prop}
\label{prop_kant_quasimetric}
If $d \colon \cnamed{} \times \cnamed{} \to [0,1]$ is a $1$-bounded pseudoquasimetric, then $\Kantorovich(d)$ is a $1$-bounded pseudoquasimetric.
\end{prop}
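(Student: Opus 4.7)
The plan is to verify the three defining properties of a 1-bounded pseudoquasimetric for $\Kantorovich(d)$. Since we only consider distributions with finite support, every minimum in the definition of $\Kantorovich(d)$ is attained, so I will feel free to speak of optimal matchings.

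First I would dispose of the easy parts. For 1-boundedness: any $\omega \in \Omega(\Delta,\Theta)$ satisfies $\sum_{M,N}\omega(M,N) = 1$, and since $d$ takes values in $[0,1]$ we get $\sum_{M,N}\omega(M,N)\cdot d(M,N) \le 1$, hence $\Kantorovich(d)(\Delta,\Theta)\in[0,1]$. For reflexivity, the diagonal matching $\omega_\Delta(M,N) = \Delta(M)$ if $M=N$ and $0$ otherwise clearly lies in $\Omega(\Delta,\Delta)$, and using $d(M,M)=0$ we obtain $\sum_{M,N}\omega_\Delta(M,N)\cdot d(M,N) = 0$, so $\Kantorovich(d)(\Delta,\Delta)=0$.

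The main step will be the triangle inequality $\Kantorovich(d)(\Delta,\Phi)\le \Kantorovich(d)(\Delta,\Theta) + \Kantorovich(d)(\Theta,\Phi)$. The standard device is the \emph{gluing} (or \emph{composition}) of couplings through the middle distribution $\Theta$. Let $\omega_1\in\Omega(\Delta,\Theta)$ and $\omega_2\in\Omega(\Theta,\Phi)$ realise the two right-hand-side Kantorovich distances. Define $\omega\colon \cname\times\cname\to[0,1]$ by
\[
\omega(M,O) \;\deff\; \sum_{N\in\lceil\Theta\rceil} \frac{\omega_1(M,N)\cdot\omega_2(N,O)}{\Theta(N)},
\]
with the convention that summands corresponding to $N\notin\lceil\Theta\rceil$ (equivalently $\omega_1(M,N)=\omega_2(N,O)=0$) contribute $0$. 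A direct marginal computation shows $\omega\in\Omega(\Delta,\Phi)$:
\[
\sum_O \omega(M,O) = \sum_{N}\frac{\omega_1(M,N)}{\Theta(N)}\sum_O\omega_2(N,O) = \sum_N \omega_1(M,N) = \Delta(M),
\]
and symmetrically $\sum_M\omega(M,O)=\Phi(O)$.

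Next I apply the triangle inequality for $d$ pointwise inside the expectation under $\omega$:
\[
\sum_{M,O}\omega(M,O)\cdot d(M,O) \;\le\; \sum_{M,N,O}\frac{\omega_1(M,N)\omega_2(N,O)}{\Theta(N)}\bigl(d(M,N)+d(N,O)\bigr).
\]
Splitting the sum and using the marginal identities $\sum_O\omega_2(N,O)=\Theta(N)$ and $\sum_M\omega_1(M,N)=\Theta(N)$, the two pieces collapse to $\sum_{M,N}\omega_1(M,N)\,d(M,N)$ and $\sum_{N,O}\omega_2(N,O)\,d(N,O)$, i.e.\ to $\Kantorovich(d)(\Delta,\Theta)+\Kantorovich(d)(\Theta,\Phi)$. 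Since $\omega\in\Omega(\Delta,\Phi)$, the left-hand side is an upper bound for $\Kantorovich(d)(\Delta,\Phi)$, and the triangle inequality follows. The only minor subtlety to highlight is the $\Theta(N)=0$ convention, but this is harmless because the numerator vanishes as well, consistently with restricting the $N$-sum to $\lceil\Theta\rceil$.
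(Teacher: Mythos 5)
Your proposal is correct and follows essentially the same route as the paper's proof: the diagonal matching for $\Kantorovich(d)(\Delta,\Delta)=0$ and the gluing of the two optimal matchings through the middle distribution (with the same $\Theta(N)=0$ convention) to establish the triangle inequality, with the marginal computations and the pointwise use of the triangle inequality for $d$ matching the paper's argument step for step.
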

\begin{proof} 
To show $\Kantorovich(d)(\Delta,\Delta) = 0$ it is enough to take the matching $\omega \in \Omega(\Delta,\Delta)$ defined by $\omega(M,M) = \Delta(M)$, for all $M \in \cnamed{}$, and $\omega(M,N) = 0$, for all $M,N \in \cnamed{}$ with $M \neq N$. 
In fact, we obtain $\Kantorovich(d)(\Delta,\Delta) =0 $ by $\Kantorovich(d)(\Delta,\Delta) \le
\sum_{M,N \in \cnamed}\omega(M,N) \cdot d(M,N) = \sum_{M \in \cnamed} \Delta(M) \cdot d(M,M) =0$, %% with the last equality from the property 
because $d(M,M) =0$. %% of the pseudoquasimetric $d$.

To show the triangular property $\Kantorovich(d)(\Delta_1,\Delta_2) \le \Kantorovich(d)(\Delta_1,\Delta_3) + \Kantorovich(d)(\Delta_3,\Delta_2)$, we take
the function $\omega \colon \cnamed{} \times \cnamed{} \to [0,1]$ defined for all networks $M_1,M_2 \in \cnamed{}$ as $\omega(M_1,M_2) = \sum_{M_3 \in \cnamed{} \mid \Delta_3(M_3) \neq 0} \frac{\omega_1(M_1,M_3) \cdot \omega_2(M_3,M_2)}{\Delta_3(M_3)}$, with $\omega_1 \in \Omega(\Delta_1,\Delta_3)$ one of the optimal matchings realising $\Kantorovich(d)(\Delta_1,\Delta_3)$, and $\omega_2 \in \Omega(\Delta_3,\Delta_2)$ one of the optimal matchings realising $\Kantorovich(d)(\Delta_3,\Delta_2)$.
Then, we prove that:
\begin{inparaenum}[(i)]
\item \label{Kant_triang_uno}
$\omega$ is a matching in $\Omega(\Delta_1,\Delta_2)$, and
\item \label{Kant_triang_due}
$\sum_{M_1,M_2 \in \cnamed{}} \omega(M_1,M_2) \cdot d(M_1,M_2) \le \Kantorovich(d)(\Delta_1,\Delta_3)  + \Kantorovich(d)(\Delta_3,\Delta_2)$, which immediately implies $\Kantorovich(d)(\Delta_1,\Delta_2) \le \Kantorovich(d)(\Delta_1,\Delta_3)  + \Kantorovich(d)(\Delta_3,\Delta_2)$.
\end{inparaenum}
To show (\ref{Kant_triang_uno}) we prove that the left marginal of $\omega$ is $\Delta_1$ by\\[3pt]
\begin{math}
\begin{array}{rl}
& 
\sum_{M_2 \in \cnamed{}} \omega(M_1,M_2)
\\[3pt]
= \quad & 
\sum_{M_2 \in \cnamed{}}  \sum_{M_3 \in \cnamed{} \mid \Delta_3(M_3) \neq 0} \frac{\omega_1(M_1,M_3) \cdot \omega_2(M_3,M_2)}{\Delta_3(M_3)}
\\[3pt]
= \quad & 
\sum_{M_3 \in \cnamed{} \mid \Delta_3(M_3) \neq 0} \frac{\omega_1(M_1,M_3) \cdot \Delta_3(M_3)}{\Delta_3(M_3)}
%\text{(by $\omega_2 \in \Omega(\Delta_3,\Delta_2)$)}
\\[3pt]
= \quad & 
\sum_{M_3 \in \cnamed{} \mid \Delta_3(M_3) \neq 0} \omega_1(M_1,M_3) 
\\[3pt]
= \quad & 
\Delta_1(M_1) 
% \text{(by $\omega_1 \in \Omega(\Delta_1,\Delta_3)$)}
\end{array}
\end{math}\\

with the second step by $\omega_2 \in \Omega(\Delta_3,\Delta_2)$ and the last step by $\omega_1 \in \Omega(\Delta_1,\Delta_3)$.
The proof that the right marginal of $\omega$ is $\Delta_2$ is analogous.
Then, we show (\ref{Kant_triang_due}) by\\[3pt]
\begin{math}
\begin{array}{rlr}
&  \sum_{M_1,M_2 \in \cnamed{}} \omega(M_1,M_2) \cdot d(M_1,M_2)
\\[3pt]
=  \quad &\sum_{M_1,M_2 \in \cnamed{}} \sum_{M_3 \in \cnamed{} \mid \Delta_3(M_3) \neq 0} \frac{\omega_1(M_1,M_3) \cdot \omega_2(M_3,M_2)}{\Delta_3(M_3)} \cdot d(M_1,M_2)
\\[3pt]
\le   \quad &  \sum_{M_1,M_2 \in \cnamed{},M_3 \in \cnamed{} \mid \Delta_3(M_3) \neq 0} \frac{\omega_1(M_1,M_3) \cdot \omega_2(M_3,M_2)}{\Delta_3(M_3)} \cdot d(M_1,M_3)  \; + 
\\[3pt]
 \quad  & 
 \sum_{M_1,M_2 \in \cnamed{}, M_3 \in \cnamed{} \mid \Delta_3(M_3) \neq 0} \frac{\omega_1(M_1,M_3) \cdot \omega_2(M_3,M_2)}{\Delta_3(M_3)} \cdot d(M_3,M_2) 
\\[4 pt]
=  \quad &  \sum_{M_1,M_3 \in \cnamed{}} \frac{\omega_1(M_1,M_3) \cdot \Delta_3(M_3)}{\Delta_3(M_3)} \cdot  d(M_1,M_3)  \; + 
\\[3 pt]
\quad & 
   \sum_{M_2,M_3 \in \cnamed{}} \frac{\Delta_3(M_3) \cdot \omega_2(M_3,M_2)}{\Delta_3(M_3)}  \cdot d(M_3,M_2)
\\[3pt]
=  \quad &  \sum_{M_1,M_3 \in \cnamed{}} \omega_1(M_1,M_3) \cdot d(M_1,M_3) \; +
 \sum_{M_2,M_3 \in \cnamed{}} \omega_2(M_3,M_2)  \cdot d(M_3,M_2)
\\[3pt]
=  \quad &  \Kantorovich(d)(\Delta_1,\Delta_3)  + \Kantorovich(d)(\Delta_3,\Delta_2) 
\end{array}
\end{math}\\[2pt]
where the inequality follows from the triangular property of $d$ and the third last equality follows by $\omega_2 \in \Omega(\Delta_3,\Delta_2)$ and $\omega_1 \in \Omega(\Delta_1,\Delta_2)$.
\end{proof}

%%%
%%%
%%%
%%%
%%%

Now we prove that if $d(M,N) < 1$ for any weak simulation quasimetric $d$, then $N$ is able to simulate transitions of the form $M \TransSim{\hat \alpha} \Delta$, besides those of the form $M \transSim{\alpha} \Delta$.
\begin{lem} 
\label{lemma_sim_weak_transitions}
Assume any weak simulation quasimetric $d$ and two networks $M,N \in \cnamed{}$ with $d(M,N) <1$.
If $M \TransSim{\hat \alpha} \Delta$ then there is a weak transition $N \TransSim{\hat \alpha} \Theta$ such that $\Kantorovich(d)(\Delta + (1-\size{\Delta}) \overline{\dummyN},\Theta + (1-\size{\Theta}) \overline{\dummyN}) \le d(M,N)$.
\end{lem}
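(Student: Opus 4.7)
The plan is to proceed by induction on the structure of the weak transition $M \TransSim{\hat{\alpha}} \Delta$. Recall that, for $\alpha = \tau$, $\TransSim{\hat{\tau}}$ is the transitive-reflexive closure of $\transSim{\hat{\tau}}$, whereas for $\alpha \neq \tau$ the relation $\TransSim{\hat{\alpha}}$ factors as $\TransSim{\hat{\tau}}\transSim{\hat{\alpha}}\TransSim{\hat{\tau}}$. The reflexive base case (i.e.\ $\alpha = \tau$ with $\Delta = \overline{M}$) is trivial: take $\Theta = \overline{N}$, so that the completions of $\Delta$ and $\Theta$ coincide with the Dirac distributions $\overline{M}$ and $\overline{N}$, and $\Kantorovich(d)(\overline{M},\overline{N}) = d(M,N)$ because the unique matching assigns all mass to the pair $(M,N)$. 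The strong single-step case, i.e.\ $M \transSim{\alpha} \Delta$ with $\Delta$ a (full) distribution, is an immediate application of Definition~\ref{def:simulation_quasimetric}, which supplies the required $\Theta$ with $N \TransSim{\hat{\alpha}} \Theta$ and the needed Kantorovich bound (since $\size{\Delta} = 1$, no $\dummyN$-completion is needed on the left).

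The substantive work lies in the inductive step, which requires promoting the per-network simulation property to a statement about sub-distributions. I would therefore prove an auxiliary \emph{lifting lemma}: given sub-distributions $\Phi, \Psi$ such that $\Kantorovich(d)(\Phi + (1{-}\size{\Phi})\overline{\dummyN}, \Psi + (1{-}\size{\Psi})\overline{\dummyN}) \le c < 1$, if $\Phi \transSim{\hat{\alpha}} \Phi'$ (in the lifted sense: some nonempty subset $J$ of $\lceil\Phi\rceil$ performs $\transSim{\hat{\alpha}}$), then there exists $\Psi \TransSim{\hat{\alpha}} \Psi'$ with the analogous Kantorovich bound. The construction fixes an optimal matching $\omega$ realising the Kantorovich distance between the completions of $\Phi$ and $\Psi$. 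For each pair $(M',N')$ in the support of $\omega$ with $M' \in J$ and $d(M',N') < 1$, the definition of weak simulation quasimetric supplies a response $N' \TransSim{\hat{\alpha}} \Theta_{M',N'}$ matching $\Phi'_{M'}$; for pairs with $d(M',N') = 1$ (or with $N' = \dummyN$) the mass is routed to $\overline{\dummyN}$ in $\Psi'$; pairs with $M' \notin J$ contribute nothing on the right. Summing these responses weighted by $\omega$ yields $\Psi'$, and the contribution-by-contribution accounting keeps the Kantorovich bound at $c$.

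Armed with the lifting lemma, the induction is straightforward: each $\transSim{\hat{\tau}}$ step on the left is mirrored by a $\TransSim{\hat{\tau}}$ step on the right while preserving the distance bound; concatenating such steps (for $\alpha = \tau$), or inserting one $\transSim{\hat{\alpha}}$ step between two $\TransSim{\hat{\tau}}$ phases (for $\alpha \neq \tau$), yields the desired $N \TransSim{\hat{\alpha}} \Theta$. The triangular inequality of $\Kantorovich(d)$, established in Proposition~\ref{prop_kant_quasimetric}, is invoked only to glue successive bounds, but since every step preserves the same bound $c = d(M,N)$, no slack accumulates.

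The main obstacle is the careful bookkeeping in the lifting lemma, specifically handling the interplay between the $\dummyN$-completion and the possibility that some networks in $\lceil\Phi\rceil$ are stuck (not in $J$) while others are paired in $\omega$ with partners at distance exactly $1$. One must show that whatever mass cannot be simulated on the right, because either the partner is already $\dummyN$ or the distance saturates at $1$, exactly matches the mass dropped on the left, so that the completions with $\overline{\dummyN}$ align and no contribution to $\Kantorovich(d)$ is created beyond what was already present. This is essentially a mass-conservation argument on the decomposition of $\omega$ into its ``active'', ``dropped'', and ``$\dummyN$'' parts.
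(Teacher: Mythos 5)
Your proposal follows essentially the same route as the paper's proof: an induction on the length of the weak transition whose base cases are the reflexive $\hat\tau$ step and a direct appeal to Definition~\ref{def:simulation_quasimetric}, and whose inductive step is exactly the matching-based lifting argument the paper carries out inline (fix an optimal matching for the completed sub-distributions, split each right-hand network's mass across responses dictated by the matching, invoke the quasimetric definition for pairs at distance $<1$, and let the $\overline{\dummyN}$-completions absorb the dropped mass). The only small deviation is your treatment of pairs at distance $1$: you cannot literally route the right-hand mass to $\overline{\dummyN}$ when that network can perform $\hat\alpha$ (its derivative must appear in $\Theta$), but the paper's handling -- keep the derivative and bound that pair's contribution trivially by its mass -- repairs this immediately without affecting the bound.
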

\begin{proof}
We proceed by induction on the length $n$ of $M \TransSim{\hat \alpha} \Delta$.

\emph{\underline{Base case $n=1$}}. We have two cases: 
The first is $\alpha = \tau$ and $\Delta = \overline M$, the second is $M \transSim{\alpha} \Delta$.  
In the first case, by definition of $\Trans {\widehat \tau}$ we have $N \TransSim{\widehat \tau} \overline N$ and
the thesis holds for the distribution $\Theta = \overline{N}$ by  $\Kantorovich(d)(\overline{M} + (1-\size{\overline M})\overline{\dummyN}),\overline{N} + (1-\size{\overline N})\overline{\dummyN}) = \Kantorovich(d)(\overline{M},\overline{N}) = d(M,N)$.
In the second case, the thesis follows directly by the definition of weak simulation quasimetric.

\emph{\underline{Inductive step $n>1$}}.
The derivation $M \TransSim{\hat \alpha} \Delta$ is obtained by $M \TransSim{\hat \beta_1} \Delta'$ and $\Delta' \transSim{\hat{\beta}_2} \Delta$, for some distribution $\Delta' \in {\mathcal D}(\cnamed{})$.
The length of the derivation $M \TransSim{\hat \beta_1} \Delta'$ is $n-1$ and hence, by the inductive hypothesis, there is a weak transition $N \TransSim{\hat \beta_1} \Theta'$ such that $\Kantorovich(d)(\Delta' + (1-\size{\Delta'}) \overline{\dummyN},\Theta' + (1-\size{\Theta'}) \overline{\dummyN}) \le d(M,N)$.
The sub-distributions $\Delta'$ and $\Theta'$ are of the form $\Delta' = \sum_{i \in I}p_i M_i$ and $\Theta' = \sum_{j \in J}q_j N_j$, for suitable networks $M_i$ and $N_j$.
We have two sub-cases: The first is $\beta_1=\tau$ and $\beta_2=\alpha$, the other $\beta_1=\alpha$ and $\beta_2=\tau$.

We consider the case $\beta_1=\tau$ and $\beta_2=\alpha$, the other is analogous.
In this case we have $\size{\Delta'} = \size{\Theta'} = 1$ and $\Kantorovich(d)(\Delta' ,\Theta') \le d(M,N)$.
The transition $\Delta' \transSim{\hat{\beta}_2} \Delta$ is derived from a $\beta_2$-transition by some of the networks $M_i$, namely $I$ is partitioned into sets $I_1$ and $I_2$ such that:
\begin{inparaenum}[(i)]
\item  for all $i \in I_1$ we have $M_i \transSim{\beta_2} \Delta_i$ for suitable distributions $\Delta_i$, \item for each $i \in I_2$ we have $M_i \not\!\!\transSim{\beta_2}$,
\item $\Delta = \sum_{i \in I_1} p_i \Delta_i$,
\item $1-|\Delta| = \sum_{i \in I_2} p_i$.
\end{inparaenum}
Analogously, $J$ is partitioned into sets $J_1$ and $J_2$ such that for all $j \in J_1$ we have $N_j \TransSim{\hat{\beta}_2} \Theta_j$ for suitable distributions $\Theta_j$ and for each $j \in J_2$ we have $N_j\not\!\!\TransSim{\hat{\beta}_2}$. This gives $\Theta' \TransSim{\hat{\beta}_2} \Theta$ with $\Theta = \sum_{j \in J_1} q_j \Theta_j$.
Since we had $N \TransSim{\hat{\beta}_1} \Theta'$, we can conclude $N \TransSim{\hat{\alpha}} \Theta$.
In the following we will prove that the transitions $N_j \TransSim{\hat{\beta}_2} \Theta_j$ can be chosen so that $\Kantorovich(d)(\Delta + (1-\size{\Delta}) \overline{\dummyN},\Theta + (1-\size{\Theta}) \overline{\dummyN}) \le d(M,N)$, which will conclude the proof.

Let $\omega$ be one of the optimal matchings realising $\Kantorovich(d)(\Delta' ,\Theta')$.
Since $p_i = \sum_{j \in J} \omega(M_i,N_j)$ for all $i \in I$ and, then, $q_j = \sum_{i \in I} \omega(M_i,N_j)$ for all $j \in J$, we can rewrite the distributions $\Delta' = \sum_{i \in I}p_i  M_i$ and $\Theta' = \sum_{j \in J}q_j N_j$ as 
$\Delta' = \sum_{i \in I, j \in J} \omega(M_i,N_j)  M_i$ and 
$\Theta' = \sum_{i \in I, j \in J} \omega(M_i,N_j)  N_j$.
For all $i \in I_1$ and $j \in J$, define $\Delta_{i,j} = \Delta_i$.
Since $\Delta = \sum_{i \in I_1} p_i \Delta_i$, we can rewrite $\Delta$ as $\Delta = \sum_{i \in I_1,j\in J} \omega(M_i,N_j) \Delta_{i,j}$.
Analogously, for each $j \in J_1$ and $i \in I$ we note that the transition $q_j \overline{N_j} \TransSim{\hat{\beta}_2} \Theta_{j}$ can always be split into $\sum_{i \in I} \omega(M_i,N_j)\overline{N_j} \TransSim{\hat{\beta}_2} \sum_{i \in I} \omega(M_i,N_j)\Theta_{i,j}$ for suitable distributions $\Theta_{i,j}$
so that for all $j \in J_1$ 
we can rewrite $\Theta_j$ as $\Theta_j = \sum_{i \in I}\omega(M_i,N_j) \Theta_{i,j}$, and we can rewrite $\Theta$ as
$\Theta = \sum_{i\in I, j \in J_1} \omega(M_i,N_j) \Theta_{i,j}$. 
Then we note that for all $i \in I_1$ and $j \in J_1$ with $d(M_i,N_j) < 1$, the transition $N_j \TransSim{\hat{\beta}_2} \Theta_{i,j}$ can be chosen so that  
$\Kantorovich(d)(\Delta_{i,j},\Theta_{i,j} + (1-\size{\Theta_{i,j}})\overline{\dummyN}) \le d(M_i,N_j)$.
For all $i \in I_1$ and $j \in J_1$ with $d(M_i,N_j) <1$, let $\omega_{i,j}$ be one of the optimal matchings realising $\Kantorovich(d)(\Delta_{i,j}, \Theta_{i,j} + (1-\size{\Theta_{i,j}}) \overline{\dummyN})$.

Define $\omega' \colon \cnamed{} \times \cnamed{} \to [0,1]$ as the function such that
$\omega'(M',N')$ is the summation of the following values:
\begin{enumerate}
\item
$\sum_{i \in I_1, j \in J_1} \omega(M_i,N_j) \cdot \omega_{i,j}(M',N') $
\item
$\sum_{i \in I_1, j \in J_2} \omega(M_i,N_j) \cdot \Delta_{i,j}(M') \cdot\overline{\dummyN}(N')$
\item
$\sum_{i \in I_2, j \in J_1} \omega(M_i,N_j) \cdot 
\left( \overline{\dummyN}(M') \cdot\Theta_{i,j}(N') + (1-\size{\Theta_{i,j}}) \cdot \overline{\dummyN}(M') \cdot \overline{\dummyN}(N')\right) $
\item
$\sum_{i \in I_2, j \in J_2} \omega(M_i,N_j) \cdot \overline{\dummyN}(M')\cdot \overline{\dummyN}(N')$.
\end{enumerate}
To infer the proof obligation $\Kantorovich(d)(\Delta + (1-\size{\Delta}) \overline{\dummyN},\Theta + (1-\size{\Theta}) \overline{\dummyN}) \le d(M,N)$ it is now enough to show that:
\begin{compactenum}[(a)]
\item \label{matching} the function $\omega'$ is a matching in $\Omega(\Delta + (1-\size{\Delta}) \overline{\dummyN},\Theta + (1-\size{\Theta}) \overline{\dummyN})$, and
\item \label{metric_condition} $\sum_{M',N' \in \cnamed{}} \omega'(M',N') \cdot d(M',N') \le d(M,N)$.
\end{compactenum}

To show (\ref{matching}) we prove that the left marginal of $\omega'$ is $\Delta + (1-\size{\Delta}) \overline{\dummyN}$.
The proof that the right marginal is $\Theta + (1-\size{\Theta}) \overline{\dummyN})$ is analogous.
For any network $M' \in \cname{}$ we have that 
$\sum_{N' \in \cname}\omega'(M',N')$ is the summation of the following values:
\begin{enumerate}
\item
$\sum_{N' \in \cname}\sum_{i \in I_1, j \in J_1} \omega(M_i,N_j) \cdot \omega_{i,j}(M',N') 
=
 \sum_{i \in I_1, j \in J_1} \omega(M_i,N_j) \cdot \Delta_{i,j}(M')$
\item
$\sum_{N' \in \cname}\sum_{i \in I_1, j \in J_2} \omega(M_i,N_j) \cdot \Delta_{i,j}(M') \cdot\overline{\dummyN}(N')
=
\sum_{i \in I_1, j \in J_2} \omega(M_i,N_j) \cdot \Delta_{i,j}(M')$
\item
$\sum_{N' \in \cname}\sum_{i \in I_2, j \in J_1} \omega(M_i,N_j)
\left( \overline{\dummyN}(M') \Theta_{i,j}(N') + (1-\size{\Theta_{i,j}}) \overline{\dummyN}(M')  \overline{\dummyN}(N')\right)
\\ =  \sum_{i \in I_2, j \in J_1} \omega(M_i,N_j) \cdot  \overline{\dummyN}(M') $
\item
$\sum_{N' \in \cname}\sum_{i \in I_2, j \in J_2} \omega(M_i,N_j)  \overline{\dummyN}(M') \overline{\dummyN}(N') =
\sum_{i \in I_2, j \in J_2} \omega(M_i,N_j)  \overline{\dummyN}(M')
$
\end{enumerate}
with the first equality by $\omega_{i,j} \in \Omega(\Delta_{i,j}, \Theta_{i,j} + (1-\size{\Theta_{i,j}})$, the second by $\sum_{N' \in \cname{}}\overline{\dummyN}(N') = 1$,
the third by $\sum_{N' \in \cname{}} \Theta_{i,j}(N') = \size{\Theta_{i,j}}$ and $\sum_{N' \in \cname{}}\overline{\dummyN}(N') = 1$,
and the last by $\sum_{N' \in \cname{}}\overline{\dummyN}(N') = 1$.

The summation of the first two items gives
$\sum_{i \in I_1, j \in J} \omega(M_i,N_j) \cdot \Delta_{i,j}(M')$, which is $\Delta(M')$ by the definition of $\Delta$, and the summation of the last two items gives
$\sum_{i \in I_2, j \in J} \omega(M_i,N_j) \cdot  \overline{\dummyN}(M')$, which is
$(1-\size{\Delta})\overline{\dummyN}(M')$.
Overall, the summation of all items gives $\Delta(M') + (1-\size{\Delta})\overline{\dummyN}(M')$, namely
$(\Delta + (1-\size{\Delta}) \overline{\dummyN})(M')$.

To prove (\ref{metric_condition}), by the definition of $\omega'$ we have that $\sum_{M',N' \in \cnamed} \omega'(M',N') \cdot d(M',N')$ is the summation of the following values:
\begin{enumerate}
\item
$\sum_{M',N' \in \cnamed} \sum_{i \in I_1, j \in J_1} \omega(M_i,N_j) \cdot \omega_{i,j}(M',N') \cdot d(M',N')$
\item
$\sum_{M',N' \in \cnamed}  \sum_{i \in I_1, j \in J_2} \omega(M_i,N_j) \cdot \Delta_{i,j}(M') \cdot\overline{\dummyN}(N')\cdot d(M',N')
\\= 
\sum_{M'\in \cnamed}  \sum_{i \in I_1, j \in J_2} \omega(M_i,N_j) \cdot \Delta_{i,j}(M') \cdot d(M',\dummyN)
\\
\le
\sum_{M'\in \cnamed}  \sum_{i \in I_1, j \in J_2} \omega(M_i,N_j) \cdot \Delta_{i,j}(M')
$
\item
$\sum_{M',N' \in \cnamed} \sum_{i \in I_2, j \in J_1} \omega(M_i,N_j) \cdot
( \overline{\dummyN}(M') \cdot \Theta_{i,j}(N') 
+ (1-\size{\Theta_{i,j}}) \cdot \overline{\dummyN}(M')\cdot  \overline{\dummyN}(N') )  \cdot d(M',N') =0$
\item
$\sum_{M',N' \in \cnamed} \sum_{i \in I_2, j \in J_2} \omega(M_i,N_j) \cdot \overline{\dummyN}(M')\cdot \overline{\dummyN}(N')\cdot d(M',N') = 0$.
\end{enumerate}
where in the second item the equality follows from $\overline{\dummyN}(\dummyN)=1$ and $\overline{\dummyN}(N')=0$ for all $N' \neq \dummyN$ and the inequality from
$d(M',\dummyN) \le 1$, and in the third and fourth item the equality follows by the fact the only network with $\overline{\dummyN}(M') >0$ is $\dummyN$, for which we have $d(\dummyN',N') = 0$ for all $N' \in \cname{}$.

By the definition of $\omega_{i,j}$ the first item is $\sum_{i \in I_1, j \in J_1} \omega(M_i,N_j) \cdot \Kantorovich(d)(\Delta_{i,j},\Theta_{i,j})$.
If $d(M_i,N_j) < 1$, we argued above 
that the distribution $\Theta_{i,j}$ satisfies $\Kantorovich(d)(\Delta_{i,j},\Theta_{i,j}) \le d(M_i,N_j)$.
If $d(M_i,N_j) = 1$, then $\Kantorovich(d)(\Delta_{i,j},\Theta_{i,j}) \le d(M_i,N_j)$ is immediate.
Henceforth we are sure that in all cases the first item is less or equal $\sum_{i \in I_1, j \in J_1} \omega(M_i,N_j) \cdot d(M_i,N_j)$.
The second item is clearly less or equal than $\sum_{i \in I_1, j \in J_2}\omega(M_i,N_j)$.
Summarising, we have
\[
\sum_{M',N' \in \cnamed} \omega'(M',N') \cdot d(M',N') \le \sum_{i \in I_1, j \in J_1} \omega(M_i,N_j) \cdot d(M_i,N_j) + \sum_{i \in I_1, j \in J_2}\omega(M_i,N_j) .
\]
Then, since $\Kantorovich(d)(\Delta' ,\Theta' )$ is the summation of the following values:
\begin{itemize}
\item
$\sum_{i \in I_1,j\in J_1} \omega(M_i,N_j) \cdot d(M_i,N_j)$ 
\item
$\sum_{i \in I_1,j\in J_2} \omega(M_i,N_j) \cdot d(M_i,N_j) = \sum_{i \in I_1,j\in J_2} \omega(M_i,N_j)$ (since $M_i \transSim{\beta_2}$ and $N_j \nTransSimone[\hat{\beta_2}]$ give $d(M_i,N_j) =1$) 
\item
$\sum_{i \in I_2,j\in J_1} \omega(M_i,N_j) \cdot d(M_i,N_j)$ 
\item
$\sum_{i \in I_2,j\in J_2} \omega(M_i,N_j) \cdot d(M_i,N_j)$.
\end{itemize}
it is immediate that $\sum_{i \in I_1, j \in J_1} \omega(M_i,N_j) \cdot d(M_i,N_j) + \sum_{i \in I_1, j \in J_2}\omega(M_i,N_j) \le \Kantorovich(d)(\Delta' ,\Theta' )$.
Since we had $\Kantorovich(d)(\Delta' ,\Theta') \le d(M,N)$ we can conclude that $\sum_{M',N' \in \cnamed} \omega'(M',N') \cdot d(M',N') \le d(M,N)$, as required.
\end{proof}

We are now ready to prove Theorem~\ref{thm:exists_metric}.

\noindent
\emph{Proof of Theorem~\ref{thm:exists_metric}.}
Let $D$ be the set of all weak simulation quasimetrics. 
Define the function $\metric \colon \cnamed{} \times \cnamed{} \to [0,1]$ by 
\[
\metric(M,N) = \inf\left\{\sum_{i=1}^{n} \inf_{d \in D}d(O_i,O_{i+1})  \mid n \in \bbbn^{+} , O_1 =M , O_{n+1}=N, O_i \in \cnamed{}\right\}
\]
for all $M,N \in \cnamed{}$. 
It is immediate that for all $d \in D$ we have $\metric(M,N) \le d(M,N)$ for all $M,N \in \cnamed{}$.
Therefore, to derive the thesis it is enough to show to $\metric$ is a weak simulation quasimetric.
We start with showing that $\metric$ is a pseudoquasimetric.
The property $\metric(M,M)=0$ follows directly by $\metric(M,M) \le d(M,M)$ and $d(M,M) =0$ for all $d \in D$.
Then we show the triangular inequality $\metric(M,N) \le \metric(M,O) + \metric(O,N)$ by\\[3pt]
\begin{math}
\begin{array}{rl}
& \metric(M,N) \\[3pt]
= & \inf\{\sum_{i=1}^{n} \inf_{d \in D}d(O_i,O_{i+1})  \mid n \in \bbbn^{+} ,  O_1 =M , O_{n+1}=N , O_i \in \cnamed{} \} \\[3pt]
\le & \inf\{\sum_{i=1}^{n} \inf_{d \in D}d(O_i,O_{i+1})  \mid n \in \bbbn^{+} ,  O_1 =M , O_{n+1}=N , O_i \in \cnamed{} \} 
\end{array}
\end{math}\\
\begin{math}
\begin{array}{rl}
= & 
\inf\{\sum_{i=1}^{n} \inf_{d \in D}d(O_i,O_{i+1})  \mid n \in \bbbn^{+} , O_1 =M , O_{n+1}=O , O {\in} \{O_2,..,O_n\},  O_i \in \cnamed{} \} \\[3pt]
&  + \\[3pt]
& \inf\{\sum_{i=1}^{n} \inf_{d \in D}d(O_i,O_{i+1})  \mid n \in \bbbn^{+} , O_1 =O , O {\in} \{O_2,..,O_n\},   O_{n+1}=N , O_i \in \cnamed{} \}  \\[3pt]
= & \metric(M,O) + \metric(O,N). 
\end{array}
\end{math}\\[2pt]
We conclude that the pseudometric property of $\metric$ is given.
It remains to prove that whenever $\metric(M,N) < 1$ and $M \transSim{\alpha} \Delta$ there is a transition $N \TransSim{\hat{\alpha}} \Theta$ with $\Kantorovich(\metric)(\Delta,\Theta + (1-\size{\Theta}) \overline{\dummyN}) \le \metric(M,N)$.
By the finite branching property, it is enough to prove that, fixed $O_1=M$ and $O_{n+1}=N$ and given arbitrary networks $O_2,\ldots,O_n \in \cnamed{}$ and arbitrary weak  simulation quasimetrics $d_1,\ldots , d_n \in D$ such that $\sum_{i=1}^{n}d_i(O_i,O_{i+1}) \le 1$, then $M \transSim{\alpha} \Delta$ implies that there is a transition $N \TransSim{\hat{\alpha}} \Theta$ with $\Kantorovich(\metric)(\Delta,\Theta + (1-\size{\Theta}) \overline{\dummyN}) \le  \sum_{i=1}^{n} d_i(O_i,O_{i+1})$.
Assume such a $M \transSim{\alpha} \Delta$, and let $\Delta_1 = \Delta$.
For $i=1,\ldots,n$ we get $O_i \TransSim{\hat{\alpha}} \Delta_i$ with $\Kantorovich(d_i)(\Delta_i + (1-\size{\Delta_{i}}) \overline{\dummyN} ,\Delta_{i+1} + (1-\size{\Delta_{i+1}})\overline{\dummyN}) \le d_i(O_i,O_{i+1})$, by exploiting  
Lemma~\ref{lemma_sim_weak_transitions} in cases $i=2,\ldots,n$.
This allows us to infer\\[3pt]
\begin{math}
\begin{array}{rlr}
& \Kantorovich(\metric)(\Delta_1 , \Delta_{n+1} + (1-\size{\Delta_{n+1}}) \overline{\dummyN})  \\[3pt]
\le & \sum_{i=1}^{n} \Kantorovich(\metric)(\Delta_i + (1-\size{\Delta_{i}}) \overline{\dummyN}, \Delta_{i+1} + (1-\size{\Delta_{i+1}}) \overline{\dummyN})  \\[3pt]
\le & \sum_{i=1}^{n} \Kantorovich(d_i)(\Delta_i + (1-\size{\Delta_{i}}) \overline{\dummyN}, \Delta_{i+1} + (1-\size{\Delta_{i+1}}) \overline{\dummyN}) \\[3pt]
\le & \sum_{i=1}^{n} d_i(O_i,O_{i+1})
\end{array}
\end{math}\\[2pt]
thus confirming that $\Delta_{n+1}$ is the distribution $\Theta$ we were looking for,
where the first step follows by the triangular inequality for $\Kantorovich(\metric)$, which is a pseudoquasimetric by 
Proposition~\ref{prop_kant_quasimetric} and the fact that we have already proved that $\metric$ is a pseudoquasimetric, and the second step follows by the monotonicity of $\Kantorovich$ and the fact that we have proved that $\metric(M',N') \le d_i(M',N')$ for all $M',N' \in \cnamed{}$.
\qed

%%%
%%%
%%%
%%%
%%%
%%%

\noindent
\emph{Proof of Proposition~\ref{M_go_to_N}}.
We have to prove that any transition $N \transSim{\alpha} \Theta$ is simulated by a suitable weak transition $M \TransSim{\hat{\alpha}} \Theta'$ with  $\Kantorovich(\metric)(\Theta,\Theta' + (1-\size{\Theta'})\overline{\dummyN}) \le q$.
Assume an arbitrary transition $N \transSim{\alpha} \Theta$. 
From $N \transSim{\alpha} \Theta$ and the hypothesis $M \TransSim{ \hat {\tau}} (1-q) \overline{N}+ q \Delta$, we get $M \TransSim{\hat{\alpha}} (1-q)  \Theta + q \Delta'$, for a suitable sub-distribution $\Delta'$ such that $\Delta \TransSim{\hat{\alpha}} \Delta'$.
To conclude we have to prove that $\Kantorovich(\metric)(\Theta,(1-q)  \Theta + q\Delta' + q(1-\size{\Delta'} )\overline{\dummyN}) \le q$, namely $(1-q)  \Theta + q\Delta' $ is the sub-distribution $\Theta'$ we were looking for.
Let $\omega \colon \cnamed{} \times \cnamed{} \to [0,1]$ be the function defined by 
\begin{itemize}
\item
$\omega(M',M') = (1-q) \cdot \Theta(M') + q  \cdot \Theta(M')  \cdot \Delta'(M') + q \cdot (1-\size{\Delta'})  \cdot \Theta(M') \cdot \overline{\dummyN}(M')$, and 
\item 
$\omega(M',N') =  q  \cdot \Theta(M')  \cdot \Delta'(N') + q \cdot (1-\size{\Delta'})  \cdot \Theta(M') \cdot \overline{\dummyN}(N')$, for $M' \neq N'$.
\end{itemize}
We have to prove that:
\begin{enumerate}
\item \label{lem:N_go_to_M_uno} $\omega \in \Omega(\Theta,(1-q)  \Theta + q \Delta' + q(1-\size{\Delta'})\overline{\dummyN}))$, and 
\item \label{lem:N_go_to_M_due} $\sum_{M',N' \in \cnamed} \omega(M',N') \cdot \metric(M',N') \le q$.
\end{enumerate}
To prove (\ref{lem:N_go_to_M_uno}) we show that the left marginal of $\omega$ is $\Theta$ by\\ 
\begin{math}
\begin{array}{rl}
& \sum_{N' \in \cnamed} \omega(M',N') 
\\[2pt]
 = & 
\omega(M',M') +  \sum_{N' \neq M'} \omega(M',N')
\\[2pt]
= &
(1-q) \cdot \Theta(M') + q   \cdot \Theta(M')  \cdot \Delta'(M') + q \cdot (1-\size{\Delta'})\cdot \Theta(M') \cdot  \overline{\dummyN}(M') \: + \\
 &
 \sum_{N' \neq M'} q \cdot \Theta(M') \cdot \Delta'(N')  + 
 q \cdot (1-\size{\Delta'}) \cdot \Theta(M') \cdot  \overline{\dummyN}(N')
\\[2pt]
= &  
 (1-q) \cdot \Theta(M') + \sum_{N' \in \cname{}}  q \cdot \Theta(M') \cdot \Delta'(N') \: + \\
&
\sum_{N' \in \cname{}} q \cdot (1-\size{\Delta'}) \cdot \Theta(M') \cdot  \overline{\dummyN}(N')
\end{array}
\end{math}\\
\begin{math}
\begin{array}{rl} 
= &  
 (1-q) \cdot \Theta(M') +  q \cdot \size{\Delta'} \cdot \Theta(M') + q \cdot (1-\size{\Delta'}) \cdot \Theta(M') 
\\[2 pt]
= &  
 \Theta(M')
\end{array}
\end{math}\\
the proof that the right marginal is $(1-q)  \Theta + q \Delta' + q(1-\size{\Delta'} )\overline{\dummyN}$ being analogous.
We get (\ref{lem:N_go_to_M_due}) by\\
\begin{math}
\begin{array}{rl}
& \sum_{M',N'} \omega(M',N') \cdot \metric(M',N') 
\\[2 pt]
= & 
\sum_{M',M'} \omega(M',M') \cdot \metric(M',M') +  \sum_{M' \neq N'} \omega(M',N') \cdot \metric(M',N')
\\[2 pt]
= & 
\sum_{M' \neq N'} \omega(M',N') \cdot \metric(M',N')
\\[2 pt]
= & 
\sum_{M' \neq N'}
 (q  \cdot \Theta(M')  \cdot \Delta'(N') + q \cdot (1-\size{\Delta'})  \cdot \Theta(M') \cdot \overline{\dummyN}(N'))\cdot \metric(M',N')
\\[2 pt]
\le &
 q \cdot \size{\Delta'} + q\cdot  (1-\size{\Delta'})
\\[2 pt]
 \le & q
\end{array}
\end{math}\\
\enlargethispage{.65\baselineskip}
with the second equality by $\metric(M',M') = 0$ and the first inequality by $\metric(M',N') \le 1$, the fact that $\sum_{M' \in \cname{}} \Theta(M') = 1$ and $\sum_{N' \in \cname{}}\overline{\dummyN}(N') = 1$.
\qed

%%%
%%%
%%%
%%%
%%%
%%%

\noindent
\emph{Proof of  Proposition~\ref{lem:O_go_to_N_sim_M}.} 
Assume a transition $M \transSim{\alpha} \Theta$. 
By $M \simtol{p} N$ there is a transition $N \TransSim{\hat{\alpha}} \Theta'$ with $\Kantorovich(\metric)(\Theta,\Theta' + (1-\size{\Theta'}) \overline{\dummyN}) \le p$.
By the hypothesis $O \TransSim{\hat{\tau}} (1-q) \overline{N} + q \Delta$ we get  $O \TransSim{\hat{\alpha}} (1-q) \Theta' + q \Delta'$, for a suitable distribution $\Delta'$ such that $\Delta \TransSim{\hat{\alpha}} \Delta'$.
To conclude the proof we should show that 
\[ 
\Kantorovich(\metric)(\Theta,(1-q) \Theta' + q \Delta' + ((1-q)(1-\size{\Theta'}) + q (1-\size{\Delta'}))\overline{\dummyN}) \le p(1-q) + q.
\]
Let us consider $\omega \in \Omega(\Theta,\Theta' + (1-\size{\Theta'}) \overline{\dummyN})$ as one of the optimal matching realising $\Kantorovich(\metric)(\Theta,\Theta' + (1-\size{\Theta'}) \overline{\dummyN}) $.
Define the function $\omega' \colon \cnamed{} \times \cnamed{} \to [0,1]$ by 
\[
\omega'(M',N') = (1-q)\cdot  \omega(M',N') + q \cdot \Theta(M') \cdot \Delta'(N') + q \cdot (1-\size{\Delta'}) \cdot \Theta(M') \cdot \overline{\dummyN}(N')
\]
 for all $M',N' \in \cname{}$.
We have to prove that
\begin{enumerate}
\item \label{lem:N_go_to_M_unobis} $\omega' \in \Omega(\Theta,(1-q) \cdot \Theta' + q \Delta' + ((1-q)(1-\size{\Theta'}) + q (1-\size{\Delta'}))\overline{\dummyN})$, and
\item \label{lem:N_go_to_M_duebis} $\sum_{M',N' \in \cnamed} \omega'(M',N') \cdot \metric(M',N') \le p(1-q) + q$.
\end{enumerate}
To prove (\ref{lem:N_go_to_M_unobis}) we show that the left marginal of $\omega'$ is $\Theta$ by
\\
\begin{math}
\begin{array}{rl}
& \sum_{N' \in \cnamed} \omega'(M',N') 
\\[2 pt]
 = & 
(1-q) \cdot \sum_{N' \in \cnamed} \omega(M',N') +q \cdot \sum_{N' \in \cnamed} \Theta(M') \cdot \Delta'(N') + \\
& q \cdot \sum_{N' \in \cnamed{}}  (1-\size{\Delta'}) \cdot \Theta(M') \cdot \overline{\dummyN}(N')
\\[2 pt]
= & 
(1-q)\cdot \Theta(M') +q \cdot \Theta(M') \cdot |\Delta'| +q \cdot \Theta(M') \cdot (1-\size{\Delta'})
\\[2 pt]
= & 
\Theta(M')
\end{array}
\end{math}\\
with the second equality by the fact that $\omega$ is a matching in $\Omega(\Theta,\Theta' + (1-\size{\Theta'}) \overline{\dummyN}))$ and by the equalities 
$\sum_{N' \in \cname{}}\Delta'(N') = \size{\Delta'}$ and
$\sum_{N' \in \cname{}}\overline{\dummyN}(N') = 1$.
The proof that the right marginal is $(1-q) \Theta' + q \Delta'+ ((1-q)(1-\size{\Theta'}) + q (1-\size{\Delta'}))\overline{\dummyN}$ is analogous.
We get (\ref{lem:N_go_to_M_duebis}) by the following chain of inequalities: 
\begin{displaymath}
\begin{array}{rl}
& \sum_{M',N'} \omega'(M',N') \cdot \metric(M',N') 
\\[2pt]
= &  (1{-}q) \sum_{M',N' \in \cnamed} \omega(M',N') \cdot \metric(M',N') + q \sum_{M',N' \in \cnamed} \Theta(M') \cdot \Delta'(N') \cdot \metric(M',N') 
\\
+ & q \cdot \sum_{M',N' \in \cnamed{}}  (1-\size{\Delta'}) \cdot \Theta(M') \cdot \overline{\dummyN}(N') \cdot \metric(M',N') 
\\[2pt]
\le &
 (1-q) \cdot \Kantorovich(\metric)(\Theta,\Theta' + (1-\size{\Theta'})\overline{\dummyN}) + q \size{\Delta'} + q(1-\size{\Delta'})
\\[2pt]
\le & p(1-q) + q
\end{array}
\end{displaymath}
\noindent 
with the first inequality deriving by the fact that 
 $\omega$ is a matching realising
\begin{align*}
  \Kantorovich(\metric)(\Theta,\Theta' + (1-\size{\Theta'}) \overline{\dummyN}),\metric(M',N') \le 1
\end{align*}
and the equalities
\begin{align*}
\sum_{M' \in \cname{}}\Theta(M') = 1, \qquad
\sum_{N' \in \cname{}}\Delta'(N') = \size{\Delta'}, \quad \text{ and }
\sum_{N' \in \cname{}}\overline{\dummyN}(N') = 1.
\tag*{\qed}
\end{align*}
% \qed

\noindent
\emph{Proof of Proposition~\ref{lem:O_go_to_N_sim_M_conSigma}.}
Assume a transition $M \transSim{\alpha} \Theta$. 
By $M \simtol{p} N$ there is a transition $N \TransSim{\hat{\alpha}} \Theta'$ with $\Kantorovich(\metric)(\Theta,\Theta' + (1-\size{\Theta'}) \overline{\dummyN}) \le p$.
By the hypothesis that $N$ can perform only the transition $N \transSim{\sigma} \overline{N'}$ for a network $N'$ such that $N' \ntransSimone[\tau]$, we are sure that $\alpha = \sigma$ and $\Theta' = \overline{N'}$.
By the hypothesis $O \transSim{\sigma} O'  \TransSim{\hat{\tau}} (1-q) \overline{N'} + q \Delta$ we get  $O \TransSim{\hat{\alpha}} (1-q) \Theta' + q \Delta'$, for a suitable distribution $\Delta'$ such that $\Delta \TransSim{\hat{\alpha}} \Delta'$.
To conclude, it remains to prove $\Kantorovich(\metric)(\Theta,(1-q) \Theta' + q \Delta' + ((1-q)(1-\size{\Theta'}) + q (1-|\Delta'|))\overline{\dummyN}) \le p(1-q) + q$, which follows as in the proof of  Lemma~\ref{lem:O_go_to_N_sim_M}.
\qed

%%%
%%%
%%%

To prove  Theorem~\ref{cor:non_expansiveness} we give the following preliminary result.

\begin{lem}
$M \simtol{p} N$  entails $M | O \simtol{p} N | O$ and $O | M \simtol{p} O | N$.
\label{lem:non_expansiveness_support}
\end{lem}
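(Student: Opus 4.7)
The plan is to prove the stronger statement $\metric(M|O, N|O) \le \metric(M, N)$, from which $M \simtol{p} N$ (i.e.\ $\metric(M,N) \le p$) immediately gives $M | O \simtol{p} N | O$. The second half of the lemma, $O|M \simtol{p} O|N$, then follows by commutativity of parallel composition modulo $\equiv$ (Definition~\ref{def:struc-cong}), since the pLTS respects structural congruence.

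To establish $\metric(M|O, N|O) \le \metric(M, N)$ I would construct a $1$-bounded pseudoquasimetric $d$ with $d(M|O, N|O) \le \metric(M, N)$ and show that $d$ is itself a weak simulation quasimetric; the inequality will then follow by the minimality of $\metric$ (Theorem~\ref{thm:exists_metric}). A convenient choice is the triangle-inequality closure of the contextual lifting
\[
d^\ast(P_1, P_2) \;=\; \inf\{\metric(M', N') : \exists\, O'.\; P_1 \equiv M'|O' \text{ and } P_2 \equiv N'|O'\},
\]
that is, $d(P, Q) = \inf\bigl\{\textstyle\sum_{i=1}^{n} d^\ast(R_{i-1}, R_i) : n \ge 1,\, R_0 = P,\, R_n = Q\bigr\}$. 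By construction $d$ is a $1$-bounded pseudoquasimetric (triangle inequality by concatenation of paths; $d(P,P) = 0$ by taking $n = 1$ with $O' = \zero$) and, again taking $n=1$, $d(M|O, N|O) \le \metric(M, N)$.

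The main work is verifying that $d$ satisfies the simulation condition of Definition~\ref{def:simulation_quasimetric}. By the triangle inequality of $\Kantorovich(d)$ (Proposition~\ref{prop_kant_quasimetric}) it suffices to check it pointwise for $d^\ast$: given $P = M|O$, $Q = N|O$ with $\metric(M, N) < 1$ and any transition $P \transSim{\alpha} \Delta$, I would find $Q \TransSim{\hat\alpha} \Theta$ with $\Kantorovich(d)(\Delta, \Theta + (1-\size{\Theta})\overline{\dummyN}) \le \metric(M, N)$, proceeding by case analysis on the SOS rule used to derive $P \transSim{\alpha} \Delta$: (a) for rules where the action originates in $M$ alone (TauPar with $M$-active; ShhSnd/ObsSnd coming from an internal broadcast of $M$ with no receivers in $O$), apply $M \simtol{\metric(M,N)} N$ together with Lemma~\ref{lemma_sim_weak_transitions} to obtain a matching weak transition of $N$, then lift it through $N|O$ via the same rules (TauPar leaves $O$ unchanged); (b) for rules where the action originates in $O$ alone (TauPar with $O$-active; broadcast by $O$ with $N$ receiving via Rcv/RcvEnb), the matching transition of $N|O$ is immediate, with Kantorovich cost $0$; (c) in synchronized cases (Bcast with $M$ transmitting and $O$ receiving, RcvPar, $\sigma$-Par), lift the pre- and post-$\tau$ steps of $N$'s matching weak transition via TauPar and synchronize the strong step with $O$'s companion transition via the same rule. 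In every case the required matching $\omega'$ is obtained by transporting the optimal matching $\omega$ realizing $\Kantorovich(\metric)(\Delta_M, \Theta_N + (1-\size{\Theta_N})\overline{\dummyN})$ through parallel composition with $\overline O$ (or with the appropriate derivative of $O$), by setting $\omega'(M'|O', N'|O') \propto \omega(M', N')$ and routing any mass that $\omega$ sends to $\dummyN$ into a $\dummyN$-residual.

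The hard part will be the synchronized broadcast case: the weak transition $N \TransSim{\widehat{\sndat v \nu}} \Theta_N$ decomposes as $\tau$-steps, a single strong internal step $\transSim{m!v \vartriangleright \nu}$, and more $\tau$-steps, and the strong step must synchronize with $O$'s reception so that the external label of $N|O$ matches that of $M|O$. This relies on the fact that the extensional simulation preserves the visible broadcast label $\sndat v \nu$, together with well-formedness of $M|O$ and $N|O$, to ensure that $\mu = \nu \setminus \nds{O}$ is the same on both sides and that the intermediate $\tau$-steps of $N$ do not interfere with $O$. A secondary technicality is the bookkeeping of the $\dummyN$-residuals, since $\dummyN \not\equiv \dummyN | O$: mass that $\omega$ sends to $\dummyN$ must be absorbed into a $\dummyN$-residual of the lifted distribution and bounded using $d(\cdot, \dummyN) \le 1$. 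Once these points are handled, applying minimality of $\metric$ to $d$ gives $\metric(M|O, N|O) \le d(M|O, N|O) \le \metric(M, N) \le p$, which concludes the proof.
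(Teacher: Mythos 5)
Your proposal is correct and follows essentially the same route as the paper: define a candidate quasimetric that assigns (at most) $\metric(M,N)$ to the pair $(M\mid O,\,N\mid O)$, verify the transfer condition of Definition~\ref{def:simulation_quasimetric} by case analysis on whether the transition of $M\mid O$ originates in $M$, in $O$, or in a synchronisation (lifting the optimal matching through parallel composition and handling the $\dummyN$ residual mass), and conclude by minimality of $\metric$ from Theorem~\ref{thm:exists_metric}. The only difference is cosmetic: your infimum-over-decompositions plus triangle-closure makes the candidate a genuine pseudoquasimetric on all pairs, a point the paper's simpler $d'(M\mid O,N\mid O)=\metric(M,N)$ leaves implicit.
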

\begin{proof}
We show that $M \simtol{p} N$ entails $M | O \simtol{p} N | O$, the property $O | M \simtol{p} O | N$ can be derived analogously.
It is enough to prove that $\metric(M,N) = p$ entails $\metric(M | O, N | O) \le p$.
Let us define the function $d' \colon \cnamed{} \times \cnamed{} \to [0,1]$ by $d'(M \mid O , N \mid O) = \metric(M,N)$ for all $M,N,O \in \cnamed{}$.
To prove the thesis it is enough to show that $d'$ is a weak simulation quasimetric.
In fact, since $\metric$ is the minimal weak simulation quasimetric, this implies $\metric \sqsubseteq d'$, thus giving $\metric(M \mid O , N \mid O) \le d'(M \mid O , N \mid O) = \metric(M,N)$.
Hence we need to show that $d'(M \mid O , N \mid O) < 1$ and $M \mid O \transSim{\alpha} \Delta$ implies that there is a transition $N \mid O \TransSim{\hat{\alpha}} \Theta$ with $\Kantorovich(d')(\Delta,\Theta + (1 -\size{\Theta})\overline{\dummyN}) \le d'(M \mid O , N \mid O)$.
We distinguish three cases.
\begin{itemize}
\item
$M \mid O \transSim{\alpha} \Delta$ follows by $M \transSim{\alpha} \Delta'$ and $\Delta = \Delta' \mid \overline{O}$ by applying rule \rulename{TauPar}
\item
$M \mid O \transSim{\alpha} \Delta$ follows by $O \transSim{\alpha} \Delta'$ and $\Delta = \overline{M} \mid \Delta'$
by applying rule \rulename{TauPar} 
\item
$M \mid O \transSim{\alpha} \Delta$ follows by $M \transSim{\alpha_1} \Delta_1$ and $O \transSim{\alpha_2} \Delta_2$ and $\Delta = \Delta_1 \mid \Delta_2$ by applying one of the rules 
\rulename{RcvPar}, \rulename{Bcast}, \rulename{$\sigma$-Par}.
\end{itemize}
We consider the first case, the others are analogous. 
Since $M \mid O \transSim{\alpha} \Delta$ is obtained by applying rule \rulename{TauPar}, we have $\alpha = \tau$.
By $d'(M \mid O,N\mid O) < 1$ we get $\metric(M,N) <1$, which, together with $M \transSim{\alpha} \Delta'$ gives $N \TransSim{\hat{\alpha}} \Theta'$, with $\size{\Theta'} = 1$ following by $\alpha = \tau$, and $\Kantorovich(\metric)(\Delta',\Theta') \le \metric(M,N)$.
By applying the rule \rulename{TauPar} possibly several times we can derive $N \mid O \TransSim{\hat{\alpha}} \Theta' \mid \overline{O}$.
We have now to show that $\Kantorovich(d')(\Delta' \mid \overline{O},\Theta' \mid \overline{O}) \le d'(M \mid O , N \mid O)$, namely that $\Theta' \mid \overline{O}$ is the distribution $\Theta$ we were looking for.
To this end, we take one of the optimal matchings $\omega \in \Omega(\Delta',\Theta')$ realising $\Kantorovich(\metric)(\Delta',\Theta')$ and
we define $\omega' \colon \cnamed{} \times \cnamed{} \to [0,1]$ by $\omega'(M' \mid O,N' \mid O) = \omega(M',N')$ and
$\omega'(M'',N'') = 0$ whenever $M''$ is not of the form $M' \mid O$ or $N''$ is not of the form $N' \mid O$.
Thus, we prove:
\begin{enumerate}
\item \label{muovesxi} $\omega'$ is a matching in $\Omega(\Delta' \mid \overline{O},\Theta' \mid \overline{O})$, and 
\item \label{muovesxii} $\sum_{M'',N'' \in \cnamed{}} d'(M'',N'') \cdot \omega'(M'',N'') \le d'(M' \mid O ,N' \mid O)$.
\end{enumerate}
To show (\ref{muovesxi}) we prove that the left marginal of $\omega'$ is $\Delta' \mid \overline{O}$ by \\
\begin{math}
\begin{array}{rl}
& 
\sum_{N'' \in \cnamed{}} \omega'(M' \mid O,N'') 
\\[2 pt]
= &
\sum_{N' \in \cnamed{}} \omega'(M' \mid O , N' \mid O) 
\\[2 pt]
= &
\sum_{N' \in \cnamed{}} \omega(M' , N') 
\\[2 pt]
= &
\Delta'(M') 
%& \text{(by } \omega \in \Omega(\Delta',\Theta')\text{)}
\\[2 pt]
= &
(\Delta' \mid \overline{O}) (M' \mid O)
\end{array}\\
\end{math}
with the second last equality by $\omega \in \Omega(\Delta',\Theta')$.
The proof that  the right marginal of  $\omega'$ is $\Theta' \mid \overline{O'}$ can be proved analogously.
Then, (\ref{muovesxii}) follows by\\[1pt]
\begin{math}
\begin{array}{rl}
& 
\sum_{M'',N'' \in \cnamed{}} d'(M'',N'') \cdot \omega'(M'',N'') 
\\[2 pt]
= & \sum_{M',N' \in \cnamed{}} d'(M' \mid O,N' \mid O) \cdot \omega(M',N')
\end{array}
\end{math}\\
\begin{math}
\begin{array}{rl}
= & \sum_{M',N' \in \cnamed{}} \metric(M',N') \cdot \omega(M',N')
\\[2 pt]
= &\Kantorovich(\metric)(\Delta',\Theta')  
\\[2 pt]
\le &
\metric(M',N') 
\\[2 pt]
= & d'(M' \mid O', N' \mid O').
\end{array}\\
\end{math}
with the second last equality by the fact that $\omega $ is an optimal matching in $\Omega(\Delta',\Theta')$.
\end{proof}

\noindent
\emph{Proof of Theorem~\ref{cor:non_expansiveness}.}
By  Lemma~\ref{lem:non_expansiveness_support} and transitivity (Proposition~\ref{prop:transitivity}).
\qed

%%%
%%%
%%%

\noindent
\emph{Proof of Proposition~\ref{prop:propagation-general}.}
Assume that $M \transSim{\alpha} \Delta$.
By the hypothesis $M \sqsubseteq_{s_i} \node \nu n {  P_i }  | N$ we have $\node \nu n {  P_i }  | N \TransSim{\hat{\alpha}} \Delta_i$ with $\Kantorovich(\metric)(\Delta,\Delta_i + (1-\size{\Delta_i}) \overline{\dummyN}) \le s_i$.
Hence we have $\node \nu n  {\tau.\bigoplus_{i \in I} p_i {:} P_i } \TransSim{\hat{\alpha}} \sum_{i \in I}p_i \Delta_i$.
It remains to show the inequality $\Kantorovich(\metric)(\Delta,\sum_{i \in I}p_i \Delta_i + (1-\size{\sum_{i \in I}p_i \Delta_i}) \overline{\dummyN}) \le \sum_{i \in I}p_i s_i$.
Assume that $\omega_i$ is one of the optimal matchings realising $\Kantorovich(\metric)(\Delta ,\Delta_i + (1-\size{\Delta_i}) \overline{\dummyN})$.
Define the function $\omega \colon \cnamed{} \times \cnamed \to [0,1]$ by $\omega(M',N') = \sum_{i \in I} p_i \omega_i(M',N')$ for all networks $M',N' \in \cnamed{}$.
To conclude that $\Kantorovich(\metric)(\Delta ,\sum_{i \in I}p_i \Delta_i + (1-\size{\sum_{i \in I}p_i \Delta_i}) \overline{\dummyN}) \le \sum_{i \in I}p_is_i$, we show that: 
 \begin{enumerate}
\item 
\label{thm:propagation-general_item1} $\omega$ is a matching in $\Omega(\Delta,\sum_{i \in I}p_i \Delta_i + (1-\size{\sum_{i \in I}p_i \Delta_i}) \overline{\dummyN})$ and
\item 
\label{thm:propagation-general_item2} $\sum_{M',N' \in \cnamed{}} \omega(M',N') \cdot d(M',N') \le \sum_{i \in I}p_i s_i$.
\end{enumerate}
Let us start with (\ref{thm:propagation-general_item1}).
The proof that the left marginal of $\omega$ is $\Delta$ is immediate.
We show
that the right marginal of $\omega$ is the distribution $\sum_{i \in I}p_i \Delta_i + (1-\size{\sum_{i \in I}p_i \Delta_i})\overline{\dummyN}$ by\\[2pt]
\begin{math}
\begin{array}{rl}
& \sum_{M' \in \cnamed{}} \omega(M',N') \\[2 pt]
 = \quad &  \sum_{M' \in \cnamed{}} \sum_{i \in I} p_i \cdot \omega_i(M',N') \\[3pt]
= \quad &  \sum_{i \in I} p_i  \sum_{M' \in \cnamed{}} \omega_i(M',N')  \\[3pt]
= \quad &  \sum_{i \in I} p_i  \cdot  (\Delta_i + (1-\size{\Delta_i}) \overline{\dummyN})(N') 
\\[3 pt]
= \quad & (\sum_{i \in I}p_i \Delta_i + (1-\size{\sum_{i \in I}p_i \Delta_i})\overline{\dummyN})(N')
\end{array}
\end{math}\\[2pt]
for each $N' \in \cnamed{}$,
with the second last equality by $\omega_i \in \Omega(\Delta,\Delta_i + (1-\size{\Delta_i}) \overline{\dummyN})$.
Consider now (\ref{thm:propagation-general_item2}).
We have\\[2pt]
\begin{math}
\begin{array}{rl}
& \sum_{M',N' \in \cnamed{}} \omega(M',N') \cdot d(M',N') \\[4 pt]
= \quad & \sum_{M',N' \in \cnamed{}} \sum_{i \in I} p_i \cdot \omega_i(M',N')  \cdot d(M',N') \\[4 pt]
= \quad & \sum_{i \in I} p_i \sum_{M',N' \in \cnamed{}}  \omega_i(M',N')  \cdot d(M',N') \\
= \quad & \sum_{i \in I} p_i \cdot \Kantorovich(\metric)(\Delta ,\Delta_i+ (1-\size{\Delta_i})\overline{\dummyN} )  \\[4 pt]
%& \text{(by definition of $\omega_i$)}\\[4 pt]
\le \quad & \sum_{i \in I} p_i \cdot s_i 
\end{array}
\end{math}\\[2pt]
with the last equality by the definition of $\omega_i$.
This concludes the proof.
\qed

%%%
%%%
%%%

\noindent
\emph{Proof of Proposition~\ref{thm:timing}.}
\begin{enumerate}
\item    
We proceed by induction over $k$. The base case $k=0$ is immediate.
Consider the inductive step $k+1$.
We prove that any transition by $\node \mu n {\sigma^{k+1}.\nil}$ can be simulated by $\node \mu n {\nil}$,
the symmetric case is analogous.
The only transition by $\node \mu n {\sigma^{k+1}.\nil}$ is the transition $\node \mu n {\sigma^{k+1}.\nil} \transSim{\sigma} \overline{\node \mu n {\sigma^{k}.\nil}}$ derived by rule \rulename{Sleep}.
By an application of rule \rulename{$\sigma$-nil} we get $\node \mu n {\nil} \transSim{\sigma} \overline{\node \mu n {\nil}}$. 
We have $\Kantorovich(\metric)(\overline{\node \mu n {\sigma^{k}.\nil}},\overline{\node \mu n {\nil}}) = \metric(\node \mu n {\sigma^{k}.\nil},\node \mu n {\nil}) = 0$, where the last equality follows by the inductive hypothesis.
\item 

By the fact that the only transitions that can be made by the two networks $\prod_{i \in I}\node {\nu_{m_i}} {m_i} {\sigma.P_i}$ and $\prod_{j \in J}\node {\mu_{n_j}} {n_j} {\sigma.Q_j}$ are 
$\prod_{i \in I}\node {\nu_{m_i}} {m_i} {\sigma.P_i} \transSim{\sigma} \overline{ \prod_{i \in I} \node {\nu_{m_i}} {m_i} {P_i}}$ and
$\prod_{j \in J}\node {\mu_{n_j}} {n_j} {\sigma.Q_j} \transSim{\sigma} \overline{\prod_{j \in J} \node {\mu_{n_j}} {n_j} {Q_j}}$ (see rules \rulename{Sleep} and \rulename{$\sigma$-Par}).

\item
The hypothesis that   nodes in $\mu$ do not send in the current time interval imply that the only transition by $\node \mu n {\rcvtime x C D}$  is $\node \mu n {\rcvtime x C D} \transSim{\sigma}  \sem{ \node \mu n {D}}$.
On the other hand, the only transition by $\node \mu n {\sigma.D}$ is $\node \mu n {\sigma.D} \transSim{\sigma}  \sem{\node{\nu}{n}{D}}$.
The thesis immediately follows.

\item
This is an immediate corollary of item \ref{timing34}.

\item
By  Proposition~\ref{M_go_to_N},
it is enough to prove that
\begin{equation}
\label{tthm5.6}
\node {\nu} m {\tau.(\bcastzero{v} \oplus_p \nil)} \q \big| \q 
\prod_{i \in I}
\node {\nu_{i}} {n_i} {P_i} \q \TransSim{\tau} \q
\overline{ \node {\nu} m {\nil} \q \big| \q
\prod_{i \in I}
\node {\nu_{i}} {n_i} {P_i} }
\end{equation}
By rule \rulename{Tau} we get
\[
\node {\nu} m {\tau.(\bcastzero{v} \oplus_p \nil)} \q \transSim{\tau} \q p \overline{ \node {\nu} m {\bcastzero{v}} }  + (1-p) \overline{ \node {\nu} m {\nil } }.
\]
Hence, to prove Equation~\ref{tthm5.6} it suffices to show that
\begin{equation}
\label{tthm5.6bis}
\node {\nu} m {\bcastzero{v}} \q big| \q \prod_{i \in I}
\node {\nu_{i}} {n_i} {P_i} \q
\TransSim{\tau} \q 
\overline{ \node {\nu} m {\nil} \q \big| \q
\prod_{i \in I}
\node {\nu_{i}} {n_i} {P_i} }
\end{equation}
To this end, first of all we note that, since for all $i \in I$    $P_i\neq \rcvtime x C D$, by rule \rulename{RcvEnb} we have that
\[
\node {\nu_{i}} {n_i} {P_i} \q \transSim{ \rcva {m} v}  \q  \overline{\node {\nu_{i}} {n_i} {P_i}} 
\]
and, then, by rule \rulename{RcvPar} we get 
$\prod_{i \in I}\node {\nu_{i}} {n_i} {P_i} \transSim{ \rcva {m} v} \prod_{i \in I}  \overline{\node {\nu_{i}} {n_i} {P_i}}$. 
By an application of rule \rulename{Snd} we have 
$\node {\nu} m {\bcastzero{v}} \transSim{\sndto {m} v {\nu}} \overline{ \node {\nu} m {\nil } }$.
Finally, we can apply rule \rulename{Bcast} to the last two transitions, 
and, since $\nu = \bigcup_{i \in I}n_i$, by applying rule \rulename{Shh} 
we get the proof obligation Equation~\ref{tthm5.6bis}. \qed
\end{enumerate}

%%%
%%%
%%%
%%%
%%%
%%%

\subsection{Proofs of results in Section~\ref{sec:gossip-nocollisions}}
\label{section:proofNoCollision}

\noindent
First we describe how we prove Theorem~\ref{thm:propagation}.
Let $O$ denote the network 
$O  \deff  N  |
\prod_{i  \in I} \node {\nu_{m_i}} {m_i} {\nil} |
 \prod_{j \in J} \node {\nu_{n_j}} {n_j}
{\wsndv_{q_j}}$.
We will prove that for some distribution $\Delta$ there is a transition $M \TransSimone[\hat{\tau}](1-\prod_{i \in I}(1-p_i))\cdot \overline{O} + \prod_{i \in I}(1-p_i) \cdot \Delta$, then the thesis follows by  Proposition~\ref{M_go_to_N}.
The proof that $M \TransSimone[\hat{\tau}](1-\prod_{i \in I}(1-p_i))\cdot\overline{O} + \prod_{i \in I}(1-p_i)\cdot \Delta$ is divided in two parts.
First we prove that with probability $(1-\prod_{i \in I}(1-p_i))$ the network $M$ can reach through a sequence of $\tau$-transitions a network 
$O_{I_1,I_2,I_3}$ of the form 
$O_{I_1,I_2,I_3} = N \big| 
\prod_{i \in I_1}  \node {\nu_{m_i}} {m_i} { \bcastzero v }   \big| 
\prod_{i \in I_2} \node {\nu_{m_i}} {m_i} { \nil }  \big| $
$ \prod_{i \in I_3} \node {\nu_{m_i}} {m_i} {\sndv_{p_i}} \big| 
\prod_{j\in J} \node {\nu_{n_j}} {n_j} {\wsndv_{q_j}}$,
for $I_1 \cup I_2 \cup I_3$ some partition of $I$.
In particular, one of the $\tau$-transitions in this sequence is derived via rule \rulename{ShhSnd} as a consequence of the broadcast of the message by one of the senders.
Then we prove that $O_{I_1,I_2,I_3} \TransSim{\tau} \overline{O}$.
The first part of the proof is given directly in the proof of Theorem~\ref{thm:propagation}, whereas 
$O_{I_1,I_2,I_3} \TransSim{\tau} \overline{O}$ is proved in the following Lemma. 
\begin{lem}\label{lem:propagation_support} 
Assume the hypothesis of  Theorem~\ref{thm:propagation}.
Let $O$ denote the network 
\[
\begin{array}{c}
O  \deff  N  \q \big| \q
\prod_{i  \in I} \node {\nu_{m_i}} {m_i} {\nil} \q \big| \q
 \prod_{j \in J} \node {\nu_{n_j}} {n_j}
{\wsndv_{q_j}}
\end{array}
\]
and $O_{I_1,I_2,I_3}$ denote the network
\[
\begin{array}{c}
N  \;   \big|  \;  
\prod_{i \in I_1}  \node {\nu_{m_i}} {m_i} { \bcastzero v }  \;  \big|  \;  
\prod_{i \in I_2} \node {\nu_{m_i}} {m_i} { \nil }   \;   \big| \;
\prod_{i \in I_3} \node {\nu_{m_i}} {m_i} {\sndv_{p_i}}  \;    \big|  \;  
\prod_{j\in J} \node {\nu_{n_j}} {n_j} {\wsndv_{q_j}} 
\end{array}
\]
for $I_1 \cup I_2 \cup I_3$ any partition of $I$.
Then, we have
$
O_{I_1,I_2,I_3} \TransSim{\tau} \overline{O}$. 
\end{lem}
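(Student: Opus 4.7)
The plan is to prove the lemma by induction on the lexicographic measure $(|I_3|,|I_1|)$. The base case $I_1=I_3=\emptyset$ is immediate: the partition gives $O_{\emptyset,I,\emptyset}\equiv O$, and the reflexive weak transition $\overline O\TransSim{\hat\tau}\overline O$ holds by definition.

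In the inductive step I split into two cases. If $I_3\neq\emptyset$, I pick any $i_0\in I_3$ and fire the $\tau$ hidden inside $\sndv_{p_{i_0}}=\tau.(\bcastzero v\oplus_{p_{i_0}}\nil)$ via rule \rulename{Tau}, then propagate it through the parallel context with \rulename{TauPar} (checking that the side-condition $N\not\equiv\dummyN\mid N'$ holds since $\dummyN$ does not appear) to obtain
\[
O_{I_1,I_2,I_3}\;\transSim{\tau}\;p_{i_0}\,\overline{O_{I_1\cup\{i_0\},I_2,I_3\setminus\{i_0\}}}\;+\;(1-p_{i_0})\,\overline{O_{I_1,I_2\cup\{i_0\},I_3\setminus\{i_0\}}}.
\]
Both networks in the support have a strictly smaller $|I_3|$, so by the inductive hypothesis each one weakly $\hat\tau$-reduces to $\overline O$; interleaving the two reduction sequences into a single sub-distribution transition (using the extension of $\transSim{\hat\tau}$ to sub-distributions, where each element of the support may independently progress or stay put) then yields $O_{I_1,I_2,I_3}\TransSim{\hat\tau}\overline O$. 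If instead $I_3=\emptyset$ but $I_1\neq\emptyset$, I pick $i_0\in I_1$ and show that $\node{\nu_{m_{i_0}}}{m_{i_0}}{\bcastzero v}$ executes a silent broadcast, obtaining $O_{I_1,I_2,\emptyset}\transSim{\tau}\overline{O_{I_1\setminus\{i_0\},I_2\cup\{i_0\},\emptyset}}$, and conclude by induction since $|I_1|$ strictly decreases.

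The main obstacle is justifying this silent broadcast step. Rule \rulename{Snd} gives the local transition $\node{\nu_{m_{i_0}}}{m_{i_0}}{\bcastzero v}\transSim{\sndto{m_{i_0}}{v}{\nu_{m_{i_0}}}}\overline{\node{\nu_{m_{i_0}}}{m_{i_0}}{\nil}}$. To lift it through the whole parallel composition via \rulename{Bcast} and \rulename{RcvPar}, I need two things. First, every other node must enable a $\rcva{m_{i_0}}{v}$-transition, which I establish by a case analysis on where the node lives: nodes of $N$ lying in $\nu_{m_{i_0}}$ are handled by hypothesis~(2) of Theorem~\ref{thm:propagation} combined with \rulename{RcvEnb} (they are not receiver-timeout processes, so the boolean $\rcvrs$ is false); nodes of $N$ outside $\nu_{m_{i_0}}$ are not neighbours of $m_{i_0}$, so \rulename{RcvEnb} applies on that branch of its premise; nodes in $J$ have the shape $\wsndv_{q_j}=\sigma.\sndv_{q_j}$, which again is not of the form $\rcvtime x C D$; and the remaining nodes of $I\setminus\{i_0\}$ have one of the shapes $\bcastzero v$, $\nil$, or $\sndv_{p_i}=\tau.(\ldots)$, again not receiver-timeouts. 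Second, after composing with every parallel component the residual target set becomes $\nu_{m_{i_0}}\setminus\nds{O_{I_1,I_2,\emptyset}}$, which is empty because hypothesis~(1) gives $\nu_{m_{i_0}}\subseteq\nds M$ and the two networks share the same set of node names; consequently rule \rulename{ShhSnd} fires and converts the broadcast into a genuine $\tau$-transition, closing the inductive argument.
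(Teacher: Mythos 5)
Your proof is correct and follows essentially the same route as the paper's: induction with a case split on whether any sender in $I_3$ still has to resolve its probabilistic choice, the $\tau$-step derived by \rulename{Tau}/\rulename{TauPar}, and the broadcast turned silent via \rulename{Snd}, \rulename{RcvEnb}/\rulename{Rcv\-Par}, \rulename{Bcast} and \rulename{ShhSnd} using hypotheses (1)--(2) and well-formedness. The only difference is bookkeeping: you induct on the lexicographic measure $(|I_3|,|I_1|)$ and defer the broadcasts of the $I_1$-nodes to the $I_3=\emptyset$ case, whereas the paper inducts on $|I_1\cup I_3|$ and, in the branch where the sender decides to transmit, performs the silent broadcast explicitly before invoking the inductive hypothesis -- both variants are sound.
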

\begin{proof}
We reason by induction over $\size{I_1 \cup I_3}$.
The base case $\size{I_1 \cup I_3} = 0$ is immediate, since $I_2 = I$ and $O_{I_1,I_2,I_3}$ is the network $O$.
Consider the inductive step $\size{I_1 \cup I_3} = n+1$. 
We distinguish two sub-cases, depending whether $I_3$ is empty or not.\\[0.1 ex]

\noindent
\underline{Case $I_3 \neq \emptyset$}.
In this case the network $O_{I_1,I_2,I_3}$ can perform a $\tau$-step originated by a sender $m_k$ with $k \in I_3$.
Formally, by rule \rulename{Tau}  we get
\[
\begin{array}{c}
\node {\nu_{m_k}} {m_k} {\sndv_{p_k}} \transSim{\tau} p_k \cdot  \overline{\node {\nu_{m_k}} {m_k} { \bcastzero v } } + (1-p_k)\cdot \overline{\node {\nu_{m_k}} {m_k} { \nil }}
\end{array}
\]
and, then, by an application of rule \rulename{TauPar} we get
\begin{equation*}
\begin{array}{rcl}
O_{I_1,I_2,I_3} &\transSim{\tau} &
\overline{N} \q \big| \q
\prod_{i \in I_1}  \overline{\node {\nu_{m_i}} {m_i} { \bcastzero v } }  \q \big| \q
\prod_{i \in I_2} \overline{\node {\nu_{m_i}} {m_i} { \nil } } \q \big| \;q
\prod_{i \in I_3 \setminus{\{k\}} } \overline{\node {\nu_{m_i}} {m_i} {\sndv_{p_i}} }    \q \big|  \\[4pt]
&& (p_k \cdot \overline{\node {\nu_{m_k}} {m_k} { \bcastzero v }  } + (1-p_k)\cdot \overline{\node {\nu_{m_k}} {m_k} { \nil }} ) \q \big| \q 
\prod_{j\in J} \overline{\node {\nu_{n_j}} {n_j} {\wsndv_{q_j}}}
\end{array}
\end{equation*}
namely 
\begin{equation}
\begin{array}{c}
\label{prop_supp_eq1}
O_{I_1,I_2,I_3} \q \transSim{\tau} \q p_k  \cdot \overline{O_{I_1\cup\{k\},I_2,I_3\setminus\{k\}}} + (1-p_k)  \cdot \overline{O_{I_1,I_2\cup\{k\},I_3\setminus\{k\}}} 
\end{array}
\end{equation}

Consider first the network $O_{I_1\cup\{k\},I_2,I_3\setminus\{k\}}$ in Equation~\ref{prop_supp_eq1}.
A $\tau$-transition by network $O_{I_1\cup\{k\},I_2,I_3\setminus\{k\}}$ models the broadcasting of $v$ by $m_k$ that is not received by any other node. In detail,
by rule \rulename{Snd} we get
\[
\begin{array}{c}
\node {\nu_{m_k}} {m_k} { \bcastzero v }\transSim{\sndto {m_k} v {\nu_{m_k}}}
\overline{\node {\nu_{m_k}} {m_k} {\nil}} \, .
\end{array}
\]
Then, by an application of rule \rulename{RcvEnb} we infer
\[
\begin{array}{c}
N   \transSim{\rcva {m_k} v} \overline{N} 
\end{array}
\]
which can be applied since by the hypothesis we know that no node in $N$ is able to receive, by the same  \rulename{RcvEnb} we get
\[
\node {\nu_{m_i}} {m_i} { \bcastzero v }   \transSim{\rcva {m_k} v} \overline{\node {\nu_{m_i}} {m_i} { \bcastzero v }} 
\]
for all $i \in I_1$,  
\[
\node {\nu_{m_i}} {m_i} {\nil}  \transSim{\rcva {m_k} v} \overline{\node {\nu_{m_i}} {m_i} {\nil}}
\]
for all $i \in I_2$, and
\[
\node {\nu_{m_i}} {m_i} {\sndv_{p_i}}   \transSim{\rcva {m_k} v} \overline{\node {\nu_{m_i}} {m_i} {\sndv_{p_i}}}
\]
for all $i \in I_3 \setminus \{k\}$, then again by rule \rulename{RcvEnb} we infer
\[
\node {\nu_{n_j}} {n_j} {\wsndv_{q_j}} \transSim{\rcva {m_k} v} \overline{ \node {\nu_{n_j}} {n_j} {\wsndv_{q_j}}}
\]
for all $j \in J$, thus allowing us to apply rules \rulename{RcvPar} and \rulename{Bcast} and obtain
\[
O_{I_1\cup\{k\},I_2,I_3\setminus\{k\}} \transSim{\sndto {m_k} v {\nu_{m_k} \setminus \nds M}} \overline{O_{I_1,I_2\cup\{k\},I_3 \setminus\{k\}}}
\]
from which we get 
\begin{equation}
\label{prop_supp_eq14giu}
\begin{array}{c}O_{I_1\cup\{k\},I_2,I_3\setminus\{k\}} \transSim{\tau} \overline{O_{I_1,I_2\cup\{k\},I_3\setminus\{k\}}}
\end{array}
\end{equation}
by rule \rulename{ShhSnd}, which can be applied since by the hypothesis we know that $\nu_{m_k} \subseteq \nds M$.
Now, since $\size{I_1 \cup I_3 \setminus\{k\}}=n$ we can apply the inductive hypothesis thus obtaining $O_{I_1,I_2\cup\{k\},I_3\setminus\{k\}} \TransSim{\tau}  \overline{O}$.
Finally, by Equation~\ref{prop_supp_eq14giu}
and $O_{I_1,I_2\cup\{k\},I_3\setminus\{k\}} \TransSim{\tau}  \overline{O}$ we get
\begin{equation}
\begin{array}{c}
\label{prop_supp_eq2}
O_{I_1\cup\{k\},I_2,I_3\setminus\{k\}} \TransSim{\tau} \overline{O} 
\end{array}
\end{equation}

Consider now the network 
$O_{I_1,I_2\cup\{k\},I_3\setminus\{k\}}$ in Equation~\ref{prop_supp_eq1}.
Since $\size{I_1 \cup I_3\setminus\{k\}}=n$ we can apply the inductive hypothesis thus giving $O_{I_1,I_2\cup\{k\},I_3\setminus\{k\}} \TransSim{\tau} \overline{O}$, which, together with Equation~\ref{prop_supp_eq1} and Equation~\ref{prop_supp_eq2}, gives the thesis. \\[0.1 ex]

\noindent
\underline{Case $I_3 = \emptyset$}. Since $\size{I_1 \cup I_3} = n+1$, we are sure that $I_1 \neq \emptyset$.
For some $k \in I_1$, we can reason as above to infer that from a broadcasting transition by some $m_k$ with $k \in I_1$ we get
$O_{I_1,I_2,I_3} \transSim{\tau} \overline{O_{I_1\setminus\{k\},I_2\cup\{k\},I_3}}$, then by the inductive hypothesis we derive 
$O_{I_1\setminus\{k\},I_2\cup\{k\},I_3} \TransSim{\tau} \overline{O}$, and, finally, from these two transitions we conclude $O_{I_1,I_2,I_3} \TransSim{\tau} \overline{O}$.
\end{proof}

%%%
%%%
%%%

We are now ready to prove  Theorem~\ref{thm:propagation}.

\noindent
\emph{Proof of Theorem~\ref{thm:propagation}.}
As in the proof of  Lemma~\ref{lem:propagation_support}, we use $O$ to denote the network 
\[
\begin{array}{c}
O  \deff  N  \q \big| \q
\prod_{i  \in I} \node {\nu_{m_i}} {m_i} {\nil} \q \big| \q
 \prod_{j \in J} \node {\nu_{n_j}} {n_j}
{\wsndv_{q_j}} \enspace .
\end{array}
\]

We prove that $M \TransSimone[\hat{\tau}](1-\prod_{i \in I}(1-p_i))\overline{O} + \prod_{i \in I}(1-p_i) \Delta$, for some distribution $\Delta$, then the thesis follows by  Proposition~\ref{M_go_to_N}.

By rule \rulename{Tau} we derive the following transition step for any sender $m_i$ with $i \in I$:
\[
\node {\nu_{m_i}} {m_i} {\sndv_{p_i}} \transSim{\tau} p_i  \cdot \overline{\node {\nu_{m_i}} {m_i} { \bcastzero v } } + (1-p_i) \cdot \overline{\node {\nu_{m_i}} {m_i} { \nil }} .
\]
The network $M$ can start by performing $k \le \size{I}$ transitions labelled $\tau$, where each of these transitions comes from rule \rulename{Tau} applied as above (namely is the initial $\tau$-step by some sender) and \rulename{TauPar}.
Therefore, we can partition $I$ as $I=I' \cup I''$ such that 
\[
\begin{array}{c}
M \transSim{\tau} \ldots \transSim{\tau} \Theta
\end{array} 
\]
with 
\begin{equation}\label{thm:propagation_step1}
\Theta  \! = \!
\overline{N}  
| 
\prod_{i \in I'}  (p_i  \overline{\node {\nu_{m_i}} {m_i} { \bcastzero v }  } + (1{-}p_i) \overline{\node {\nu_{m_i}} {m_i} { \nil }} ) 
| 
\prod_{i \in I''} \overline{\node {\nu_{m_i}} {m_i} {\sndv_{p_i}}}  
| 
\prod_{j\in J} \overline{\node {\nu_{n_j}} {n_j} {\fwd_{q_j}}} 
\end{equation}
and now we distinguish two cases.

The first case is that all senders $m_i$ with $i \in I$ reach the state $\node {\nu_{m_i}} {m_i} { \nil }$ through their initial $\tau$ transition, namely in the Equation~\ref{thm:propagation_step1} we have that $I'= I$, $I'' = \emptyset$ and the network in the support of $\Theta$ that is reached by $M$ is $N  |   \prod_{i \in I} \node {\nu_{m_i}} {m_i} { \nil }   |  \prod_{j\in J} \node {\nu_{n_j}} {n_j} {\fwd_{q_j}}$.
Formally, 
\begin{equation}
\label{eq:cannotSimulate}
\begin{array}{c}
M \TransSim{\hat{\tau}}  q \cdot (\overline{N} 
\q  \big|  \q
\prod_{i \in I}  \overline{\node {\nu_{m_i}} {m_i} { \nil } } 
 \q \big| \q
\prod_{j\in J} \overline{\node {\nu_{n_j}} {n_j} {\fwd_{q_j}}})
\end{array}
\end{equation}
 for  $q=\prod_{i \in I}(1-p_i)$.  
The network $N  |  \prod_{i \in I} \node {\nu_{m_i}} {m_i} { \nil }  |  \prod_{j\in J} \node {\nu_{n_j}} {n_j} {\fwd_{q_j}}$ is then unable to simulate $O$.

The second case is that at least one of the senders $m_i$ decides to transmit and reaches the state $\node {\nu_{m_i}} {m_i} { \bcastzero v }$ through the $\tau$ transition, namely there is a partition of $I' = I_1' \cup I_2'$ such that the state in the support of distribution $\Theta$ that is reached by $M$ has the form
$N \big| 
\prod_{i \in I_1'}  \node {\nu_{m_i}} {m_i} { \bcastzero v }   \big| 
\prod_{i \in I_2'}  \node {\nu_{m_i}} {m_i} { \nil }   \big|  
\prod_{i \in I''} \node {\nu_{m_i}} {m_i} {\sndv_{p_i}}  \big|   
\prod_{j\in J} \node {\nu_{n_j}} {n_j} {\fwd_{q_j}}$
with $I'_1 \neq \emptyset$.
Formally, 
\begin{equation}\label{thm:propagation_step1_detailed}
M \TransSim{\hat{\tau}}  q'  \cdot  \Big(\overline{N} 
  | \!
\prod_{i \in I_1'}  \overline{\node {\nu_{m_i}} {m_i} { \bcastzero v } }  
 | \!
\prod_{i \in I_2'}  \overline{\node {\nu_{m_i}} {m_i} { \nil } }  
 | \!
 \prod_{i \in I''} \overline{\node {\nu_{m_i}} {m_i} {\sndv_{p_i}}} 
 |  \!
\prod_{j\in J} \overline{\node {\nu_{n_j}} {n_j} {\fwd_{q_j}}}\Big)
\end{equation}
with $q' = \prod_{i \in I'_1}p_i \cdot \prod_{i \in I'_2}(1-p_i)$. 
In this case, each of the nodes $m_i$ with $i \in I'_1$ can then broadcasts $v$. Indeed, by applying the rule \rulename{Snd} we get
%%\[
$\node {\nu_{m_h}} {m_h} { \bcastzero v }\transSim{\sndto {m_h} v {\nu_{m_h}}}
\overline{\node {\nu_{m_h}} {m_h} {\nil}} $,  
%%\]
for some $h \in I'_1$.
Since by the hypothesis we know that all nodes in $\nu_{m_h} \cap \nds{N}$ cannot receive in the current round, and, analogously, all nodes $n_i$ with $i \in I \setminus \{h\} = (I'_1 \cup I'_2\cup I'')\setminus \{h\}$ cannot receive, by applying rule \rulename{RcvEnb} we infer:
%%\begin{inparaenum}[(a)]
%%\item 
(a) $N \transSim{\rcva {m_h} v} \overline{N}$,
%%\item 
(b)
$\node {\nu_{m_i}} {m_i} {\sndv_{p_i}}   \transSim{\rcva {m_h} v}   \overline{\node {\nu_{m_i}} {m_i} {\sndv_{p_i}}}$,
%%\item 
(c) $\node {\nu_{m_i}} {m_i} {\nil}   \transSim{\rcva {m_h} v}  \overline{ \node {\nu_{m_i}} {m_i} {\nil}}$. 
%%\end{inparaenum}.
Moreover, for all $j\in J$, by rule \rulename{Rcv} we infer
\[
\node {\nu_{n_j}} {n_j} {\fwd_{q_j}} \transSim{\rcva {m_h} v} \overline{\node {\nu_{n_j}} {n_j} {\wsndv_{q_j}}}
\]
thus implying that we can apply rules \rulename{RecPar} and \rulename{Bcast} and obtain 
\begin{equation*}
{\scriptsize
\begin{array}{c}
\displaystyle
N \q 
\big|  \q
\prod_{i \in I_1'}  \node {\nu_{m_i}} {m_i} { \bcastzero v }  \q 
\big|  \q
\prod_{i \in I_2'} \node {\nu_{m_i}} {m_i} { \nil }  \q 
\big| \q 
\prod_{i \in I''} \node {\nu_{m_i}} {m_i} {\sndv_{p_i}} \q 
\big| \q 
\prod_{j\in J} \node {\nu_{n_j}} {n_j} {\fwd_{q_j}}  \\[8pt]
\transSimone[\sndto {m_h} v {\nu_{m_h} \setminus \nds M}] \\[8pt]
\displaystyle \overline{N} 
 \q \big| \q
 \overline{\node{\nu_{m_h} } {m_h} { \nil } }  
 \q \big|
\prod_{i \in I_1'\setminus\{h\}}  \overline{\node {\nu_{m_i}} {m_i} { \bcastzero v } } 
\q \big|
\prod_{i \in I_2'} \overline{\node {\nu_{m_i}} {m_i} { \nil } }  
\q \big|
\prod_{i \in I''} \overline{\node {\nu_{m_i}} {m_i} {\sndv_{p_i}} }  
\q \big|
\prod_{j\in J} \overline{\node {\nu_{n_j}} {n_j} {\wsndv_{q_j}}} \, . 
\end{array}
}
\end{equation*}
Since by the hypothesis we have $\nu_{m_h} \subseteq \nds{M}$, we can apply rule \rulename{ShhSnd} to infer 
\begin{equation}
\label{thm:propagation_step2}
{\scriptsize 
\begin{array}{c}
\displaystyle
 N \q \big| \q 
\prod_{i \in I_1'}  \node {\nu_{m_i}} {m_i} { \bcastzero v }   
\q \big| \q  
\prod_{i \in I_2'} \node {\nu_{m_i}} {m_i} { \nil }
\q \big| \q
\prod_{i \in I''} \node {\nu_{m_i}} {m_i} {\sndv_{p_i}} 
\q\big| \q
\prod_{j\in J} \node {\nu_{n_j}} {n_j} {\fwd_{q_j}}  
\\[1pt]
\transSim{\tau}  \qquad \qquad 
\\[4pt]
\displaystyle
\overline{N} 
\; \big| \;
 \overline{\node{\nu_{m_h} } {m_h} { \nil } } 
\;\big| 
\prod_{i \in I_1'\setminus\{h\}}  \overline{\node {\nu_{m_i}} {m_i} { \bcastzero v } }  \;
\;\big| 
\prod_{i \in I_2'} \overline{\node {\nu_{m_i}} {m_i} { \nil } }  \;
\;\big| 
\prod_{i \in I''} \overline{\node {\nu_{m_i}} {m_i} {\sndv_{p_i}} } \;
 \;\big| 
\prod_{j\in J} \overline{\node {\nu_{n_j}} {n_j} {\wsndv_{q_j}}}
\end{array}
}
\end{equation}
Let $O'$ denote the only network in the support of the distribution in Equation~\ref{thm:propagation_step2}, i.e.
$N \big| 
\prod_{i \in I_1'\setminus\{h\}}  \node {\nu_{m_i}} {m_i} { \bcastzero v }  
\big| 
\prod_{i \in I_2' \cup \{h\}} \node {\nu_{m_i}} {m_i} { \nil }
\big| 
 \prod_{i \in I''} \node {\nu_{m_i}} {m_i} {\sndv_{p_i}} 
\big| 
\prod_{j\in J} \node {\nu_{n_j}} {n_j} {\wsndv_{q_j}}$.
We observe that by adopting the notation in  Lemma~\ref{lem:propagation_support}, $O'$ is in the form $O_{I_1,I_2,I_3}$ with $I_1=I'_1\setminus\{h\}$, $I_2=I'_2\cup \{h\}$, $I_3=I''$.
By  Lemma~\ref{lem:propagation_support} we have $O' \TransSim{\tau} \overline{O}$. 
This transition together with Equation~\ref{thm:propagation_step1_detailed} and Equation~\ref{thm:propagation_step2} gives $M \TransSim{{\tau}} q' \overline{O}$, where we remind that $q' = \prod_{i \in I'_1}p_i \cdot \prod_{i \in I'_2}(1-p_i)$. 
This transition is derived for any partition $I'_1 \cup I'_2 \cup I''$ of $I$ with $I'_1 \neq \emptyset$.
Hence $M \TransSim{{\tau}} (1-q) \overline{O}$, with $q$ the probability value in Equation~\ref{eq:cannotSimulate}, which gives the thesis.
\qed

%%%%
%%%%
%%%%

\noindent
\emph{Proof of Theorem~\ref{thm:propagation2}}.
Let us denote with $M_0$, $M_1$ and $M_2$ the following networks:
\begin{itemize}
\item 
$
M_0 \: = \:  N \q \big| \q
\node {\nu_{m}} {m} {\bcastzero v} \q\big| \q
\prod_{j\in J} \node {\nu_{n_j}} {n_j} {\rcvtime {x_j}{P_j}{Q_j}}$
\item 
$M_1 \: = \: N\q \big| \q \node {\nu_{m}} {m} {\nil} \q \big| \q 
\prod_{j\in J} \node {\nu_{n_j}} {n_j} {{\subst v {x_j}}P_j}$
\item 
$M_2 \: = \:  N \q \big| \q  \node {\nu_{m}} {m} {\nil} \q \big| \q
\prod_{j\in J} \node {\nu_{n_j}} {n_j} {\rcvtime {x_j}{P_j}{Q_j}}$. 
\end{itemize}
We recall that $\sndv_p = \tau.(\bcastzero v \oplus_{p} \nil)$. 
Therefore we infer the thesis from  Proposition~\ref{prop:propagation-general} if we show that $M  \simtol{s_1} M_0$ and $M  \simtol{s_2} M_2$.
The relation $M \simtol{s_2} M_2$ holds by the hypothesis, hence we have to prove $M  \simtol{s_1} M_0$.
To this purpose, it suffices to prove $M_1 \simtol{0} M_0$. 
In fact, $M_1  \simtol{0} M_0$ and the hypothesis $M  \simtol{s_1} M_1$ give, by transitivity (Proposition~\ref{prop:transitivity}),  $M  \simtol{s_1} M_0$.
Finally, to prove $M_1  \simtol{0} M_0$, by  Proposition~\ref{M_go_to_N} it is enough to prove that 
$
M_0 \transSim{\tau} \overline{M_1}
$. 

We show now that this transition  can be derived.
By rule \rulename{Snd} we get
$
\node {\nu_{m}} {m} { \bcastzero v }\transSim{\sndto {m} v {\nu_{m}}}
\overline{ \node {\nu_{m}} {m} {\nil}}$. 
By an application of rule \rulename{Rcv} we get
\[
\node {\nu_{n_j}} {n_j} {\rcvtime {x_j}{P_j}{Q_j}}
  \q \transSim{\rcva {m} v} \q 
 \overline{ \node {\nu_{n_j}} {n_j} {{\subst v {x_j}}P_j}}
\]
for all $j\in J$, where \rulename{Rcv} can be applied since we have $m \in \nu_{n_j}$ for all $j \in J$, which follows by the hypothesis  $\{n_j \mid j \in J\} \subseteq \nu_{m}$ and well-formedness. 
By rule \rulename{RcvEnb} we get
$
N \transSim{\rcva {m} v} \overline{N}
$, 
where  \rulename{RcvEnb} can be applied since by the hypothesis we know that all nodes in $\nu_{m} \cap \nds{N}$ are not able to receive in the current instant of time.
Summarising, we can apply rule \rulename{Bcast} to obtain
$
M_0 \transSim{\sndto {m } v {\nu_{m } \setminus \nds M}} \overline{M_1}$. 
Then, since by the hypothesis we have $\nu_{m} \subseteq    \nds{M}$, to conclude the proof, we can apply rule \rulename{ShhSnd} to derive the  
transition $M_0 \transSim{\tau} \overline{M_1} $.
\qed

\noindent
\emph{Proof of Theorem~\ref{thm:propagation2bis}.}
Let us denote with $M_0$ and  $M_i$, for $i\in I$, the following networks:
\begin{itemize}
\item 
$M_0  \: = \:  \prod_{j\in J} \node {\nu_{n_j}} {n_j} {\nil }  \q \big| \q \node {\nu_d} d {\tau.\bigoplus_{i \in I} p_i {:} P_i}$
\item 
$M_i \: = \: \prod_{j\in J} \node {\nu_{n_j}} {n_j} {\nil }  \q \big| \q \node {\nu_d} d { P_i }$. 
\end{itemize}
The only admissible transition for the network $M_0$ is $M_0 \transSim{\tau} \Delta =  \sum_{i \in I} p_i   \cdot \overline{M_i}$, derived by rules \rulename{Tau} and \rulename{TauPar}.
By rules \rulename{Tau} and \rulename{TauPar}, we can derive the transition $M \mid \node {\nu_{m}} {m} {\tau.\bigoplus_{i \in I} p_i {:} Q_i} \transSim{\tau} \Theta= \sum_{i \in I} p_i  \cdot \overline{M  \mid \node {\nu_{m}} {m} {Q_i}}$.
Since $\size{\Theta}=1$, it is enough to prove that 
$\Kantorovich(\metric)(\Delta,\Theta ) \le  \sum_{i \in I} p_i s_i$.
Define $\omega \colon \cnamed{} \times \cnamed \to [0,1]$ by 
$\omega(M_i, M \; \big| \;\node {\nu_{m}} {m} {Q_i } ) = p_i$, for any $i \in I$, 
 and $\omega(M',N') = 0$ for all other pairs $(M',N')$.
To conclude $\Kantorovich(\metric)(\Delta ,\Theta) \le \sum_{i \in I} p_i s_i$, it is enough to show that: 
\begin{enumerate}
\item \label{thm:propagation-general_item1c} the function $\omega$ is a matching in $\Omega(\Delta,\Theta)$ and
\item \label{thm:propagation-general_item2c} $\sum_{M',N' \in \cnamed{}} \omega(M',N') \cdot \metric(M',N') \le
\sum_{i \in I} p_i s_i$.
\end{enumerate}
The case (\ref{thm:propagation-general_item1c}) is immediate.
Consider now (\ref{thm:propagation-general_item2c}).
We have that
$ \sum_{M',N' \in \cnamed{}} \omega(M',N') \cdot \metric(M',N') 
= \sum_{i \in I} p_i   \metric(M_i, M \; \big| \;\node {\nu_{m}} {m} {Q_{i} } )    
\le \sum_{i \in I} p_i s_i$ with the inequality by the hypothesis
$M_i \simtol{s_i} M \mid \node {\nu_{n}} {n} {Q_{i}}$.
\qed

%%%
%%%
%%%
%%%
%%%

\subsection{Proofs of results in Section~\ref{sec:gossip-collisions}}
\
\label{section:proofCollisioni}

\noindent
\emph{Proof of Theorem~\ref{thm:propagation3}.}
Let $O \, = \,   N  \q \big| \q
\prod_{i  \in I} \node {\nu_{m_i}} {m_i} {\nil} \q \big| \q
 \prod_{j \in J} \node {\nu_{n_j}} {n_j}
{\wsndvc_{q_j}}$. 
We prove that 
\begin{equation}\label{thm:propagation_coll_proof_obligation}
\begin{array}{c}
M \TransSim{ \hat{\tau}} ( \sum_{i\in I}p_i \prod_{j\in I{\setminus}\{i\}}(1-p_j)) \overline{O}+ ( 1- \sum_{i\in I}p_i \prod_{j\in I{\setminus}\{i\}}(1-p_j) ) \Delta 
\end{array}
\end{equation}
for some distribution $\Delta$,
then the proof follows by  Proposition~\ref{M_go_to_N}.
By rule \rulename{Tau} we get the following transition step for any sender $m_i$ with $i \in I$:
\[
\node {\nu_{m_i}} {m_i} {\sndv_{p_i}} \transSim{\tau} p_i  \cdot \overline{\node {\nu_{m_i}} {m_i} { \bcastzero v } } + (1-p_i) \cdot \overline{\node {\nu_{m_i}} {m_i} { \nil }} . \]

The network $M$ can start by performing $|\!I\!|$ transitions labelled $\tau$, where each of these transitions comes from rule \rulename{Tau} applied as above (namely is the initial $\tau$-step by some sender) and \rulename{TauPar}.
Therefore, we have
\[
M \transSim{\tau} \ldots \transSim{\tau} \Theta
\]
with
$
\Theta = \overline{N} \; \big| \;
\prod_{i \in I}  \big(p_i  \cdot \overline{\node {\nu_{m_i}} {m_i} { \bcastzero v }  } + (1-p_i) \cdot  \overline{\node {\nu_{m_i}} {m_i} { \nil }} \big) \; \big| \;
\prod_{j\in J} \overline{\node {\nu_{n_j}} {n_j} {\fwd_{q_j}}}
$.  
In detail, all elements in the support of $\Theta$ have the form
\[
\begin{array}{c}
M_{I',I''} = N \q \big| \q 
\prod_{i \in I'}  \node {\nu_{m_i}} {m_i} { \bcastzero v }  \q \big| \q 
\prod_{i \in I''} \node {\nu_{m_i}} {m_i} { \nil }  \q \big| \q 
\prod_{j\in J} \node {\nu_{n_j}} {n_j} {\fwd_{q_j}}
\end{array}
\]
for a suitable partition $I' \cup I'' = I$, and we have
\[
M \TransSim{\hat{\tau}}  \big(\prod_{i \in I'} p_i \prod_{j \in I''}(1-p_j)\big) \cdot  M_{I',I''} .
\]
Consider any $M_{I',I''}$.
If $I' = \emptyset$ then $M_{I',I''}$ cannot simulate $N$ since no forwarder process in node $n_j$ can receive any message.
If $I' \neq \emptyset$, each of the $m_i$ with $i \in I'$ can then broadcasts $v$. Indeed, by applying the rule \rulename{Snd} we get
\[
\node {\nu_{m_h}} {m_h} { \bcastzero v } \q \transSim{\sndto {m_h} v {\nu_{m_h}}}
\q \overline{\node {\nu_{m_h}} {m_h} {\nil}} 
\]
for some $h \in I'$.
Since by the hypothesis we know that all nodes in $\nu_{m_h} \cap \nds{M'}$ cannot receive in the current round, and, analogously, all nodes $n_i$ with $i \in I \setminus \{h\} = (I' \cup I'')\setminus \{h\}$ cannot receive, by applying rule \rulename{RcvEnb} we infer:
\begin{inparaenum}[(i)]
\item 
$N  \transSim{\rcva {m_h} v} \overline{N}$;
\item
$
\node {\nu_{m_i}} {m_i} {\sndv_{p_i}}   \transSim{\rcva {m_h} v}   \overline{\node {\nu_{m_i}} {m_i} {\sndv_{p_i}}}$;
\item 
$\node {\nu_{m_i}} {m_i} {\nil}   \transSim{\rcva {m_h} v}  \overline{ \node {\nu_{m_i}} {m_i} {\nil}}$.
\end{inparaenum}
Moreover, for all $j\in J$, by rule \rulename{Rcv} we infer: 
\[
\node {\nu_{n_j}} {n_j} {\fwd_{q_j}} \q \transSim{\rcva {m_h} v} \q \overline{\node {\nu_{n_j}} {n_j} {\wsndvc_{q_j}}}
\]
thus implying that we can apply rules \rulename{RcvPar} and \rulename{Bcast} and obtain 
\begin{equation*}
\begin{array}{c}
M_{I',I''} \transSimone[\alpha]
\overline{N} \q \big| \q
\prod_{i \in I' \setminus \{h\}}  \overline{\node {\nu_{m_i}} {m_i} { \bcastzero v } } \q \big| \q
\prod_{i \in I'' \cup \{h\}}  \overline{\node {\nu_{m_i}} {m_i} { \nil } } \q \big| \q
\prod_{j\in J} \overline{\node {\nu_{n_j}} {n_j} {\wsndvc_{q_j}}}
\end{array}
\end{equation*}
with $\alpha = \sndto {m_h} v {\nu_{m_h} \setminus \nds M}$.
Since by the hypothesis we have $\nu_{m_h} \subseteq \nds{M}$, we can apply rule \rulename{ShhSnd} to infer
\begin{equation*}
\begin{array}{c}
M_{I',I''} \transSim{\tau} 
\overline{N} \q \big| \q
\prod_{i \in I' \setminus \{h\}}  \overline{\node {\nu_{m_i}} {m_i} { \bcastzero v } } \q \big| \q
\prod_{i \in I'' \cup \{h\}}  \overline{\node {\nu_{m_i}} {m_i} { \nil } }\q  \big| \q
\prod_{j\in J} \overline{\node {\nu_{n_j}} {n_j} {\wsndvc_{q_j}}}
 \enspace .
\end{array}
\end{equation*}
Now we note that if $I' \setminus\{h\} = \emptyset$ then the target of this transition is the network
$O$ and we can derive 
\begin{equation}
\begin{array}{c}
\label{thm:propagation_collisioni_proof_obligation_singola}
M \q \TransSim{ {\tau}}  \q p_i \prod_{j \in I \setminus \{i\}}(1-p_j)  \overline{O} 
\end{array}
\end{equation}
Otherwise, if $I' \setminus\{h\} \neq \emptyset$ then the network $M_{I',I''}$ cannot simulate $O$ since the nodes $\wsndvc_{q_j}$ fail in the present round since they receive from some sender $m_i$ with $i \in I' \setminus \{h\}$.
Overall, we have Equation~\ref{thm:propagation_collisioni_proof_obligation_singola} for all $i \in I$, from which the proof obligation in Equation~\ref{thm:propagation_coll_proof_obligation} follows.
\qed

\end{document}